\begin{document}

%%%%%%%%%%%%%%%%%%%%%%%%%%%%%%%%
\renewcommand{\theequation}{\thesection.\arabic{equation}}
%\numberwithin{equation}{section}

\newtheorem{thm}{Theorem}[section]
\newtheorem{prp}[thm]{Proposition}
\newtheorem{lmm}[thm]{Lemma}
\newtheorem{cor}[thm]{Corollary}
\newtheorem{dfn}[thm]{Definition}
\newtheorem{rmk}[thm]{Remark}
\newtheorem{con}[thm]{Conjecture}
\newtheorem{Assum}[thm]{Assumption}
\newtheorem{mth}{Main Theorem}

%% \UTF{0083}x\UTF{0083}N\UTF{0083}g\UTF{0083}\UTF{008B}\UTF{0081}A\UTF{0091}\UTF{00E5}\UTF{0095}\UTF{00AA}\UTF{0090}\UTF{0094}\UTF{0083}R\UTF{0083}}\UTF{0083}\UTF{0093}\UTF{0083}h
\newcommand{\vc}[1]{\mbox{\boldmath $#1$}}
\newcommand{\fracd}[2]{\frac{\displaystyle #1}
{\displaystyle #2}}

\def\ni{\noindent}
\def\nn{\nonumber}
\def\bH{\begin{Huge}}
\def\eH{\end{Huge}}
\def\bL{\begin{Large}}
\def\eL{\end{Large}}
\def\bl{\begin{large}}
\def\el{\end{large}}
\def\beq{\begin{eqnarray}}
\def\eeq{\end{eqnarray}}

\def\p{\partial}
\def\k{\kappa}

\def\C{{\cal C}}
\def\T{{\cal T}}
\def\H{{\cal H}}
\def\M{{\cal M}}
\def\L{{\cal L}}
\def\Q{{\cal Q}}
\def\S{{\cal S}}
\def\P{{\bf P}}
\def\im{{\rm Im}}
\def\re{{\rm Re}}

\def\i{{\rm i}}

\def\He{H\'enon }
\def\St{Stokes curves }

\def\bn{\bigskip\noindent}
\def\pbn{\par\bigskip\noindent}
\def\n{\noindent}
\def\b{\bigskip}
\def\non{\nonumber}

\def\ve{\boldsymbol}

\newcommand{\bra}[1]{\langle{}#1{}|}
\newcommand{\eq}[1]{(\ref{#1})}
%%%%%%%%%%%%%%%%%%%%%%%%%%%%%%%%

\title[Uniform hyperbolicity of a class of scattering maps]{Uniform hyperbolicity of a class of scattering maps}

\author{Hajime Yoshino$^{1}$, Normann Mertig$^{1,2}$ ~and Akira Shudo$^{1}$\footnote{Author to whom any correspondence should be addressed.}}
\address{$^{1}$ Department of Physics, Tokyo Metropolitan University, 
Minami-Osawa, Hachioji, Tokyo 192-0397, Japan}

\address{$^{2}$ Research $\&$ Development Group, Center for Exploratory Research, Hokkaido University Laboratory, Hitachi Ltd., Hokkaido, Japan}

\eads{yajime22@gmail.com, normann.mertig@hitachi-eu.com and shudo@tmu.ac.jp}

\begin{abstract}
In recent years fractal Weyl laws and related quantum eigenfunction hypothesis have been studied in a plethora of numerical model systems, called quantum maps. In some models studied there one can easily prove uniform hyperbolicity. Yet, a numerically sound method for computing quantum resonance states, did not exist.
To address this challenge, we recently introduced a new class of quantum maps \cite{NS18}. For these quantum maps, we showed that, quantum resonance states can numerically be computed using theoretically grounded methods such as complex scaling or weak absorbing potentials \cite{NS18}. However, proving uniform hyperbolicty for this class of quantum maps was not straight forward.
Going beyond that work this article generalizes the class of scattering maps and provides mathematical proofs for their uniform hyperbolicity. In particular, we show that the suggested class of two-dimensional symplectic scattering maps satisfies the topological horseshoe condition and uniform hyperbolicity. In order to prove these properties, we follow the procedure developed in the paper by Devaney and Nitecki \cite{DN79}. Specifically, uniform hyperbolicity is shown by identifying a proper region in which the non-wandering set satisfies a sufficient condition to have the so-called sector bundle or cone field. 
Since no quantum map is known where both a proof of uniform hyperbolicity and a methodologically sound method for numerically computing quantum resonance states exist simultaneously, the present result should be valuable to further test fractal Weyl laws and related topics such as chaotic eigenfunction hypothesis.

\end{abstract}

\maketitle

%%%%%%%%%%%%%%%%%%%%%%%%%%%%%%%%
%%%%%%%%%%%%%%%%%%%%%%%%%%%%%%%%
%%%%%%%%%%%%%%%%%%%%%%%%%%%%%%%%
\section{Introduction}
\label{sec:introduction}

Two-dimensional symplectic maps are broadly used to study the nature of chaos in Hamiltonian systems. 
Not only are they good toy models for the investigation of continuous Hamiltonian dynamics but also worth exploring in their own rights. 
As dynamical systems, the systems with uniform hyperbolicity form an open set
because of the structural stability against small perturbations, 
and are well understood as compared to systems not satisfying uniform hyperbolicity \cite{Katok}.  
Several two-dimensional symplectic maps which are uniformly hyperbolic are nowadays at hand 
and they serve as ideal systems to study various aspects of chaos. 
The simplest example would be the baker's transformation and a class of automorphisms defined 
on the 2-torus, the so-called Arnold cat map. 
Even if the cat map is perturbed, the system is shown to keep uniform hyperbolicity as 
expected \cite{Ozorio}.  

In addition to the maps defined on a compact phase space, there exist also uniformly hyperbolic maps 
on noncompact phase spaces such as the two dimensional plane.
The most extensively studied system in such a category is the H\'enon map, which is classified 
as a unique polynomial 2-dimensional diffeomorphism that generates nontrivial dynamics \cite{FM}. 
A sufficient condition for the uniform hyperbolicity was first found in \cite{DN79}, and 
it had been believed that the system stays uniformly hyperbolic until the first tangency of stable and unstable manifolds. 
A technique based on complex dynamics has been applied to solve this problem \cite{BS04}, 
and a computational method provides an alternative proof that this is indeed the case \cite{Arai08}. 
Although the map is defined on the plane, the nonwandering set of the system is confined 
in a finite region, and it could physically be viewed as a model of the scattering process, in 
which chaos is generated in a scattering region. 

Two-dimensional symplectic maps have also been employed to examine 
the quantum manifestation of classical chaos \cite{Gutz,Haake,Stockmann}.  
In this context, the map with uniform hyperbolicity again plays a  paradigmatic role since 
a semiclassical trace formula exists for such systems, which allows to relate 
quantum eigenvalues with periodic orbits in 
the corresponding classical map. 
Historically, the study based on the trace formula has first focused 
on the system with a compact phase space, thereby the system is bounded 
with a discrete spectrum.  

In contrast, the semiclassical study of scattering systems has gotten a late start although 
a trace formula exists in such situations as well \cite{Gaspard,Wirzba}. 
It is interesting to note that this direction of research has been further driven 
by mathematicians. 
In particular, the {\it fractal Weyl law} has been conjectured under 
rigorous mathematical settings \cite{Sjostrand,Zworski99,Zworski04,Nonnen07a,
Zworski07,Nonnen07b,Datchev13,Nonnen14}. 
In accordance with these works, numerical 
\cite{Lin02,Zworski02,Lu03,Schomerus,Nonnen05,Ketz13,Borthwick14}, and even experimental works 
\cite{Sridhar99,Zworski12,Zworski13}, 
which have attempted to verify the conjecture, followed. 
The reader can access the subject in recent review articles 
\cite{Nonnen11,Zworski17,Novaes13} as well. 

Here the conventional Weyl law concerns the crudest quantum-to-classical correspondence claiming that the mean level density of discrete spectra is expressed asymptotically in terms of the area of the corresponding classical phase space. The fractal Weyl law is a counterpart of the Weyl law in scattering systems and states that the number of long-lived quantum resonance states should grow as a power law whose exponent is related to the Hausdorff dimension of the classical repeller. 

Since the fractal Weyl law is primarily expected to hold in uniformly hyperbolic systems, 
the actual verification to check its validity should also be performed employing 
systems with uniform hyperbolicity 
\cite{Lin02,Zworski02,Lu03,Schomerus,Nonnen05,Borthwick14}. 
In Refs. \cite{Lin02,Zworski02} the authors studied 
the system with Gaussian shaped repulsive potential and in 
Ref. \cite{Lu03} the 3-disk system has been employed, they obtained 
the results supporting the conjecture. The author in Ref. \cite{Borthwick14,Ketz22,Barkhofen22, Barkhofen23}
%\red{[XXX, add references requested by referee 2] }
also numerically examined the conjecture by taking the geodesic flow 
on convex co-compact hyperbolic manifolds. 

On the other hand, although the symplectic or discrete maps constitute an important class of dynamical systems which can realize uniformly hyperbolic situations, reasonable models suitable for investigating quantum resonances have not been proposed so far. There nevertheless exists a couple of works, which were intended to confirm the fractal Weyl law numerically using open discrete maps \cite{Schomerus,Nonnen05,Ketz13}, which are known to be uniformly hyperbolic when they are closed.
This however invokes a serious issue when performing a test for the Weyl-conjecture, 
because undesired effects necessarily enter due to projective openings or the introduction of 
strong absorbing potentials. These usually cause diffraction or nonclassical effects, 
which are all beyond the treatment of the semiclassical argument, on which 
the original Weyl-conjecture is essentially based. 

In \cite{NS18}, the authors discuss the numerical computation of the quantum resonance spectra of a kicked scattering map system with analytic potential, which is free from diffraction. Since quantum resonance states are not square integrable, they cannot be expanded in a basis. In the method called {\it complex scaling}, a system on a complex contour obtained by rotating the real axis is considered. In this rotated system, some part of the quantum resonance states become square integrable and can be expanded in terms of a basis. The resulting spectrum coincides with the quantum resonance spectrum of the original system. The spectra obtained this way also coincide with the spectrum obtained by a method using a sufficiently small absorbing potential. On the other hand, the spectrum obtained using conventional methods such as projection or strong absorbing cannot separate the quantum resonance eigenvalues from the continuous spectrum. This raises a doubt that the spectra obtained from closed systems with projective openings are at all suited for the purpose of testing the fractal Weyl law. Thus, complex scaling method is more suitable.

The motivation of the present study is therefore to establish an analytic scattering map 
which satisfies the following conditions: i) uniform hyperbolicity is mathematically proven in a certain parameter limit, and ii) an effective method can be applied to rigorously compute quantum resonance spectra.
We hope it will be taken as a model not only for testing the fractal Weyl law but also for further inquiries arising in scattering problems, such as questions on the existence of a spectral gap in the quantum resonance distribution, and spatial structure of the associated metastable states in phase space (see more details in review articles \cite{Nonnen11,Zworski17,Novaes13}.) One could also obtain further insights into physical phenomena such as ionization or dissociation of atoms and molecules. 

In this context we here wish to notice that the H\'enon map has been proven to satisfy uniform hyperbolicity in certain parameter loci as mentioned above. In that, it could be regarded as a kind of scattering systems. However, its dynamics in the asymptotic region does not conform to the usual requirement of physical scattering systems. That is, far away from the scattering region the orbits of the H\'enon map do not attain free motion. Therefore, standard methods for computing quantum resonance spectra such as complex scaling, cannot be applied to the H\'enon map and render it unsuitable for investigations of quantum manifestations of uniformly hyperbolic scattering systems. First steps towards providing analytic scattering maps, which do exhibit free motion in the asymptotic region have been taken in \cite{NS18}, where a mathematically grounded framework for calculating quantum resonances of such scattering maps based on complex scaling was presented, thus illustrating and resolving issues with quantum resonance spectra computed based on projective openings. However, the uniform hyperbolicity of the scattering maps in \cite{NS18} could not be proven and had to be conjectured based on the shape of numerically computed manifolds.

In this paper, we proof uniform hyperbolicity for a class of scattering maps which generalizes \cite{NS18}. The core strategy of our proof follows exactly the strategy applied in Devaney and Nitecki \cite{DN79}: first we consider a sufficient condition for the topological horseshoe, which is just given by showing the existence of an appropriate region whose forward iterate gives rise to mutually disjointed components (a ternary topological horseshoe in the present case). Next we show that the 
the nonwandering set of the system is uniformly hyperbolic \cite{Katok,Wiggins}. 
By proving uniform hyperbolicity of the proposed scattering maps this article complements our previous work \cite{NS18} and establishes the first family of 2d scattering maps for which the following properties can be proven to hold simultaneously.
(i) Uniform hyperbolicity in a certain parameter limit.
(ii) Standard free motion in the asymptotic regions away from the scattering region.
(iii) Amenability to well-established numerical methods such as complex scaling and weak absorbing potentials as established in our previous work \cite{NS18}.
Providing a class of scattering maps with provable uniform hyperbolicity and numerically established methods for computing quantum resonance states should be valuable to further test fractal Weyl laws and related topics such as chaotic eigenfunction hypothesis.

The outline of the paper is as follows. 
In section 2, we introduce a class of scattering maps to be analyzed in the paper. We consider two 
types of potentials, given by a Gaussian and a Lorentzian function respectively. 
Both maps are characterized by three parameters, 
and we examine in which parameter regions the horseshoe for the nonwandering set and uniform hyperbolicity 
are achieved. 
In section 3, we consider the topological horseshoe for a certain region of scattering maps introduced in section 2, and provide a sufficient condition for it.  
In section 4, we show that the filtration property holds in a manner, similar to the H\'enon map,
and also prove that the nonwandering set of the map is contained in a 
region, for which we will show the uniform hyperbolicity. 
In section 5,  we prove two lemmas which lead to the uniform hyperbolicity of the map. 
As is well known, 
there is a sufficient condition for the system to be uniformly hyperbolic involving the so-called sector bundle or cone field \cite{DN79,Katok,Wiggins} and we call the sufficient condition as "the sector condition" in this paper.
In section 6,  we explore in which parameter regions the sector condition holds. 
Our strategy to prove the sector condition is first to find a proper region in phase space containing 
the nonwandering set of the system, and then to divide such a region into three subregions. 
For each region we separately check the conditions ensuring the sector condition, and derive a set of 
inequalities. 
In section 7, putting together all the conditions derived in section 6,  we give our main 
results which state sufficient conditions for the topological 
horseshoe and the uniform hyperbolicity in each potential case.  
Appendices are devoted to finding redundant inequalities and further specifying 
parameter regions in which the horseshoe and hyperbolicity are realized. 

\section{Scattering Map}
\subsection{Map}
\label
{sec:scattering_map}
\setcounter{equation}{0}

We here consider a two-dimensional blacksymplectic map:
\renewcommand{\arraystretch}{1.5}
\begin{eqnarray}
\label{eq:map1}
U: 
\left(\begin{array}{c}
q_{n+1} \\
p_{n+1} 
\end{array}\right)
=
\left(\begin{array}{c}
\displaystyle q_n + p_n - V'(q_n)\\
\displaystyle p_n - V'(q_n)
\end{array}\right). 
\end{eqnarray}
Here one can easily check that the map is symplectic, irrespective of 
the form of $V(q)$,  by 
calculating the Jacobian of the map $U$.
We hereafter take 
\begin{eqnarray}
V(q) = \kappa \left( V_{1}(q) + \varepsilon V_{2}(q) \right), 
\end{eqnarray}
\begin{eqnarray}
\label{eq:potential}
V_1(q) = - v(q^2), \\
V_{2}(q) = -
\left(
\int_{0}^{q} v((q'-q_b)^2) dq' -  \int_{0}^{q} v((q'+q_b)^2)  dq'
\right). 
\end{eqnarray}
In particular, we investigate the following two types of potentials: 
\begin{eqnarray}
\label{eq:Gaussian}
v^{(G)}(q^2) = e^{-q^2},  \\
%\end{eqnarray}
%\begin{eqnarray}
\label{eq:Lorentzian}
v^{(L)}(q^2) = \frac{\displaystyle 1}{\displaystyle 1 + q^2}. 
\end{eqnarray}
Below we drop the superscript $(G)$ or $(L)$ if formulas or equations 
under consideration hold for both cases. 
Here note that $V(q)=V(-q)$ holds also for both choices. 
We hereafter assume 
\begin{eqnarray}
\k >0, 
\end{eqnarray}
which provides physically reasonable scattering systems. 
For $\k \le 0$, the system becomes totally different, 
and the subsequent analysis will not work. 
The parameter $\varepsilon$ is expressed 
in terms of other parameters $q_b$ and $q_f$ as 
\begin{eqnarray}
\label{eq:epsilon}
\varepsilon = -
\frac{V_1'(q_f)}{V_2'(q_f)} 
= -\frac{2q_fv'(q_f^2)}{v \bigl( (q_f-q_b)^2) - v ((q_f+q_b)^2 \bigr)  }
.
\end{eqnarray}
We assume that the parameters $q_b$ and $q_f$ are both 
real and positive. In particular, the parameter $q_f$ is set to be
 a zero of the function $V'(q)$. 
In \ref{app:Gaussian_out} and \ref{app:Lorentzian_out}, we show that 
the function $V'(q)$ has zeros only at $-q_f, 0$ and $q_f$ and the property 
\begin{eqnarray}
\label{eq:cond_outgoing}
V'(q) < 0 ~~{\rm for}~~ q > q_f,
\end{eqnarray}
holds, which also leads to $V'(q) > 0 ~~{\rm for}~~ q < -q_f$ due to the symmetry. 
For the Lorentzian case, we need the following condition to prove it (see figure B1), 
\begin{equation}
\label{eq:condout}
q_f < \sqrt{\frac{1}{2} q_b^2 + 1}. 
\end{equation}
As a result, 
the potential function $V(q)$ has a single well around $q=0$ and 
bumps around $q= \pm q_f$. 
Note that $\varepsilon$ does not depend on the parameter $\k$. 
The potential function $V(q)$ is thereby characterized by 
the three parameters $\kappa, q_b$ and $q_f$. 

The inverse map is written as 
\renewcommand{\arraystretch}{1.5}
\begin{eqnarray}
\label{eq:map1_inverse}
U^{-1}: 
\left(\begin{array}{c}
q_{n} \\
p_{n} 
\end{array}\right)
=
\left(\begin{array}{c}
\displaystyle q_{n+1}  -p_{n+1} \\
\displaystyle p_{n+1}+ V'(q_{n})  
\end{array}\right). 
\end{eqnarray}
The
determinant
 of the Jacobian for the inverse map is also unity as the forward map $U$.
We first illustrate in Fig. \ref{fig:potential} the form of the potential function in each case and 
also stable and unstable manifolds for fixed points in Fig. \ref{fig:phase_space}.
Here we have chosen a set of parameters so that 
the topological horseshoe and the uniform hyperbolicty are realized at least 
within numerical computations. 
\begin{figure}[H]

  \centering
    \begin{tabular}{c}
      \begin{minipage}{0.50\hsize}
        \centering
	\includegraphics[width = 8.0cm,bb =0 0 461 346]{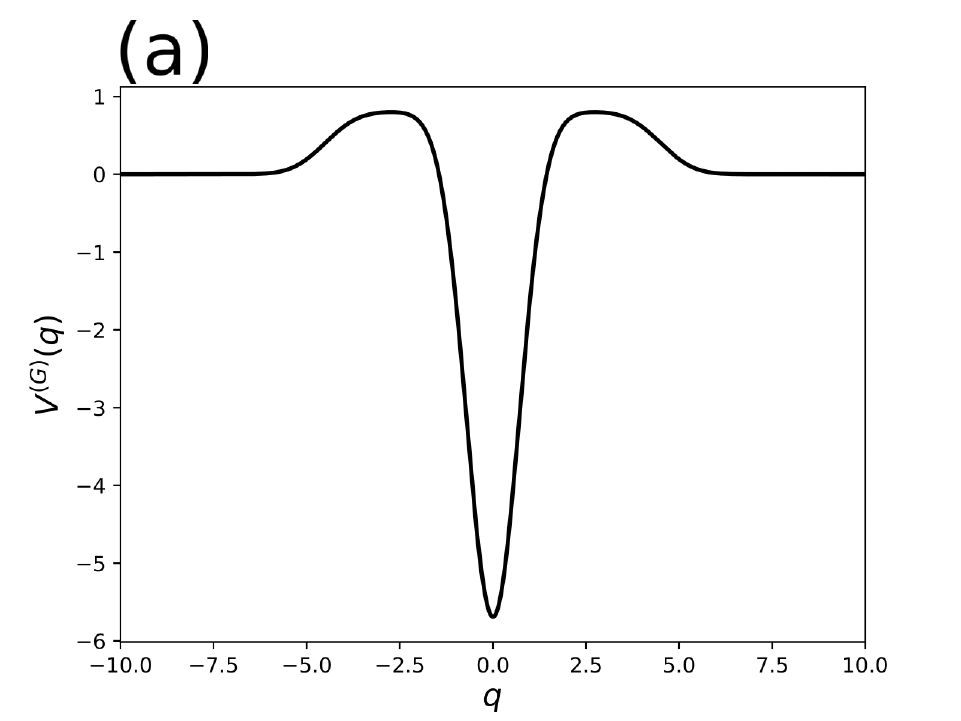}
      \end{minipage}
 
      \begin{minipage}{0.50\hsize}
        \centering
	\includegraphics[width = 8.0cm,bb = 0 0 461 346]{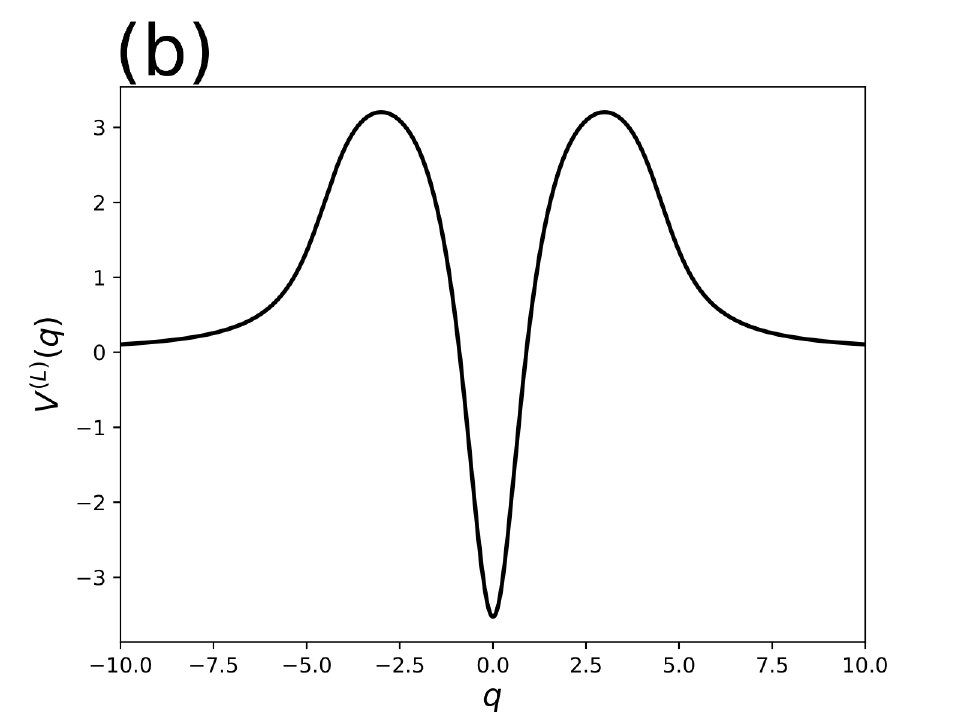}
      \end{minipage} 
    \end{tabular}
    	\caption{The form of potential function for (a) Gaussian  function case with the parameters are taken as $(q_b,q_f,\k) = (4.5,3.0,6.5)$ and (b) Lorentzian function case with the parameters are taken as $(q_b,q_f,\k) = (4.5,3.0,8.0)$. 
	The central valley and side bumps are respectively formed by the function $V_1(q)$ and $V_2(q)$. 
	\label{fig:potential}}
\end{figure} 

\begin{figure}[H]
  \centering
    \begin{tabular}{c}
      \begin{minipage}{0.50\hsize}
        \centering
	\includegraphics[width = 8.0cm,bb = 0 0 461 346]{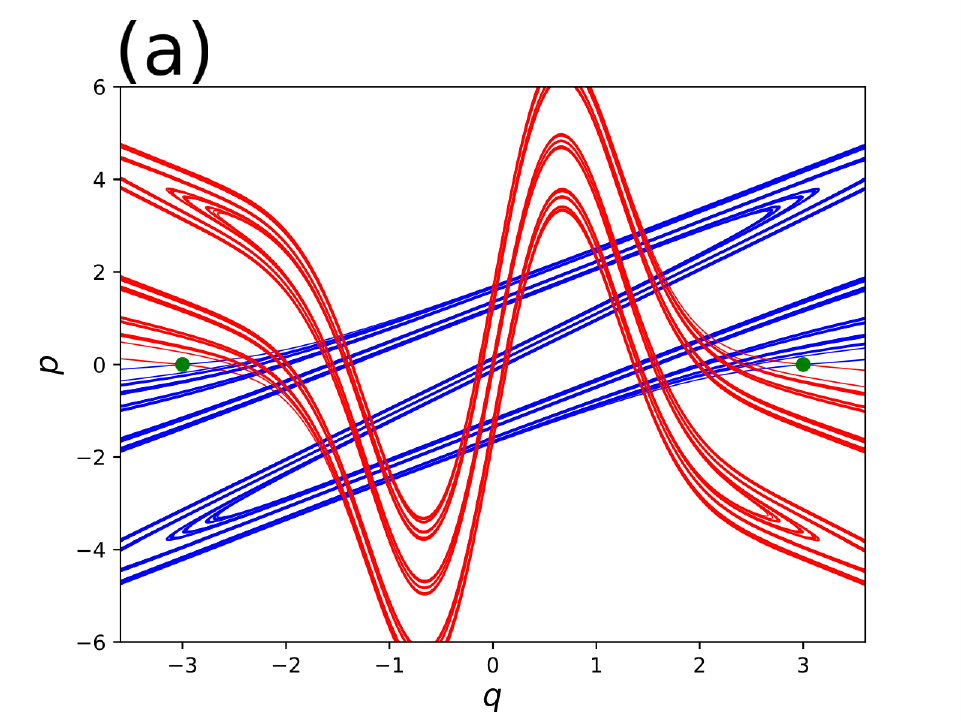}
      \end{minipage}
 
      \begin{minipage}{0.50\hsize}
        \centering
	\includegraphics[width = 8.0cm,bb = 0 0 461 346]{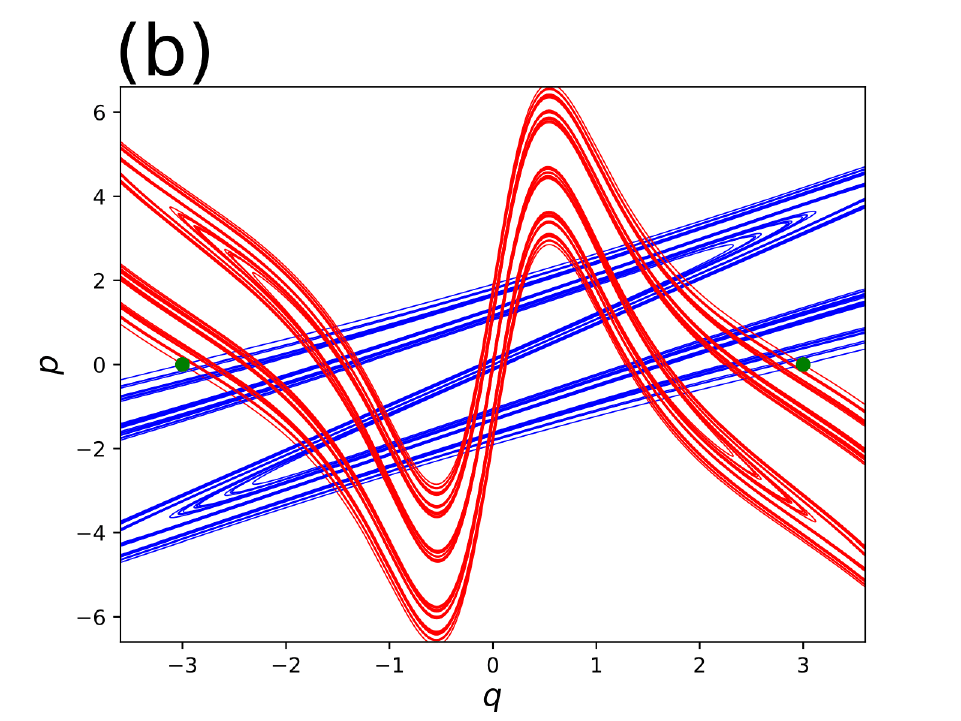}
      \end{minipage} 
    \end{tabular}
    	\caption{
	Stable (red) and unstable (blue) manifolds associated with fixed points for (a) the Gaussian $(q_b,q_f,\k) = (4.5,3.0,6.5)$ and (b) Lorentzian function  $(q_b,q_f,\k) = (4.5,3.0,8.0)$ case.  Green dots represent the fixed points of the map. 
	\label{fig:phase_space}}

\end{figure}

%%%%%%%%%%%%%%%%%%%%%%%%%%%%%%%%

\subsection{Transformation for the map}
We here consider a coordinate transformation
$T:(q,p)\rightarrow (\widetilde{q},\widetilde{p})$
\begin{eqnarray}
    \left( \begin{array}{c}
       \widetilde{q} \\
      \widetilde{p}
    \end{array}
    \right)
&:=&T
  \left( 
    \begin{array}{c}
       q \\
      p
    \end{array}
    \right), 
\end{eqnarray}
where 
\begin{eqnarray}
T &:=& \left(
    \begin{array}{cc}
       1&0 \\
      -1& 1
    \end{array}
  \right) . 
\end{eqnarray}
Later on this transformation will allow for defining a scattering region which is a unit square.
According to this transformation, the original map $U$
is transformed into the map $\widetilde{U}$: 
\renewcommand{\arraystretch}{1.5}
\begin{eqnarray}
\label{eq:trans_map1}
\widetilde{U}: 
\left(\begin{array}{c}
\widetilde{q}_{n+1} \\
\widetilde{p}_{n+1} 
\end{array}\right)
=
\left(\begin{array}{c}
\displaystyle 2\widetilde{q}_{n} + \widetilde{p}_{n} - V'(\widetilde{q}_{n})   \\
\displaystyle -\widetilde{q}_{n} 
\end{array}\right), 
\end{eqnarray}
and the inverse map is given as
\renewcommand{\arraystretch}{1.5}
\begin{eqnarray}
\label{eq:trans_map1_inverse}
\widetilde{U}^{-1}: 
\left(\begin{array}{c}
\widetilde{q}_{n} \\
\widetilde{p}_{n} 
\end{array}\right)
=
\left(\begin{array}{c}
\displaystyle -\widetilde{p}_{n+1}  \\
\displaystyle \widetilde{q}_{n+1} + 2 \widetilde{p}_{n+1} + V'(\widetilde{q}_{n}) 
\end{array}\right). 
\end{eqnarray}
In the following, we rewrite $(\widetilde{q},\widetilde{p})$ as $(q,p)$, 
and $\widetilde{U}$ as $U$, respectively. 
\renewcommand{\arraystretch}{1.5}
\begin{eqnarray}
\label{eq:Map}
U: 
\left(\begin{array}{c}
q_{n+1} \\
p_{n+1} 
\end{array}\right)
=
\left(\begin{array}{c}
\displaystyle 2q_{n} + p_{n}  - V'(q_n) \\
\displaystyle -q_{n} 
\end{array}\right). 
\end{eqnarray}
The associated tangent map is given by
\begin{eqnarray}
DU_{(q_n,p_n)} &=& \left(
    \begin{array}{cc}
       \frac{\displaystyle \partial q_{n+1}}{\displaystyle \partial q_{n}}& \frac{\displaystyle \partial q_{n+1}}{\displaystyle \partial p_{n}} \\
      \frac{\displaystyle \partial p_{n+1}}{\displaystyle \partial q_{n}} & \frac{\displaystyle \partial p_{n+1}}{\displaystyle \partial p_{n}}
    \end{array}
  \right) 
\nonumber \\  &=&
\left(
    \begin{array}{cc}
       2 - V''(q_{n})& 1 \\
      -1& 0
    \end{array}
  \right) 
  \\
DU_{(q_n,p_n)}^{-1}  &=& \left(
    \begin{array}{cc}
       \frac{\displaystyle \partial q_{n-1}}{\displaystyle \partial q_{n}}& \frac{\displaystyle \partial q_{n-1}}{\displaystyle \partial p_{n}} \\
      \frac{\displaystyle \partial p_{n-1}}{\displaystyle \partial q_{n}} & \frac{\displaystyle \partial p_{n-1}}{\displaystyle \partial p_{n}}
    \end{array}
  \right) 
\nonumber \\ &=& \left(
    \begin{array}{cc}
      0 & -1 \\
      1 &  2 - V''(q_{n})
    \end{array}
  \right) . 
\end{eqnarray}

\subsection{Scattering Region}
We now illustrate the scattering region introduced in \cite{NS18}, and show how 
the horseshoe shaped region is created under the dynamics. 
As shown in Fig. \ref{fig:Scattering_region}, 
the scattering region $\mathcal{S}$ is a region whose boundaries 
are given by segments of stable and unstable manifolds (black curves in the figure). 
Figure \ref{fig:Horseshoe_1} illustrates 
that the intersection with its forward image, i.e., $\mathcal{S} \cap U(\mathcal{S})$, 
are composed of three mutually disjoint regions.
\begin{eqnarray}
\mathcal{S} \cap U(\mathcal{S})  = \mathcal{X}_1 \cup \mathcal{Y}_1 \cup \mathcal{Z}_1,  
\end{eqnarray}
where
\begin{eqnarray}
\mathcal{X}_1 \cap \mathcal{Y}_1  = \emptyset,\mathcal{Y}_1 \cap \mathcal{Z}_1  = \emptyset,\mathcal{Z}_1 \cap \mathcal{X}_1  = \emptyset.
\end{eqnarray}
In a similar manner, as shown in Fig. \ref{fig:Horseshoe_2}, 
the intersection with the backward image, i.e., 
$\mathcal{S} \cap U^{-1}(\mathcal{S})$
are composed of three mutually disjoint regions. 
\begin{eqnarray}
\mathcal{S} \cap U^{-1}(\mathcal{S})  = \mathcal{X}_0 \cup \mathcal{Y}_0 \cup \mathcal{Z}_0,  
\end{eqnarray}
where
\begin{eqnarray}
\mathcal{X}_0 \cap \mathcal{Y}_0  = \emptyset,\mathcal{Y}_0 \cap \mathcal{Z}_0  = \emptyset,\mathcal{Z}_0 \cap \mathcal{X}_0  = \emptyset . 
\end{eqnarray}
Here, as demonstrated in Figs. \ref{fig:Horseshoe_1}(a) and \ref{fig:Horseshoe_2}(b), 
the boundaries of the regions $\mathcal{X}_0,\mathcal{Y}_0$ and $\mathcal{Z}_0$ 
are composed of four portions, each of which is respectively 
given by the stable and unstable manifolds for the right-side fixed point, and those for the left-side fixed point. The regions 
$\mathcal{X}_1,\mathcal{Y}_1$ and $\mathcal{Z}_1$ are the images of 
$\mathcal{X}_0,\mathcal{Y}_0$ and $\mathcal{Z}_0$, 
i.e., 
$\mathcal{X}_1=U(\mathcal{X}_0),\mathcal{Y}_1=U(\mathcal{Y}_0)$
and $\mathcal{Z}_1=U(\mathcal{Z}_0)$.
As illustrated in Figs. \ref{fig:Horseshoe_1} (a),(b) and Figs. \ref{fig:Horseshoe_2} (a),(b), 
forward and backward images form twice folded horseshoe shape regions, not like the 
well-known once folded horseshoe, so one would 
expect that the scattering region $\mathcal{S}$ satisfies the topological horseshoe condition 
in a parameter regime where the black regions in Figs. \ref{fig:Horseshoe_1} and \ref{fig:Horseshoe_2} 
emerge.
However, since we do not have any compact expressions for the stable and unstable manifolds, 
and intersection points at hand, we have to find an analytically tractable region, 
instead of the scattering region $\mathcal{S}$,  to prove that 
the topological horseshoe is indeed achieved.

%%%%%%%%%%%%%%%%%%%%%%%%%%%%%%%%%%%%
\begin{figure}[H]

  \centering
        \centering
	\includegraphics[width = 8.0cm,bb = 0 0 461 346]{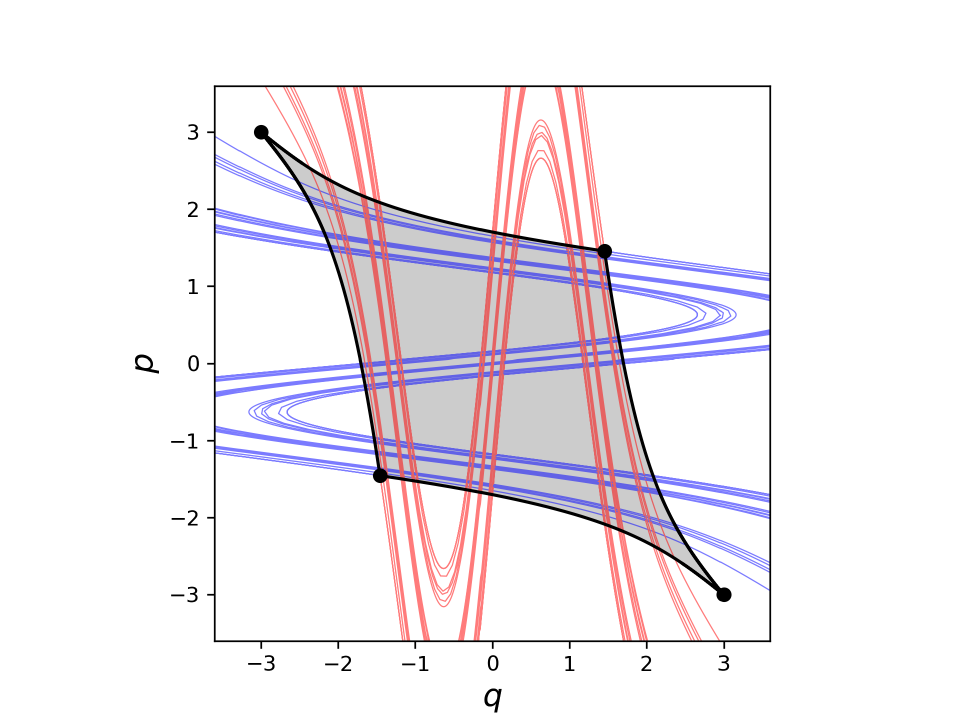}
	\caption{Illustration of the scattering region $\mathcal{S}$(gray). Boundaries (thick black curves) of the region consist of segments of the stable and unstable manifolds (red and blue respectively). Dots represent the fixed points and the heteroclinic points. }
	\label{fig:Scattering_region}
\end{figure} 
%%%%%%%%%%%%%%%%%%%%%%%%%%%%%%%%%%%%
\begin{figure}[H]
  \centering
	\includegraphics[width = 15.0cm,bb =0 0 850 346]{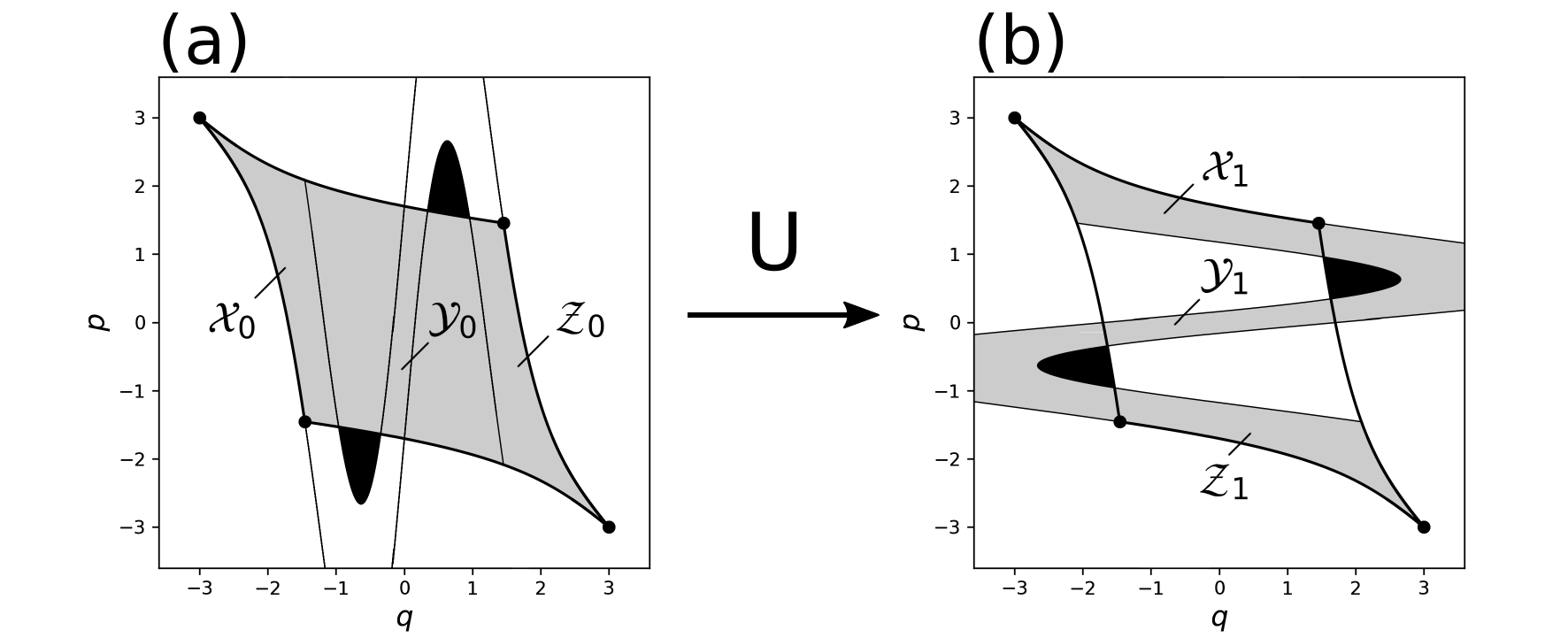}
    	\caption{
	(a) The scattering region $\mathcal{S}$(gray) and (b) its forward image $U(\mathcal{S})$(gray). 
	The intersection $\mathcal{S} \cap U(\mathcal{S})$ consists of three parts 
	$\mathcal{X}_1,\mathcal{Y}_1$ and $\mathcal{Z}_1$. Black regions 
	in (a) are mapped to the ones in (b). 
	\label{fig:Horseshoe_1}}
\end{figure} 
%%%%%%%%%%%%%%%%%%%%%%%%%%%%%%%%
%%%%%%%%%%%%%%%%%%%%%%%%%%%%%%%%%%%%
\begin{figure}[H]
  \centering
	\includegraphics[width = 15.0cm,bb =0 0 850 346]{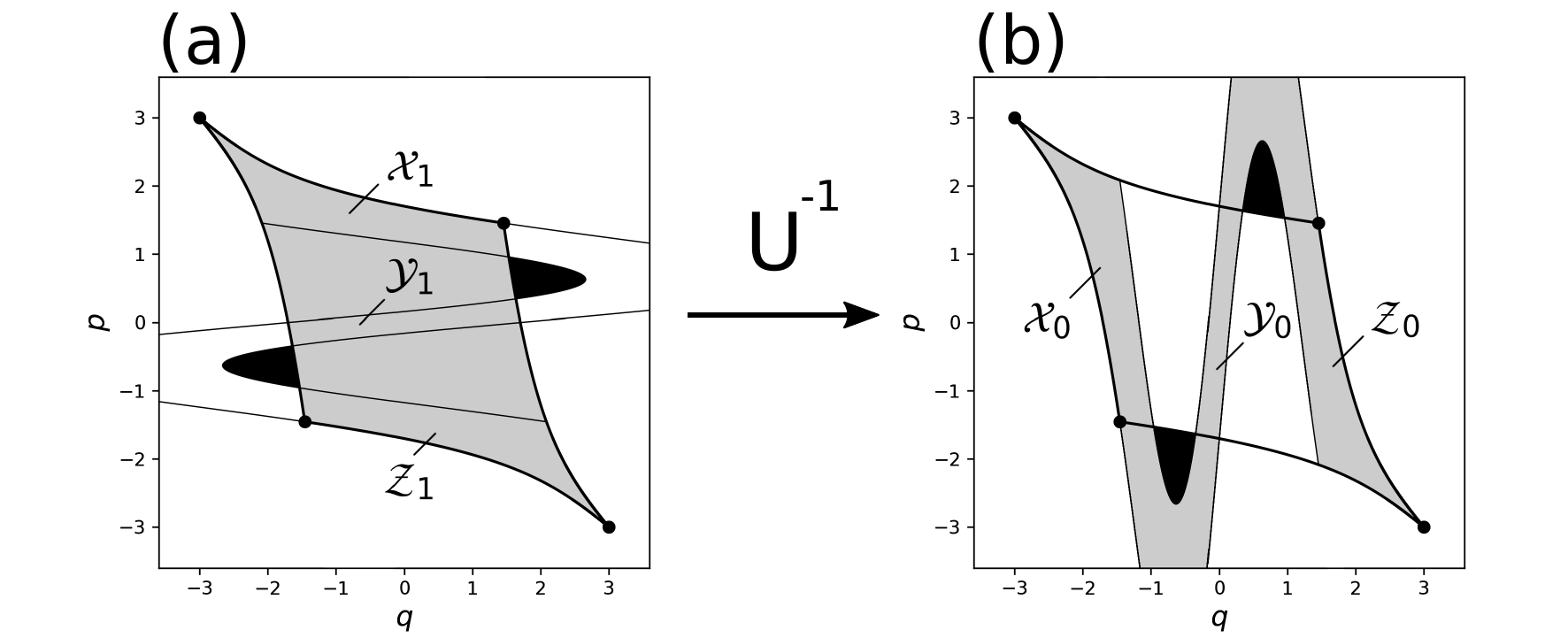}
    	\caption{
(a) The scattering region $\mathcal{S}$(gray) and (b) its backward image $U^{-1}(\mathcal{S})$(gray). 
	The intersection $\mathcal{S} \cap U^{-1}(\mathcal{S})$ consists of three parts 
	$\mathcal{X}_0,\mathcal{Y}_0$ and $\mathcal{Z}_0$. Black regions 
	in (a) are mapped to the ones in (b). 
	\label{fig:Horseshoe_2}}
\end{figure} 
%%%%%%%%%%%%%%%%%%%%%%%%%%%%%%%%
\section{Topological Horseshoe condition}
\setcounter{equation}{0}
We first introduce the square region $R$ (see Fig. \ref{region_R}) enclosed by 
\begin{eqnarray}
l_1&:=& \{ (q,p) \, | \, q = -q_f, -q_f < p < q_f \} ,\\
l_2&:=& \{ (q,p) \, | \, -q_f < q < q_f,p = q_f  \} , \\
l_3&:=& \{ (q,p) \, | \, q = q_f, -q_f < p < q_f \} , \\
l_4&:=& \{ (q,p) \, | \, -q_f < q < q_f,p = -q_f \}, 
\end{eqnarray}
%%%%%%%%%%%%%%%%%%%%%%%%%%%%%%%%%%%%
\begin{figure}[H]
        \centering
	\includegraphics[width = 8.0cm,bb = 0 0 461 346]{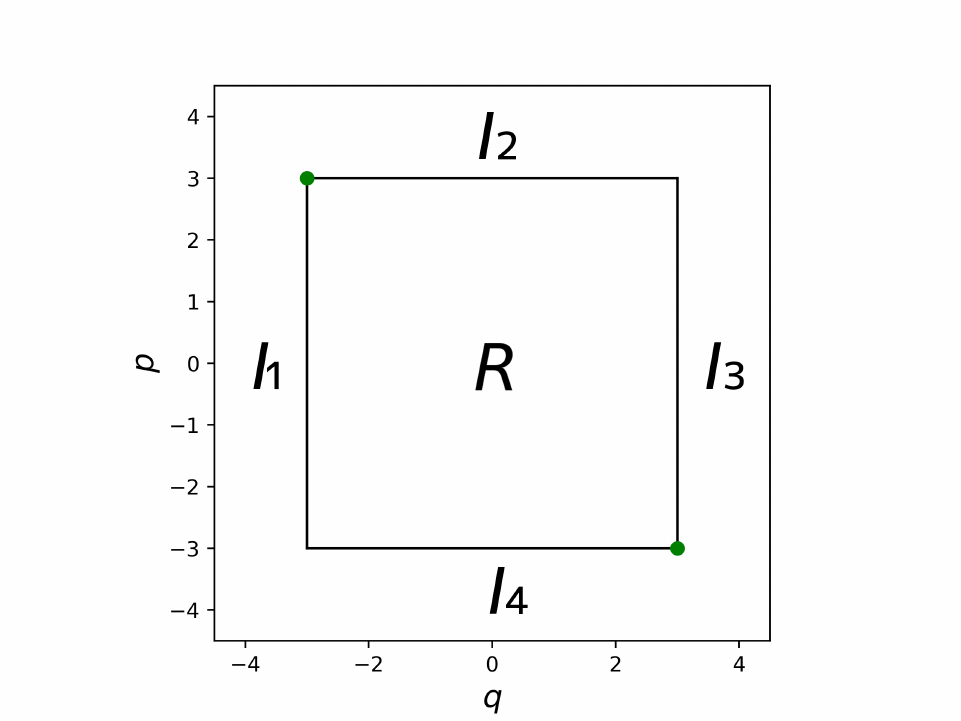}
	\caption{
	Illustration of the region $R$. The green dots represent fixed points of the map. }
	\label{region_R}
\end{figure} 
%%%%%%%%%%%%%%%%%%%%%%%%%%%%%%%%%%%%
\begin{figure}[H]
  \centering
        \centering
	\includegraphics[width = 8.0cm,bb = 0 0 1024 770]{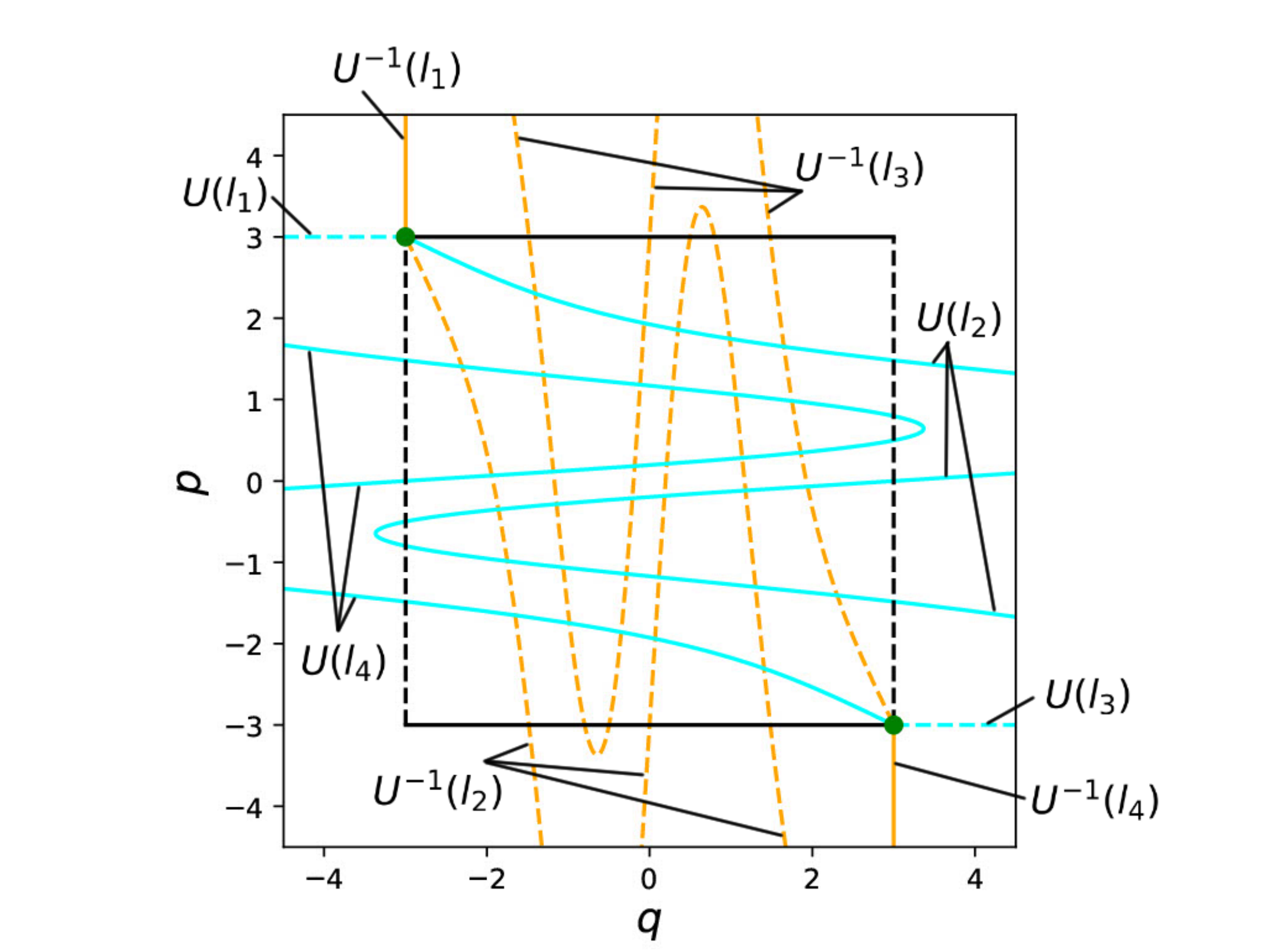}
 \caption{
	Illustration of the region $R$ (black), $U^{-1}(R)$(orange) and $U(R)$(cyan). The green dots represent fixed points of the map. 
	}
	\label{region_R_UR}
\end{figure}
The region $R$ is mapped to the region $U(R)$ (see Fig. \ref{region_R_UR}), whose boundaries 
are composed of
\begin{eqnarray}
U(l_1)&:=& \{ (q,p) \, | \, -3q_f < q < -q_f , p = q_f\} ,\\
U(l_2)&:=& \{ (q,p) \, | \, q = F(p) + q_f, -q_f < p < q_f \} , \\
U(l_3)&:=& \{ (q,p) \, | \, q_f < q < 3q_f, p = -q_f \} , \\
U(l_4)&:=& \{ (q,p) \, | \, q = F(p) - q_f, -q_f < p < q_f \}, 
\end{eqnarray}
where 
\begin{eqnarray}
F(x) = -2x + V'(x). 
\end{eqnarray}
In a similar way, the boundaries of the region ${U}^{-1}(R)$ (see Fig. \ref{region_R_UR}) are given as
\begin{eqnarray}
U^{-1}(l_1)&:=& \{ (q,p) \, | \, -q_f < q < q_f, p = F(q) - q_f\} ,\\
U^{-1}(l_2)&:=& \{ (q,p) \, | \, q = -q_f, q_f < p < 3q_f \} , \\
U^{-1}(l_3)&:=& \{ (q,p) \, | \, -q_f < q < q_f, p = F(q) + q_f \} , \\
U^{-1}(l_4)&:=& \{ (q,p) \, | \,  q = q_f, -3q_f < p < -q_f \}.
\end{eqnarray}
Due to the rotational symmetry with respect to the origin, 
the topological horseshoe is obviously realized 
if the curve $U^{-1}(l_1)$ intersect with the line $l_2$ twice for $-q_f<q<q_f$. 
Below, we derive a sufficient condition for the topological horseshoe. 
\begin{dfn}
\label{dfn:q0}
Let $q_0$ be the positive solution of
\begin{eqnarray}
\label{eq:eq_for_q3}
\frac{d}{dq} \left(  v'(q^2) +2 q^2 v''(q^2) \right) = 0
\end{eqnarray}
\end{dfn}
\noindent
{\it Remark: } 
Note that the solution for the equation (\ref{eq:eq_for_q3}) is unique and for each 
potential case we actually have
\begin{eqnarray}
q_{0}^{(G)} = \sqrt{\frac{3}{2}},\\
q_{0}^{(L)} = 1. 
\end{eqnarray}
We also note that the the function $v'(q^2) +2 q^2 v''(q^2)$ takes a local maximum value
 at $q_{0}^{(G)}$ (resp. $q_{0}^{(L)}$) for $0<q<q_f$. 
As for the profile of the function $F(q)$ defined above, 
we can show the following. 
\begin{lmm}
\label{lmm:extrema}
If the condition
\begin{eqnarray}
\label{con_mild}
q_f < q_b - q_0 
\end{eqnarray}
is satisfied respectively  for the Gaussian and Lorentzian potential case 
where 
\begin{eqnarray}
q_{0}^{(G)} = \sqrt{\frac{3}{2}},\\
q_{0}^{(L)} = 1. 
\end{eqnarray}
then 
$
F(q) - q_f < q_f 
$
holds for $-q_f < q < 0$. 
Furthermore, if the function $F(q) - q_f$ has extrema, 
there exists a unique local maximum in $0< q < q_f$. 
\end{lmm}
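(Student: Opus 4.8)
The plan is to reduce both assertions to statements about the graph of $F$ over the interval $(0,q_f)$, and then to control that graph through the sign of $V'''$. Since $V(q)=V(-q)$, the derivative $V'$ is odd and $V''$ is even, so $F(q)=-2q+V'(q)$ is odd and $F'(q)=-2+V''(q)$ is even; because $0$ and $q_f$ are zeros of $V'$ one has $F(0)=0$, $F(q_f)=-2q_f$ and $F(-q_f)=2q_f$. For $q\in(-q_f,0)$ put $r=-q\in(0,q_f)$; oddness gives $F(q)=-F(r)$, so the desired inequality $F(q)-q_f<q_f$, i.e.\ $F(q)<2q_f$, is equivalent to $F(r)>-2q_f=F(q_f)$. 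Thus the first assertion says that $F$ stays strictly above its right endpoint value on $(0,q_f)$, while the ``furthermore'' concerns the critical points of $F$ on $(0,q_f)$, and both are settled once one proves that $V'''(q)<0$ for all $q\in(0,q_f)$.

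Granting $V'''<0$ on $(0,q_f)$, the argument is quick. Then $F'=V''-2$ is strictly decreasing on $(0,q_f)$, hence has at most one zero there; if it has one, $F'$ changes sign from $+$ to $-$, so $F$ has exactly one interior critical point on $(0,q_f)$, a local maximum, and no accompanying local minimum --- the mirror local minimum forced by oddness of $F$ lies in $(-q_f,0)$, and $F'$ being even there are no further extrema. This is the ``furthermore'' part. Moreover, in either case ($F$ monotone on $(0,q_f)$, or $F$ with a single interior maximum) one has $F(r)>\min\{F(0),F(q_f)\}=-2q_f$ on the open interval, which by the reduction above is the first assertion. (The first assertion also follows directly from the single-well shape of $V$ established in the appendices, which forces $V'<0$ on $(-q_f,0)$ and hence $F(q)=V'(q)-2q<2q_f$ there; I keep the $V'''$ route because it delivers the ``furthermore'' part as well.)

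It remains to prove $V'''<0$ on $(0,q_f)$. With $\phi(q):=v'(q^2)+2q^2v''(q^2)$ and $\psi(s):=2sv'(s^2)$ one checks $V_1''=-2\phi$, $V_2''(q)=\psi(q+q_b)-\psi(q-q_b)$ and $\psi'=2\phi$, so
\begin{equation}
V'''(q)=2\k\Bigl[-\phi'(q)+\varepsilon\bigl(\phi(q+q_b)-\phi(q-q_b)\bigr)\Bigr].
\end{equation}
By the hypothesis $q_f<q_b-q_0$ we have $q_b>q_f$ and, for $q\in(0,q_f)$, both $q+q_b$ and $|q-q_b|=q_b-q$ exceed $q_0$; since $q_0$ is the unique positive critical point of the even function $\phi$ and is a maximum (as noted in the Remark following Definition~\ref{dfn:q0}), $\phi$ is strictly decreasing on $(q_0,\infty)$, and as $q+q_b>q_b-q$ this yields $\phi(q+q_b)<\phi(q_b-q)=\phi(q-q_b)$; with $\varepsilon>0$ the second term in the bracket is negative throughout $(0,q_f)$. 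Since moreover $\phi'>0$ on $(0,q_0)$, the inequality $V'''<0$ is immediate on $(0,\min\{q_0,q_f\}]$, and the only delicate range is $q\in(q_0,q_f)$ (present exactly when $q_0<q_f$), where $\phi'(q)\le0$ and one must show that the negative second term still dominates, i.e.\ that
\begin{equation}
-\phi'(q)<\varepsilon\bigl(\phi(q_b-q)-\phi(q_b+q)\bigr)=-\varepsilon\int_{q_b-q}^{q_b+q}\phi'(s)\,ds,\qquad q_0<q<q_f.
\end{equation}
I expect this last estimate to be the main obstacle: it is the only point where one must use the explicit profiles $\phi^{(G)}(q)=e^{-q^2}(2q^2-1)$ and $\phi^{(L)}(q)=(3q^2-1)/(1+q^2)^3$ together with the explicit value of $\varepsilon$ from \eq{eq:epsilon}. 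I would handle it last, bounding $-\phi'(q)$ from above and the integral from below using the monotonicity of $\phi'$ on $(q_0,\infty)$ (all arguments involved lie there under the hypothesis) together with a lower bound on $\varepsilon$, or else, should a clean closed form prove elusive, reducing to an explicit inequality among $q_f$, $q_b$ and $\k$ to be dispatched in the parameter analysis of Section~6 and the appendices.
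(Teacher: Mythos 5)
Your reduction of the first assertion is sound (and your parenthetical route via $V'<0$ on $(-q_f,0)$, which follows from the single-well structure established in the appendices, is essentially the cleanest way to get it). The problem is the ``furthermore'' part. Your entire mechanism for uniqueness of the critical point is the claim that $V'''<0$ on all of $(0,q_f)$, i.e.\ that $F'$ is strictly decreasing there, and you defer the range $q_0<q<q_f$ as ``the main obstacle.'' That obstacle cannot be overcome, because the claim is false under the lemma's hypothesis. Take the Gaussian case with the paper's own illustrative parameters $(q_b,q_f)=(4.5,3)$, which satisfy $q_f<q_b-\sqrt{3/2}$. At $q=q_f$ the first term of $V'''/\kappa$ is $4q_f(2q_f^2-3)e^{-q_f^2}=180\,e^{-9}>0$, while $\varepsilon e^{-(q_f-q_b)^2}=2q_fe^{-q_f^2}/(1-e^{-4q_fq_b})\approx 2q_fe^{-q_f^2}$, so the $\varepsilon$-term equals $4q_fe^{-q_f^2}\bigl(1-2(q_b-q_f)^2\bigr)=-42\,e^{-9}$ up to exponentially small corrections; the sum is $\approx 138\,e^{-9}>0$. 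More generally $V'''(q_f)>0$ whenever $(q_b-q_f)^2<q_f^2-1$, which is compatible with $q_b-q_f>q_0$ for every $q_f>\sqrt{5/2}$. So $F'$ is \emph{not} monotone on $(0,q_f)$ --- it turns back up before $q_f$ --- and the inference ``strictly decreasing $\Rightarrow$ at most one zero'' is unavailable.

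The paper's proof circumvents exactly this point. It uses $F''<0$ only on $(0,q_0)$, where the $V_1'''$ contribution is itself negative and the $\varepsilon$-contribution is shown to be negative on all of $(0,q_f)$ under $q_f<q_b-q_0$; on the remaining range it does not control $F''$ at all, but instead shows directly that $F'(q)<-2+2\kappa(1-2q^2)e^{-q^2}<0$ for $q>q_1=1/\sqrt{2}$ (resp.\ $q_1=1/\sqrt{3}$ in the Lorentzian case), again because the $\varepsilon$-part of $F'$ is negative there. Since $q_1<q_0$, the interval on which $F'$ is strictly decreasing and the interval on which $F'$ is strictly negative overlap and cover $(0,q_f)$, so $F'$ has at most one zero, necessarily a sign change from $+$ to $-$, i.e.\ a unique local maximum. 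If you wish to keep your framework, you must replace ``$F'$ strictly decreasing on $(0,q_f)$'' by such a two-interval covering argument; the estimate you would otherwise need on $(q_0,q_f)$ is simply not true.
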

The proof of this lemma will given in 
\ref{app:Gaussian_mild} and \ref{app:Lorentzian_mild}.
Note that if the condition (\ref{eq:horseshoe_condition}) in 
the Proposition \ref{prp:Horseshoe} is satisfied, 
then the function $F(q) - q_f$ automatically possesses one extrema for $0 < q < q_f$.
\begin{figure}[H]
        \centering
	\includegraphics[width = 8.0cm,bb = 0 0 461 346]{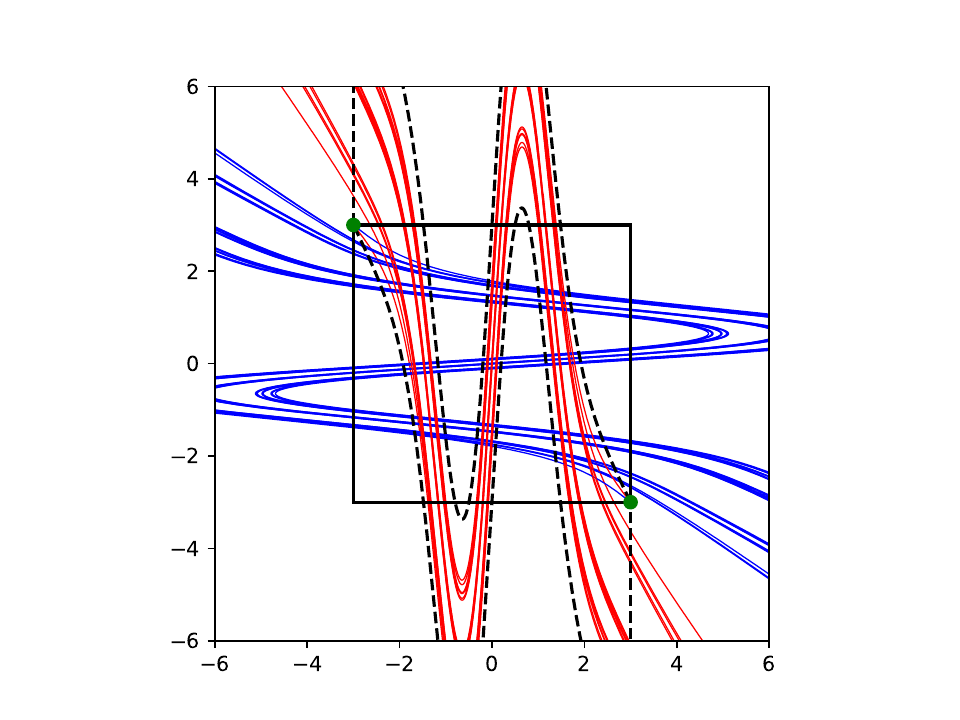}
	\caption{Illustration of the region $R$ (black solid line),$U^{-1}(R)$ (dashed line), which include the unstable(blue) and stable(red) manifold. Green dots represent fixed points of the map. }
	\label{region_R_with_manifold}
\end{figure} 
We also illustrate in Fig. \ref{region_R_with_manifold} 
that intersections of the regions $R$ and $U^{-1}(R)$ 
contain the heteroclinic points, which are given as intersections of 
stable and unstable manifolds associated with fixed points. 

\begin{dfn}
Let $q_1$ be the positive solution of 
\begin{eqnarray}
\label{eq:eq_for_q1}
\frac{ d \left( - q   v'(q^2) \right)}{dq}  
= 0, 
\end{eqnarray}
and let $M_1$ be defined as
\begin{eqnarray}
M_1 := -q_1 v'(q_1^2). 
\end{eqnarray}
\end{dfn}
\noindent
{\it Remark: } 
Note that the solution for the equation (\ref{eq:eq_for_q1}) is unique and for each 
potential case we actually have 
\begin{eqnarray}
q_{1}^{(G)} = \frac{1}{\sqrt{2}}, \quad M_1^{(G)} = \frac{1}{\sqrt{2e}}  \\
q_{1}^{(L)} = \frac{1}{\sqrt{3} }, \quad M_1^{(L)} = \frac{3\sqrt{3}}{16}
\end{eqnarray}
We also note that the function $-q   v'(q^2)$ takes the maximum value
$M_1^{(G)}$ (resp. $M_1^{(L)}$) at $q_{1}^{(G)}$ (resp. $q_{1}^{(L)}$)  and the minimum value $-M_1^{(G)}$ (resp. $-M_1^{(L)}$) at $-q_{1}^{(G)}$ (resp. $-q_{1}^{(L)}$) . 

Since $F(0)-q_f=-q_f < q_f$ and $F(q_f) = -3q_f < q_f$ hold, 
the region $R$
forms the topological horseshoe 
if there exists $\k > 0$ such that
\begin{eqnarray}
\max_{0 < q < q_f} \{ F(q)-q_f \} > q_f
\end{eqnarray}
is satisfied.  On the basis of this observation, we obtain a sufficient condition for the horseshoe:
\begin{prp}
\label{prp:Horseshoe}
If $\k$ satisfies the condition, 
\begin{eqnarray}
\label{eq:horseshoe_condition}
\k \geq \k_1 \\
\label{def:kappa1}
\k_1 := \frac{4 q_f} { \displaystyle  2M_1 - \varepsilon }, 
\end{eqnarray}
and the following conditions for $q_f$
\begin{eqnarray}
\label{eq:horseshoe_condition2}
q_1 < q_f, 
\end{eqnarray}
and
\begin{eqnarray}
\label{eq:horseshoe_condition3}
\varepsilon < \varepsilon_1 \\
\varepsilon_1 :=2M_1
\end{eqnarray}
hold, 
then the topological horseshoe is realized in the map (\ref{eq:map1}). 
\end{prp}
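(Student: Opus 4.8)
The plan is to reduce Proposition~\ref{prp:Horseshoe} to the single scalar inequality $\max_{0<q<q_f}\{F(q)-q_f\}>q_f$ isolated just above its statement, and then to certify that inequality by evaluating $F$ at the explicit abscissa $q=q_1$. For the reduction I would proceed as follows: by the rotational symmetry of $U$ about the origin it suffices that the graph $U^{-1}(l_1)=\{\,p=F(q)-q_f,\ -q_f<q<q_f\,\}$ meet the segment $l_2=\{\,p=q_f\,\}$ in exactly two points. We already know $F(0)-q_f=-q_f<q_f$ and, since $q_f$ is a zero of $V'$, $F(q_f)-q_f=-3q_f<q_f$; moreover Lemma~\ref{lmm:extrema} guarantees $F(q)-q_f<q_f$ on $(-q_f,0)$ and, once $F(q)-q_f$ has an extremum, a single interior local maximum on $(0,q_f)$ --- and by the remark following that lemma, condition~(\ref{eq:horseshoe_condition}) forces such an extremum to be present. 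Hence, as soon as this maximum exceeds $q_f$, the intermediate value theorem yields precisely two crossings of $l_2$ by $U^{-1}(l_1)$ and none on the negative side, i.e.\ exactly the ternary topological horseshoe configuration. It thus remains to prove $\max_{0<q<q_f}\{F(q)-q_f\}>q_f$.

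For the core estimate I would evaluate at $q=q_1$, which lies in $(0,q_f)$: it is positive by definition and $q_1<q_f$ by hypothesis~(\ref{eq:horseshoe_condition2}). Since $V=\kappa(V_1+\varepsilon V_2)$, one has $F(q_1)-q_f=-2q_1-q_f+\kappa\bigl(V_1'(q_1)+\varepsilon V_2'(q_1)\bigr)$. Now $V_1'(q)=2\bigl(-q\,v'(q^2)\bigr)$, and by the definitions of $q_1$ and $M_1$ the function $-q\,v'(q^2)$ attains its maximum $M_1$ at $q=q_1$, so $V_1'(q_1)=2M_1$. For the cross term, $\varepsilon>0$ by~(\ref{eq:epsilon}) --- the monotone decrease of $v$ with $(q_f-q_b)^2<(q_f+q_b)^2$ gives $V_2'(q_f)<0$, while $V_1'(q_f)=-2q_f v'(q_f^2)>0$; likewise $V_2'(q_1)=-\bigl(v((q_1-q_b)^2)-v((q_1+q_b)^2)\bigr)<0$, and $|V_2'(q_1)|\le v((q_1-q_b)^2)\le v(0)=1$ since $0\le v\le 1$ for both the Gaussian and the Lorentzian choice. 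Hence $V_1'(q_1)+\varepsilon V_2'(q_1)\ge 2M_1-\varepsilon$.

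Finally I would assemble the pieces. Hypothesis~(\ref{eq:horseshoe_condition3}), $\varepsilon<\varepsilon_1=2M_1$, makes $2M_1-\varepsilon>0$, so $\kappa_1=4q_f/(2M_1-\varepsilon)$ is well defined and positive, and hypothesis~(\ref{eq:horseshoe_condition}), i.e.\ $\kappa(2M_1-\varepsilon)\ge 4q_f$, gives
\begin{eqnarray}
\max_{0<q<q_f}\{F(q)-q_f\} &\ge& F(q_1)-q_f \nonumber \\
&\ge& -2q_1-q_f+\kappa(2M_1-\varepsilon) \nonumber \\
&\ge& -2q_1-q_f+4q_f \;=\; 3q_f-2q_1 , \nonumber
\end{eqnarray}
and $3q_f-2q_1>q_f$ is exactly hypothesis~(\ref{eq:horseshoe_condition2}), $q_1<q_f$. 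This proves the scalar inequality and hence the topological horseshoe. The step requiring the most care is not the arithmetic at $q_1$ --- with the crude but amply sufficient bound $|V_2'(q_1)|\le 1$ it is immediate --- but the reduction in the first paragraph: one must use Lemma~\ref{lmm:extrema} to ensure that $U^{-1}(l_1)$ crosses $l_2$ exactly twice, and nowhere over $(-q_f,0)$, so that one obtains the intended ternary horseshoe rather than some other intersection pattern.
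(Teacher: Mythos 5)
Your proposal is correct and follows essentially the same route as the paper's proof: reduce to $\max_{0<q<q_f}\{F(q)-q_f\}>q_f$, invoke Lemma~\ref{lmm:extrema} for the two-crossing configuration, bound the $V_2$ contribution by $\kappa\varepsilon$ using $0\le v\le 1$, and use the maximum $M_1$ of $-q\,v'(q^2)$ at $q_1<q_f$ to arrive at $\kappa(2M_1-\varepsilon)\ge 4q_f$. The only cosmetic difference is that you evaluate at $q=q_1$ and keep $-2q_1$ until the end (using $q_1<q_f$ to close), whereas the paper bounds $-2q$ by $-2q_f$ immediately; the resulting condition on $\kappa$ is identical.
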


\begin{proof}
An explicit expression for the curve $F(q)$ is given as
\begin{eqnarray}
F (q) &=& -2q + V'(q)\nonumber \\
\fl
  &=& -2q - 2 \k q   v'(q^2) - \k \varepsilon \left( v((q-q_b)^2)  -  v((q+q_b)^2)  \right). 
\end{eqnarray}
From the form of $v(q^2)$, we easily find that
\begin{eqnarray}
\inf_{q \in \mathbb{R}} v(q^2)  =0 , \\
\sup_{q \in \mathbb{R}} v(q^2)  = 1. 
\end{eqnarray}
Since $- 1<v((q-q_b)^2) - v((q+q_b)^2)< 1$, the curve $F(q)$ is bounded as 
\begin{eqnarray}
F(q) > -2q - 2 \k q   v'(q^2) -\k \varepsilon. 
\label{eq:upper_F(q)}
\end{eqnarray}
From our assumption, the function $- q v'(q^2)$ takes the maximum at $q = q_1 < q_f$, 
which leads to
\begin{eqnarray}
\label{Prf:Prop_h_max}
\max_{0 < q < q_f} F(q)  >  -2q_f  + 2 \k M_1 - \k  \varepsilon.
\end{eqnarray}
If $\displaystyle \max_{0 < q < q_f} \{F(q)-q_f\} > q_f$ holds, 
using the lemma \ref{lmm:extrema}, 
we can say that
the curve $p = F(q)-q_f$ intersects with the line $p = q_f$ twice for $0 < q < q_f$. 
A sufficient condition for this situation is that 
the right-hand side of Eq.~(\ref{Prf:Prop_h_max}) is greater than $2q_f$, 
which is explicitly written as 
\begin{eqnarray}
\nonumber
 -4q_f  + 2 \k M_1 -\k  \varepsilon \geq 0. 
\end{eqnarray}
Together with the condition (\ref{eq:horseshoe_condition3}), 
we obtain our desired inequality (\ref{eq:horseshoe_condition}). 
\end{proof}

\noindent 
{\it Remark: } In \ref{app:Gaussian_qf_qb} and  \ref{app:Lorentzian_qf_qb}, we examine the condition (\ref{eq:horseshoe_condition3})

\section{Nonwandering set and the filtration property}
\setcounter{equation}{0}
In order to show that 
the nonwandering set $\Omega(U)$ of the system is uniformly hyperbolic, we here prove 
that the nonwandering set $\Omega(U)$ is a subset of $R \cap U^{-1}(R)$ by 
showing that the complement of $R \cap U^{-1}(R)$ is wandering. 
To this end, we introduce the following regions, 
\begin{eqnarray}
\mathcal{O}^{+} = \{(q,p)\: | \: q > q_f, p > -q \} ,  \\
\mathcal{O}^{-} = \{(q,p)\: | \: q < -q_f, p < - q \} , \\
\mathcal{I}^{+} = \{(q,p)\: | \: q > q_f, p < -q \} , \\
\mathcal{I}^{-} = \{(q,p)\: | \: q < -q_f, p > - q \}. 
\end{eqnarray}
As was shown in \cite{NS18}, 
these regions have the following properties:  
\begin{lmm} 
\label{lemma4_1}
As mentioned in Section \ref{sec:scattering_map}, 
the following holds for the potential function $V(q)$:
\begin{eqnarray}
V'(q) < 0 ~~{\rm for}~~ q > q_f, 
\end{eqnarray}
which automatically implies $V'(q) > 0$ for $q < - q_f$ because of the 
symmetry of the potential function. 
We then have the filtration property: 
\begin{description} 
\item[a)] $U(\mathcal{O}^{+}) \subset \mathcal{O}^{+}$ and $U(\mathcal{O}^{-}) \subset \mathcal{O}^{-}$. 
\item[b)] $q_n \in \mathcal{O}^{+}$ is strictly increasing, 
and $q_n \in \mathcal{O}^{-}$ is strictly decreasing under the forward iteration of the map $U$. 
\item[c)] $U^{-1}(\mathcal{I}^{+}) \subset \mathcal{I}^{+}$ and $U^{-1}(\mathcal{I}^{-}) \subset \mathcal{I}^{-}$.
\item[d)]$q_n \in \mathcal{I}^{+}$ is strictly increasing, 
and $q_n \in \mathcal{I}^{-}$ is strictly decreasing under the backward iteration of the map $U$. 
\end{description}
\label{prp:filtration}
\end{lmm}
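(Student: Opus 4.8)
The plan is to reduce all eight inclusions/monotonicity claims to the central symmetry of the map together with two elementary sign computations. Since $V(q)=V(-q)$ we have $V'(-q)=-V'(q)$, and a one-line check on (\ref{eq:Map}) shows that $s(q,p):=(-q,-p)$ conjugates $U$ to itself, $s\circ U=U\circ s$ (hence also $s\circ U^{-1}=U^{-1}\circ s$). A direct inspection of the defining inequalities shows $s(\mathcal{O}^{+})=\mathcal{O}^{-}$ and $s(\mathcal{I}^{+})=\mathcal{I}^{-}$. Therefore it is enough to establish a), b) for $\mathcal{O}^{+}$ and c), d) for $\mathcal{I}^{+}$; the corresponding statements for $\mathcal{O}^{-}$ and $\mathcal{I}^{-}$ follow by applying $s$. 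The only non-trivial input is the already-established one-sided bound $V'(q)<0$ for $q>q_f$ (Section~\ref{sec:scattering_map}, proven in \ref{app:Gaussian_out} and \ref{app:Lorentzian_out}), which by symmetry gives $V'(q)>0$ for $q<-q_f$.

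For a) and b), I would take $(q_n,p_n)\in\mathcal{O}^{+}$, so $q_n>q_f>0$ and $q_n+p_n>0$, and write (\ref{eq:Map}) in the form
\begin{eqnarray}
q_{n+1}=q_n+\bigl[(q_n+p_n)-V'(q_n)\bigr],\qquad p_{n+1}=-q_n .
\end{eqnarray}
The bracket is strictly positive because $q_n+p_n>0$ and $V'(q_n)<0$, whence $q_{n+1}>q_n>q_f$; moreover $q_{n+1}+p_{n+1}=(q_n+p_n)-V'(q_n)>0$, i.e. $p_{n+1}>-q_{n+1}$. Thus $U(\mathcal{O}^{+})\subset\mathcal{O}^{+}$, and iterating this invariance shows that along any forward orbit starting in $\mathcal{O}^{+}$ one has $q_n+p_n>0$ at every step, so $q_{n+1}-q_n=(q_n+p_n)-V'(q_n)>0$ for all $n$, which is b).

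For c) and d), I would use the inverse map (\ref{eq:trans_map1_inverse}) in the form $U^{-1}(q,p)=\bigl(-p,\;q+2p+V'(-p)\bigr)$. Taking $(q,p)\in\mathcal{I}^{+}$, so $q>q_f>0$ and $q+p<0$, gives $-p>q>q_f$; hence the first coordinate of $U^{-1}(q,p)$ exceeds $q_f$, and since $-p>q_f$ we have $V'(-p)<0$, so the sum of the two new coordinates is $\bigl(q+2p+V'(-p)\bigr)+(-p)=q+p+V'(-p)<0$. Therefore $U^{-1}(q,p)\in\mathcal{I}^{+}$, and iterating gives c). For d), note that the first coordinate of $U^{-1}(q,p)$ is exactly $-p$, which is strictly larger than $q$ because $p<-q$; combined with the invariance from c), this shows that $q$ strictly increases along every backward orbit contained in $\mathcal{I}^{+}$.

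I do not anticipate a real obstacle: the whole argument is a short chain of strict sign inequalities. The two points that require a little care are (i) invoking the already-proven behaviour of $V'$ outside $[-q_f,q_f]$ with the correct sign on each side, and (ii) using the invariance statements a) and c) before asserting the monotonicity statements b) and d) for the entire forward (resp. backward) semi-orbit rather than for a single iterate. Reducing the $\mathcal{O}^{-},\mathcal{I}^{-}$ cases to the $\mathcal{O}^{+},\mathcal{I}^{+}$ cases via the conjugating involution $s$ is what keeps the bookkeeping minimal.
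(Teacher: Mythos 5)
Your argument is correct and essentially the same as the paper's: a direct sign computation showing that $q_{n+1}-q_n=(q_n+p_n)-V'(q_n)>0$ and $q_{n+1}+p_{n+1}>0$ on $\mathcal{O}^{+}$, with the remaining regions handled by symmetry. The only difference is that you verify the $\mathcal{I}^{+}$ case explicitly via the inverse map rather than deferring it to "symmetry" as the paper does, which is if anything slightly more complete.
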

\begin{proof}
Below we give the proof provided the condition (\ref{eq:cond_outgoing}) holds.
From the condition (\ref{eq:cond_outgoing}),
for $(q_{n},p_{n}) \in \mathcal{O}^{+}$ 
\begin{eqnarray}
q_{n+1} +p_{n+1}= q_{n} + p_{n} - V'(q_n) >  q_{n} + p_{n} =: \Delta > 0 
\end{eqnarray}
holds, and 
\begin{eqnarray}
q_{n+1} =2 q_{n} + p_{n} - V'(q_n) >  q_{n}  + (q_{n} + p_{n}) = q_{n}  + \Delta > q_f, 
\end{eqnarray}
which shows that $U(\mathcal{O}^{+}) \subset \mathcal{O}^{+}$. In addition, conbining both equations iteratively we have:
\begin{eqnarray}
q_{n+m} > q_{n}  + m \Delta. 
\end{eqnarray}
Using the symmetry, the statements for $\mathcal{O}^{-},\mathcal{I}^{+}$ and $\mathcal{I}^{-}$
immediately follow.
\end{proof}

%%%%%%%%%%%%%%%%%%%%%%%%%%%%%%%%%%%%
\begin{figure}[H]
        \centering
	\includegraphics[width = 10.0cm,bb =  0 0 461 346]{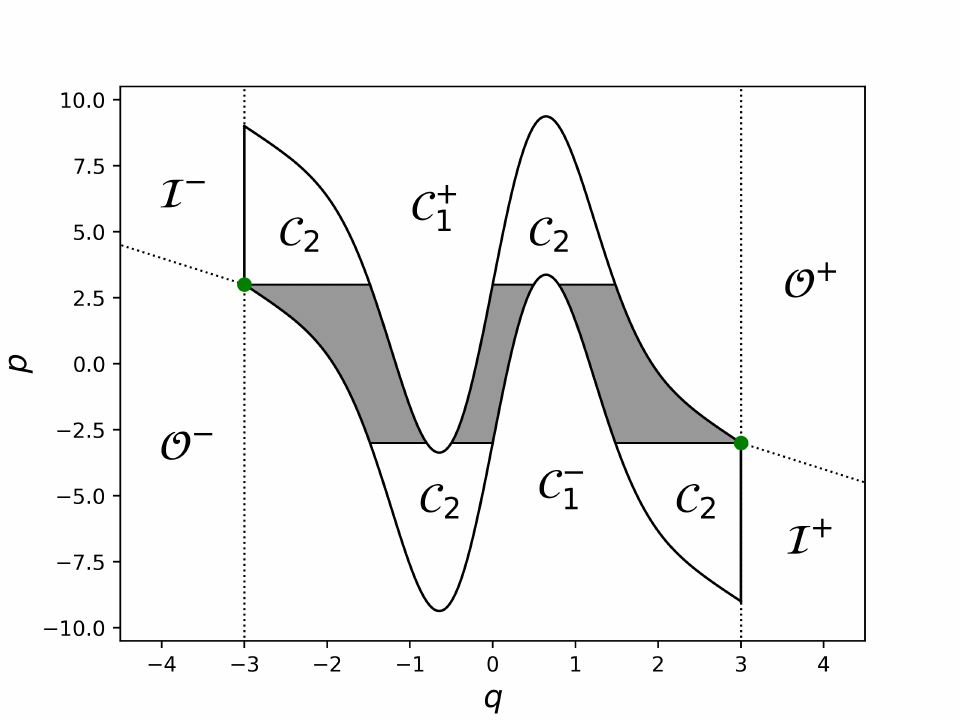}
	\caption{
	The set $R \cap U^{-1}(R)$ (gray) and its complement.
	The complement is a union of the sets 
	 $\mathcal{C}_{1}^{\pm},\mathcal{C}_{2},\mathcal{O}^{\pm}$ and $\mathcal{I}^{\pm}$. 
	The region $\mathcal{C}_{1}^{+}$ (resp.  $\mathcal{C}_{1}^{-}$) is mapped into the region $\mathcal{O}^{+}$ (resp. $\mathcal{O}^{-}$) under forward iteration. The region $\mathcal{C}_{2}$ is mapped into the region $(R \cap U^{-1}(R)) \cup \mathcal{C}_{1}^{+} \cup \mathcal{C}_{1}^{-}$ under the forward iteration, meaning that the points contained in the set $\mathcal{C}_{2}$ either stay in $R \cap U^{-1}(R)$ or go out to $\mathcal{O}^{\pm}$ under more than one-step the forward iterations. 
}
	\label{fig:filtration1}
\end{figure} 
%%%%%%%%%%%%%%%%%%%%%%%%%%%%%%%%%%%%
%%%%%%%%%%%%%%%%%%%%%%%%%%%%%%%%%%%%
\begin{figure}[H]
        \centering
\includegraphics[width = 10.0cm,bb =  0 0 461 346]{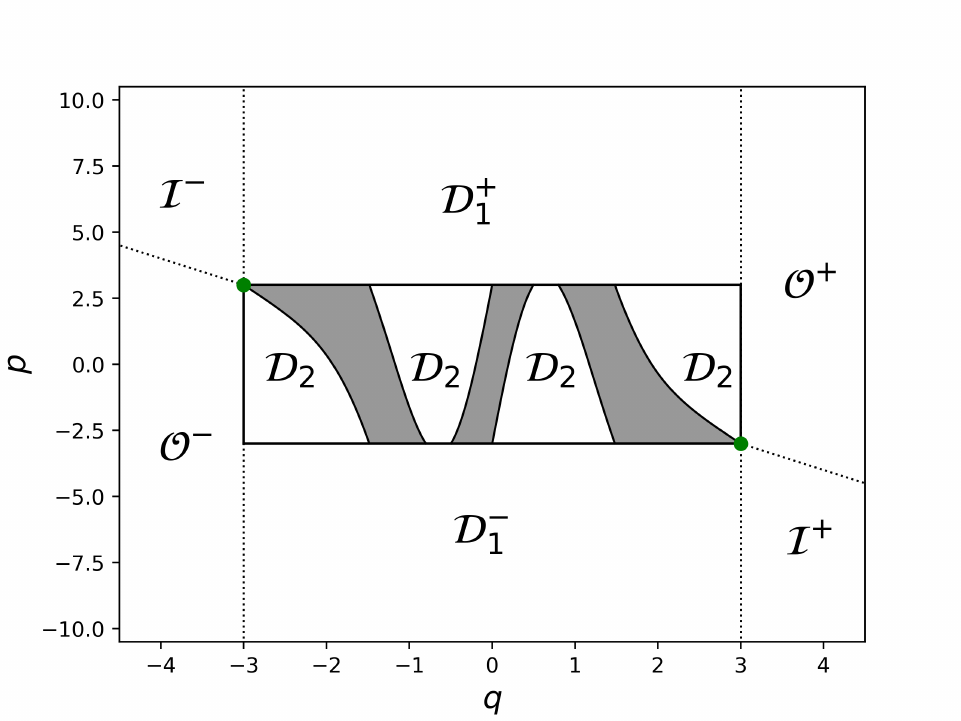}
\caption{
	The set $R \cap U^{-1}(R)$ (gray) and its complement.
	The complement is a union of the sets 
	$\mathcal{D}_{1}^{\pm},\mathcal{D}_{2},\mathcal{O}^{\pm}$ and $\mathcal{I}^{\pm}$. 
	The region $\mathcal{D}_{1}^{+}$ (resp.  $\mathcal{D}_{1}^{-}$) is mapped into the region $\mathcal{I}^{+}$ (resp. $\mathcal{I}^{-}$) under the backward iteration. The region $\mathcal{D}_{2}$ is mapped into the region $(R \cap U^{-1}(R)) \cup \mathcal{D}_{1}^{+} \cup \mathcal{D}_{1}^{-}$ under the backward iteration, meaning that the points contained in the set $\mathcal{D}_{2}$ either stay in $R \cap U^{-1}(R)$ or go out to $\mathcal{I}^{\pm}$ under more then one-step backward iterations. 
}
	\label{fig:filtration2}
\end{figure} 
%%%%%%%%%%%%%%%%%%%%%%%%%%%%%%%%%%%%
\par
Next we consider the behavior of the internal region $\{ (q,p) \: | -q_f < q < q_f \}$ under the iteration. 
To this end we focus on the forward and backward image of the complement 
of $R \cap U^{-1}(R)$, respectively. 
As shown in Fig. \ref{fig:filtration1}, we introduce subsets ($\mathcal{C}_{1}^{\pm},\mathcal{C}_{2})$ 
as the forward image of the complement of $R \cap U^{-1}(R)$: 
\begin{eqnarray}
\mathcal{C}_{1}^{+} = \{(q,p)\: | \: -q_f < q < q_f, p > F(q) +  q_f \} , \\
\mathcal{C}_{1}^{-} = \{(q,p)\: | \: -q_f < q < q_f, p < F(q) - q_f \} , \\
\mathcal{C}_{2} = U^{-1}(R) \backslash (R \cap U^{-1}(R)). 
\end{eqnarray}
Since $\mathcal{C}_{2} \subset U^{-1}(R)$, $U(\mathcal{C}_{2}) \subset R $ holds, $U(\mathcal{C}_{2}) \subset R \subset (R \cap U^{-1}(R))\cup \mathcal{C}_{1}^{+} \cup \mathcal{C}_{1}^{-} $ 
follows. 
As for $\mathcal{C}_{1}^{\pm}$, we have the following lemma:
\begin{lmm}
\label{lemma4_2}
If the condition (\ref{eq:cond_outgoing}) is satisfied, then 
\begin{eqnarray}
U(\mathcal{C}_1^{+}) \subset \mathcal{O}^{+} \quad and \quad U(\mathcal{C}_1^{-}) \subset \mathcal{O}^{-}. 
\end{eqnarray}
\end{lmm}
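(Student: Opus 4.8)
The plan is to argue that a point in $\mathcal{C}_1^{+}$ lands, after one application of $U$, in the set $\mathcal{O}^{+} = \{(q,p) \mid q > q_f,\ p > -q\}$, and then invoke the symmetry $V(q)=V(-q)$ of the potential to deduce the statement for $\mathcal{C}_1^{-}$. Recall from the definition $\mathcal{C}_1^{+} = \{(q,p) \mid -q_f < q < q_f,\ p > F(q) + q_f\}$ with $F(q) = -2q + V'(q)$, so a point $(q_n,p_n)\in\mathcal{C}_1^{+}$ satisfies $p_n > -2q_n + V'(q_n) + q_f$. First I would compute the image $(q_{n+1},p_{n+1}) = U(q_n,p_n) = (2q_n + p_n - V'(q_n),\ -q_n)$. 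From the defining inequality for $\mathcal{C}_1^{+}$,
\begin{eqnarray}
q_{n+1} = 2q_n + p_n - V'(q_n) > 2q_n + \bigl(-2q_n + V'(q_n) + q_f\bigr) - V'(q_n) = q_f,
\end{eqnarray}
so the first coordinate condition $q_{n+1} > q_f$ of $\mathcal{O}^{+}$ holds immediately, using only the definition of $\mathcal{C}_1^{+}$ and not even the outgoing condition (\ref{eq:cond_outgoing}).

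Next I would check the second condition $p_{n+1} > -q_{n+1}$, i.e. $-q_n > -q_{n+1}$, equivalently $q_{n+1} > q_n$. Since $q_{n+1} - q_n = q_n + p_n - V'(q_n)$, this amounts to showing $q_n + p_n > V'(q_n)$. Here is where I expect to use the hypothesis (\ref{eq:cond_outgoing}) together with the location of $q_n$. If $q_n \ge q_f$ is impossible inside $\mathcal{C}_1^{+}$ (there $q_n < q_f$), so $q_n \in (-q_f, q_f)$; one splits into the cases $q_n \in [q_f-\text{something}]$... more precisely, for $0 \le q_n < q_f$ we want $q_n + p_n - V'(q_n) > 0$, and from $p_n > -2q_n + V'(q_n) + q_f$ we get $q_n + p_n - V'(q_n) > -q_n + q_f > 0$; while for $-q_f < q_n < 0$ the same chain gives $q_n+p_n-V'(q_n) > -q_n + q_f > q_f > 0$. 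In fact the single estimate $q_n + p_n - V'(q_n) > q_f - q_n > 0$ covers the whole range $-q_f < q_n < q_f$, so $q_{n+1} > q_n$ and hence $p_{n+1} = -q_n > -q_{n+1}$, establishing $(q_{n+1},p_{n+1})\in\mathcal{O}^{+}$.

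Therefore $U(\mathcal{C}_1^{+}) \subset \mathcal{O}^{+}$, and applying the reflection $(q,p)\mapsto(-q,-p)$, which conjugates $U$ to itself by virtue of $V'(-q) = -V'(q)$ and interchanges $\mathcal{C}_1^{+}\leftrightarrow\mathcal{C}_1^{-}$ and $\mathcal{O}^{+}\leftrightarrow\mathcal{O}^{-}$, yields $U(\mathcal{C}_1^{-}) \subset \mathcal{O}^{-}$. The argument is essentially a one-line substitution once the defining inequality of $\mathcal{C}_1^{+}$ is written out; the only place where care is needed — and the one mild obstacle — is making sure that the bound $q_n + p_n - V'(q_n) > q_f - q_n$ is genuinely valid for all $q_n$ in the open interval and not merely on the half where the geometry is obvious, and that the strictness of the inequality $p_n > F(q_n)+q_f$ is preserved through to the final strict inclusion. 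Since (\ref{eq:cond_outgoing}) was already used to guarantee $q_{n+1}>q_f$ keeps the orbit in the outgoing channel under further iteration (via Lemma \ref{prp:filtration}a)), no separate invocation of it is strictly needed for this one-step statement, but I would cite it for consistency with the hypotheses as stated.
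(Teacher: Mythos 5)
Your proof is correct and follows essentially the same route as the paper: compute the image directly, use the defining inequality $p_n > F(q_n)+q_f$ to get $q_{n+1}>q_f$, and then deduce $p_{n+1}=-q_n>-q_{n+1}$ from $q_n<q_f<q_{n+1}$ (the paper phrases this last step via $-q_f<p_{n+1}<q_f$, which is trivially equivalent), with the $\mathcal{C}_1^{-}$ case handled by the $(q,p)\mapsto(-q,-p)$ symmetry. Your observation that the one-step inclusion uses only the definition of $\mathcal{C}_1^{+}$ and not condition (\ref{eq:cond_outgoing}) is also accurate.
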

\begin{proof}
For $(q_n,p_n) \in \mathcal{C}_1^{+}$, we have 
\begin{eqnarray}
q_{n+1} = 2q_n + p_n - V'(q_n) > 2q_n + F(q_n) + q_f - V'(q_n)  =  q_f,  \quad %(\because p_n > F_2(q))
\end{eqnarray}
and 
\begin{eqnarray}
p_{n+1} = -q_n. 
\end{eqnarray}
Therefore, since $-q_f < q_n < q_f$, $-q_f < p_{n+1} < q_f$ holds. 
This implies $(q_{n+1},p_{n+1}) \in \mathcal{O}^{+}$. 
One similarly shows that $U(\mathcal{C}_1^{-}) \subset \mathcal{O}^{-}$.
\end{proof}
In a similar way, Fig. \ref{fig:filtration2} illustrates subsets ($\mathcal{D}_{1}^{\pm},\mathcal{D}_{2}$), 
which are introduced as the backward image of the complement of $R \cap U^{-1}(R)$: 
\begin{eqnarray}
\mathcal{D}_{1}^{+} = \{(q,p)\: | \: -q_f < q < q_f, p > q_f \},  \\
\mathcal{D}_{1}^{-} = \{(q,p)\: | \: -q_f < q < q_f, p < - q_f \},  \\
\mathcal{D}_{2} = R \backslash (R \cap U^{-1}(R)). 
\end{eqnarray}
Since $\mathcal{D}_{2} \subset R$ and $U^{-1}(\mathcal{D}_{2}) \subset U^{-1}(R) $ hold, 
$U^{-1}(\mathcal{D}_{2}) \subset (R \cap U^{-1}(R))\cup \mathcal{D}_{1}^{+} \cup \mathcal{D}_{1}^{-}$ follows. 
As for $\mathcal{D}_{1}^{\pm}$, we have the lemma:
\begin{lmm}
\label{lemma4_3}
If the condition (\ref{eq:cond_outgoing}) is satisfied, then 
\begin{eqnarray}
U^{-1}(\mathcal{D}_1^{+}) \subset \mathcal{I}^{-} \quad and \quad U^{-1}(\mathcal{D}_1^{-}) \subset \mathcal{I}^{+} . 
\end{eqnarray}
\end{lmm}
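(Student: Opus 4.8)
The plan is to follow the pattern of the proof of Lemma \ref{lemma4_2}, using the explicit formula (\ref{eq:trans_map1_inverse}) for $U^{-1}$ together with the sign condition (\ref{eq:cond_outgoing}) on $V'$ outside the strip $[-q_f,q_f]$. Take a point of $\mathcal{D}_1^{+}$ and write it as $(q_{n+1},p_{n+1})$ with $-q_f<q_{n+1}<q_f$ and $p_{n+1}>q_f$; its preimage $(q_n,p_n)=U^{-1}(q_{n+1},p_{n+1})$ then satisfies $q_n=-p_{n+1}$ and $p_n=q_{n+1}+2p_{n+1}+V'(q_n)$, so that $p_n+q_n=q_{n+1}+p_{n+1}+V'(q_n)$.

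First I would verify the half-plane membership: $q_n=-p_{n+1}<-q_f$, which is precisely the condition $q<-q_f$ appearing in the definition of $\mathcal{I}^{-}$. It remains to check $p_n>-q_n$, i.e.\ $p_n+q_n>0$, i.e.\ $q_{n+1}+p_{n+1}+V'(q_n)>0$. Here $q_{n+1}+p_{n+1}>-q_f+q_f=0$ by the defining inequalities of $\mathcal{D}_1^{+}$, and $V'(q_n)>0$ because $q_n<-q_f$ and (\ref{eq:cond_outgoing}) implies, via the symmetry $V(q)=V(-q)$, that $V'>0$ on $(-\infty,-q_f)$. Adding the two strict inequalities gives $p_n+q_n>0$, hence $(q_n,p_n)\in\mathcal{I}^{-}$, and so $U^{-1}(\mathcal{D}_1^{+})\subset\mathcal{I}^{-}$.

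The inclusion $U^{-1}(\mathcal{D}_1^{-})\subset\mathcal{I}^{+}$ follows by the same computation with all inequalities reversed (equivalently, by the symmetry of the map induced by $V(q)=V(-q)$): a point $(q_{n+1},p_{n+1})\in\mathcal{D}_1^{-}$ has $-q_f<q_{n+1}<q_f$ and $p_{n+1}<-q_f$, so its preimage has $q_n=-p_{n+1}>q_f$ and $p_n+q_n=q_{n+1}+p_{n+1}+V'(q_n)<0$, using $q_{n+1}+p_{n+1}<q_f-q_f=0$ together with $V'(q_n)<0$ from (\ref{eq:cond_outgoing}); hence $(q_n,p_n)\in\mathcal{I}^{+}$.

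I do not expect a genuine obstacle here: the lemma is a direct sign computation and, as with Lemmas \ref{lemma4_1} and \ref{lemma4_2}, everything reduces to the single hypothesis (\ref{eq:cond_outgoing}). The only point demanding a little care is that the regions $\mathcal{I}^{\pm}$ are bounded by the line $p=-q$ rather than by the horizontal lines $p=\pm q_f$, so one must check both the half-plane condition on $q$ and the inequality relating $p$ and $-q$ separately, and must keep track of the cross-over in signs (so that $\mathcal{D}_1^{+}\mapsto\mathcal{I}^{-}$ and $\mathcal{D}_1^{-}\mapsto\mathcal{I}^{+}$, exactly as in the statement).
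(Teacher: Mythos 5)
Your proof is correct and follows essentially the same route as the paper: apply the explicit inverse map, note $q_n=-p_{n+1}<-q_f$, and use (\ref{eq:cond_outgoing}) (via the symmetry, $V'>0$ on $q<-q_f$) to get $p_n>-q_n$; the paper merely organizes the sign computation slightly differently, bounding $q_{n+1}>q_n$ first rather than computing $p_n+q_n$ directly.
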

\begin{proof}
For $(q_n,p_n) \in \mathcal{D}_1^{+}$, we have 
\begin{eqnarray}
q_{n-1} = - p_n < -q_f < q_n. %\quad (\because p_n < F_4(q)) 
\end{eqnarray}
Combining this with the condition (\ref{eq:cond_outgoing}), this leads to 
\begin{eqnarray}
p_{n-1} &=& q_n + 2p_n + V'(q_{n-1})\nonumber \\
&=& q_n - 2q_{n-1}+ V'(q_{n-1}) \nonumber \\
&>& q_{n-1} - 2q_{n-1}+ V'(q_{n-1})  > -q_{n-1}.
\nonumber
\end{eqnarray}
This implies $(q_{n-1},p_{n-1}) \in \mathcal{I}^{-}$. 
One similarly shows that $U^{-1}(\mathcal{D}_1^{-}) \subset \mathcal{I}^{+}$.
\end{proof}
Considering the behavior of the complement of  $R \cap U^{-1}(R)$ under respectively forward and backward iterations of $U$, 
we reach the following. 
\begin{prp}
\label{prp:nonwandering}
If the condition (\ref{eq:cond_outgoing}) is satisfied, 
$\Omega(U) \subset R \cap U^{-1}(R)$ holds. 
\end{prp}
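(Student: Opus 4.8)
The plan is to prove that every point of the complement $\mathbb{R}^{2}\setminus(R\cap U^{-1}(R))$ is wandering; since $\Omega(U)$ is the set of nonwandering points, this gives $\Omega(U)\subset R\cap U^{-1}(R)$. I will use the elementary criterion that a point $x$ is wandering as soon as it has an open neighbourhood $N$ with $U^{n}(N)\cap N=\emptyset$ for all $n\geq 1$, together with the standard fact $\Omega(U)=\Omega(U^{-1})$, so that it is equally admissible to exhibit such an $N$ for the iterates of $U^{-1}$. Assuming (\ref{eq:cond_outgoing}) throughout, it then suffices to show, for each point of the complement, that either its forward or its backward orbit leaves and never re-enters a suitable neighbourhood.

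First I would fix the covering of the complement read off from Figs.~\ref{fig:filtration1} and \ref{fig:filtration2}: up to finitely many one-dimensional arcs (the boundary curves of $R$ and of $U^{-1}(R)$, i.e.\ pieces of $l_{1},\ldots,l_{4}$ and of the graphs $p=F(q)\pm q_{f}$, together with the lines $p=-q$ outside the strip), the complement is the union of $\mathcal{O}^{\pm},\mathcal{I}^{\pm},\mathcal{C}_{1}^{\pm},\mathcal{C}_{2}$ and also of $\mathcal{O}^{\pm},\mathcal{I}^{\pm},\mathcal{D}_{1}^{\pm},\mathcal{D}_{2}$. Since $\mathcal{C}_{2}=U^{-1}(R)\setminus R$ lies in the strip $\{-q_{f}\leq q\leq q_{f}\}$ and avoids $R$, it is contained in $\mathcal{D}_{1}^{+}\cup\mathcal{D}_{1}^{-}$ (again up to boundary arcs); dually $\mathcal{D}_{2}=R\setminus U^{-1}(R)\subset\mathcal{C}_{1}^{+}\cup\mathcal{C}_{1}^{-}$. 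So it suffices to treat the seven open regions $\mathcal{O}^{\pm},\mathcal{I}^{\pm},\mathcal{C}_{1}^{\pm},\mathcal{D}_{1}^{\pm}$ and, separately, the arcs.

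For $x\in\mathcal{O}^{+}$: by Lemma~\ref{lemma4_1}(a),(b), $\mathcal{O}^{+}$ is forward invariant and $\Delta:=q_{n}+p_{n}>0$ is nondecreasing along orbits with $q_{n+m}>q_{n}+m\Delta$. Taking a bounded open $N_{0}\ni x$ with $\overline{N_{0}}\subset\mathcal{O}^{+}$, setting $\delta:=\inf_{\overline{N_{0}}}(q+p)>0$, and then choosing $N\subset N_{0}$ open with $q$-diameter $<\delta$, one gets $q(U^{n}(y))\geq q(y)+n\delta$ for $y\in N$, hence $U^{n}(N)\cap N=\emptyset$ for all $n\geq 1$; the same works for $\mathcal{O}^{-}$, and verbatim with $U^{-1}$ and Lemma~\ref{lemma4_1}(c),(d) for $\mathcal{I}^{\pm}$. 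For $x\in\mathcal{C}_{1}^{\pm}$: Lemma~\ref{lemma4_2} gives $U(\mathcal{C}_{1}^{\pm})\subset\mathcal{O}^{\pm}$, so by forward invariance $U^{n}(N)\subset\mathcal{O}^{\pm}$ for any $N\subset\mathcal{C}_{1}^{\pm}$ and all $n\geq 1$, which is disjoint from $N$ since $\mathcal{C}_{1}^{\pm}\cap\mathcal{O}^{\pm}=\emptyset$; this also settles $\mathcal{D}_{2}$. For $x\in\mathcal{D}_{1}^{\pm}$: Lemma~\ref{lemma4_3} together with Lemma~\ref{lemma4_1}(c) gives $U^{-n}(N)\subset\mathcal{I}^{\mp}$ for any $N\subset\mathcal{D}_{1}^{\pm}$ and all $n\geq 1$, disjoint from $N$; this also settles $\mathcal{C}_{2}$. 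Finally, each of the separating arcs is carried by one or two applications of $U$ or $U^{-1}$ into one of the open escape regions already treated (a short direct check from (\ref{eq:cond_outgoing}) and the explicit edge formulas), so none meets $\Omega(U)$. Collecting the cases, $\Omega(U)\cap(\mathbb{R}^{2}\setminus(R\cap U^{-1}(R)))=\emptyset$.

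The step I expect to be the real obstacle is $\mathcal{C}_{2}$ (and symmetrically $\mathcal{D}_{2}$). Unlike the outer regions, the \emph{forward} orbit of a point in $\mathcal{C}_{2}$ need not escape: $U(\mathcal{C}_{2})\subset(R\cap U^{-1}(R))\cup\mathcal{C}_{1}^{+}\cup\mathcal{C}_{1}^{-}$, so such a point may fall into $R\cap U^{-1}(R)$ and remain bounded forever (precisely the content of the caption of Fig.~\ref{fig:filtration1}), so the naive ``iterate forward until you hit an $\mathcal{O}$-region'' argument fails here. The way around it is the inclusion $\mathcal{C}_{2}\subset\mathcal{D}_{1}^{+}\cup\mathcal{D}_{1}^{-}$, which lets Lemma~\ref{lemma4_3} push the \emph{backward} orbit out to $\mathcal{I}^{\mp}$; invoking this is legitimate only because $\Omega(U)=\Omega(U^{-1})$, so the wandering property may be tested with $U^{-1}$. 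A secondary, purely bookkeeping, nuisance is the treatment of the various separating arcs, which have to be checked one at a time but raise no conceptual difficulty.
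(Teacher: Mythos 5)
Your proposal is correct and follows essentially the same route as the paper: the paper's proof is a one-line appeal to Lemmas \ref{lemma4_1}--\ref{lemma4_3} over the very same decomposition of the complement into $\mathcal{O}^{\pm},\mathcal{I}^{\pm},\mathcal{C}_{1}^{\pm},\mathcal{C}_{2}$ (resp.\ $\mathcal{D}_{1}^{\pm},\mathcal{D}_{2}$), and you have merely supplied the neighbourhood constructions it leaves implicit. Your only deviation is to dispatch $\mathcal{C}_{2}$ via the inclusion $\mathcal{C}_{2}\subset\mathcal{D}_{1}^{+}\cup\mathcal{D}_{1}^{-}$ and backward escape (using $\Omega(U)=\Omega(U^{-1})$) rather than the paper's forward inclusion $U(\mathcal{C}_{2})\subset R\subset(R\cap U^{-1}(R))\cup\mathcal{C}_{1}^{+}\cup\mathcal{C}_{1}^{-}$; both close the argument, since the forward orbit of any point of $R$ stays in $R\cup\mathcal{O}^{+}\cup\mathcal{O}^{-}$ and so never revisits $\mathcal{C}_{2}$.
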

\begin{proof}
From Lemma \ref{lemma4_1}, Lemma \ref{lemma4_2} and Lemma \ref{lemma4_3},
we can say that the complement of the set $R \cap U^{-1}(R)$ is wandering. 
\end{proof}

\section{Conley-Moser Conditions}
\setcounter{equation}{0}
Following the argument presented in the book of Wiggins \cite{Wiggins}, we 
here introduce a sufficient condition, the so-called Conley-Moser condition, 
for the invertible 2-dimensional map to have 
the conjugation with the symbolic dynamics, which is given by a full shift on 
a finite number of symbolic space in general. 
Below we first provide two conditions, referred to as the Assumptions 1 and 2, 
and then derive a sufficient condition for our map $U$, under which 
the Assumptions 1 and 2 are both fulfilled. 
We will call the sufficient condition derived for the map $U$ 
in such a way the {\it sector condition} hereafter. 
\subsection{Outline of the Conley-Moser conditions\cite{Wiggins}}
\begin{dfn}[Vertical and horizontal curve]
A $\mu_v$-verical curve is the graph of a function $q = v(p)$ which satisfies
\begin{eqnarray}
\fl
\quad -q_f \leq v(p) \leq q_f,\quad|v(p_1)-v(p_2)|\leq \mu_v |p_1 - p_2|\quad for\: -q_f\leq p_1,p_2 \leq q_f.
\end{eqnarray}
Similarly, a $\mu_h$-horizontal curve is the graph of a function $p = h(q)$ which satisfies
\begin{eqnarray}
\fl
\quad -q_f \leq h(q) \leq q_f,\quad|h(q_1)-h(q_2)|\leq \mu_h |q_1 - q_2|\quad for\: -q_f\leq q_1,q_2 \leq q_f.
\end{eqnarray}
\end{dfn}
\begin{dfn}[Vertical and horizontal strip]
Given two nonintersecting $\mu_v$-vertical curves $v_1(p) <v_2(p) (p \in [-q_f,q_f]$),
a $\mu_v$-vertical strip is the region $V$ which satisfies
\begin{eqnarray}
V := \{(q,p)|\: v_1(p)<q < v_2(p) ,\: -q_f<p<q_f\}.
\end{eqnarray}
Similarly, given two nonintersecting $\mu_h$-horizontal curves $h_1(q) <h_2(q) (q \in [-q_f,q_f]$),
a $\mu_h$-horizontal strip is the region $H$ which satisfies
\begin{eqnarray}
H := \{(q,p) \: |\: h_1(q)<p<h_2(q),\: -q_f<q<q_f\}.
\end{eqnarray}
We further define
\begin{eqnarray}
H_{ij} := V_i \cap U(V_j)
\end{eqnarray}
and
\begin{eqnarray}
V_{ji} := U^{-1}(V_i)\cap V_j = U^{-1}(H_{ji})
\end{eqnarray}
for $i,j \in S$ where $S = \{ 1,...,N \}$. We then introduce the unions as
\begin{eqnarray}
\mathcal{H} := \bigcup_{i,j \in S} H_{ij},\quad \mathcal{V} := \bigcup_{i,j \in S} V_{ji}.
\end{eqnarray}
\end{dfn}
\begin{dfn}[Definition of the cone field (sector bundle)]
For any point $(q_n,p_n)$ in the phase space, and the associated tangent space  $(\xi_n, \eta_n)$, we define the $\mu_v$-stable cone field along p-axis at $(q_n,p_n)$ as follows: 
\begin{eqnarray}
\mathcal{S}_{(q_n,p_n)}^{s} := \{(\xi_n,\eta_n) \:|\: |\xi_n| \leq \mu_v |\eta_n| \}.
\end{eqnarray}
Similarly, the $\mu_h$-unstable cone field along q-axis at $(q_n,p_n)$ is defined as follows: 
\begin{eqnarray}
\mathcal{S}_{(q_n,p_n)}^{u} := \{(\xi_n,\eta_n) \:|\:  |\eta_n| \leq \mu_h |\xi_n| \}.
\end{eqnarray}
Then, we introduce the $\mu_v$-stable and $\mu_h$-unstable cone fields for $\mathcal{H}$ and $\mathcal{V}$ in the following way:
\begin{eqnarray}
\mathcal{S}_{\mathcal{H}}^{s} := \bigcup_{(q_n,p_n)\in \mathcal{H}} \mathcal{S}_{(q_n,p_n)}^{s},\quad 
\mathcal{S}_{\mathcal{V}}^{s} := \bigcup_{(q_n,p_n)\in \mathcal{V}} \mathcal{S}_{(q_n,p_n)}^{s}, \\
\mathcal{S}_{\mathcal{H}}^{u} := \bigcup_{(q_n,p_n)\in \mathcal{H}} \mathcal{S}_{(q_n,p_n)}^{u},\quad 
\mathcal{S}_{\mathcal{V}}^{u} := \bigcup_{(q_n,p_n)\in \mathcal{V}} \mathcal{S}_{(q_n,p_n)}^{u}.
\end{eqnarray}
\end{dfn}
\begin{thm}[Wiggins \cite{Wiggins}] 
\label{Wiggins} 
If the map $U$ satisfies following assumptions,
\begin{description} 
\setlength{\leftskip}{1.0cm}
  \item[Assumption 1]\mbox{}\\ 
  $0\leq \mu_v \mu_h < 1$ and $U$ maps $V_i$ homeomorphically onto $H_i$, 
  that is 
  $U(V_i) = H_i$ for $i = 1,\cdots,N$. Moreover, the horizontal boundaries of $V_i$ map to the horizontal boundaries of $H_i$ and the vertical boundaries of $V_i$ map to the vertical boundaries of $H_i$. 
  \item[Assumption 2] \mbox{}\\
 $DU(\mathcal{S}_{\mathcal{V}}^{u}) \subset \mathcal{S}_{\mathcal{H}}^{u}$ and $DU^{-1}(\mathcal{S}_{\mathcal{H}}^{s}) \subset \mathcal{S}_{\mathcal{V}}^{s}$.
Moreover, if $(\xi_n,\eta_n)\in \mathcal{S}_{(q_n,p_n)}^{u}$ and $(\xi_{n+1},\eta_{n+1}) \in  \mathcal{S}_{(q_{n+1},p_{n+1})}^{u}$ where $(\xi_{n+1},\eta_{n+1}) =  DU_{(q_n,p_n)}((\xi_n,\eta_n))$,
then 
\begin{eqnarray}
|\xi_{n+1}| \geq \frac{1}{\lambda}|\xi_n|.
\label{ineq:Assum2_a}
\end{eqnarray}
Similarly, if $(\xi_n,\eta_n)\in \mathcal{S}_{(q_n,p_n)}^{s}$ and $(\xi_{n-1},\eta_{n-1}) \in  \mathcal{S}_{(q_{n-1},p_{n-1})}^{s}$ where $ (\xi_{n-1},\eta_{n-1})= DU^{-1}_{(q_n,p_n)}((\xi_n,\eta_n))$, then
\begin{eqnarray}
|\eta_{n-1}| \geq \frac{1}{\lambda}|\eta_n|, 
\label{ineq:Assum2_b}
\end{eqnarray}
where $0 < \lambda < 1-\mu_h \mu_v$.
\end{description}
then, the following statements hold
\begin{description} 
\setlength{\leftskip}{1.0cm}
  \item[i)] $\Omega(U)$ is a non-empty compact invariant Cantor set, on which 
  the map $U$ is topologically conjugate to N-shift map.
  \item[ii)] $\Omega(U)$ is a uniformly hyperbolic invariant set of the map $U$.
\end{description}
\end{thm}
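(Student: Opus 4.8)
The plan is to follow the classical Conley--Moser / Smale-horseshoe argument exactly as presented in Wiggins \cite{Wiggins}, establishing (i) and (ii) in turn from Assumptions 1 and 2.

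For (i), I would first use Assumption 1 to control how the strips nest. Define inductively $V_{s_0 s_{-1}\cdots s_{-k}} := V_{s_0}\cap U\!\left(V_{s_{-1}\cdots s_{-k}}\right)$, and dually $H_{\cdots s_1 s_0}$ using $U^{-1}$. Because $U$ carries vertical boundaries of $V_i$ to vertical boundaries of $H_i$ and horizontal to horizontal, one checks by induction that each $V_{s_0\cdots s_{-k}}$ is again a $\mu_v$-vertical strip whose $q$-width is contracted by a definite factor governed by $\mu_v\mu_h<1$ at every step. Hence the nested intersection over $k\to\infty$ collapses to a single $\mu_v$-vertical curve associated with the left-infinite word, and symmetrically the future symbols produce a single $\mu_h$-horizontal curve. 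Since a $\mu_v$-vertical curve and a $\mu_h$-horizontal curve with $\mu_v\mu_h<1$ meet in exactly one point, assigning to each $s\in\{1,\dots,N\}^{\mathbb{Z}}$ that intersection point defines a map $\pi$ onto a subset of $R$; I would then verify that this subset is precisely $\Omega(U)$, that $\pi$ is a homeomorphism intertwining the shift with $U$, and that its image is perfect, totally disconnected and compact, hence a Cantor set.

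For (ii), I would build the hyperbolic splitting from the cone fields. For $x\in\Omega(U)$ set $E^u_x := \bigcap_{n\ge 0} DU^n_{U^{-n}x}\!\left(\mathcal{S}^u_{U^{-n}x}\right)$ and $E^s_x := \bigcap_{n\ge 0} DU^{-n}_{U^{n}x}\!\left(\mathcal{S}^s_{U^{n}x}\right)$. Forward $DU$-invariance of the unstable cone field (Assumption 2) makes this a nested family of nonempty closed cones, and the uniform expansion estimate \eq{ineq:Assum2_a} together with $\mu_v\mu_h<1$ forces the opening of these cones to shrink geometrically, so $E^u_x$ is a single line depending continuously on $x$ with $DU\,E^u_x = E^u_{Ux}$; the analogous argument with \eq{ineq:Assum2_b} gives $E^s_x$. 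Transversality of $\mathcal{S}^u$ and $\mathcal{S}^s$ yields $T_xM = E^u_x\oplus E^s_x$, and iterating \eq{ineq:Assum2_a}, \eq{ineq:Assum2_b} with $0<\lambda<1-\mu_h\mu_v<1$ produces $\|DU^n v\|\ge C\lambda^{-n}\|v\|$ for $v\in E^u_x$ and $\|DU^{-n}v\|\ge C\lambda^{-n}\|v\|$ for $v\in E^s_x$, i.e. uniform hyperbolicity of $\Omega(U)$ with contraction rate $\lambda$.

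The main obstacle is the pair of "shrinking" claims that are easy to state but need genuine care: in (i), that the nested vertical and horizontal strips actually collapse to Lipschitz graphs rather than merely to nonempty nested compacta, and in (ii), that the nested cones collapse to lines rather than to a nondegenerate limiting cone. Both rest on quantifying precisely how much one step of the graph transform (respectively one application of $DU$) contracts the transverse direction, and on checking that this contraction factor is bounded away from $1$ uniformly over $\Omega(U)$ --- which is exactly where $\mu_v\mu_h<1$ and $0<\lambda<1-\mu_h\mu_v$ are used. Once these uniform estimates are in place, the conjugacy in (i) and the splitting in (ii) follow by the standard bookkeeping, for which I would cite Wiggins \cite{Wiggins} rather than reproduce it.
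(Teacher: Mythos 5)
The paper does not prove this theorem itself---it is quoted verbatim from Wiggins \cite{Wiggins} and used as a black box, so there is no in-paper argument to compare against. Your outline is a correct sketch of the standard Conley--Moser proof given in that reference (nested strips collapsing to Lipschitz graphs for the conjugacy, invariant cone fields collapsing to the stable/unstable splitting for hyperbolicity), and you correctly identify the two uniform-contraction estimates as the only nontrivial steps.
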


\subsection{Application to the scattering map}
From Theorem \ref{Wiggins}, if our map $U$ satisfies the Assumptions 1 and 2, 
uniform hyperbolicity follows. 
In this subsection, we first derive a sufficient condition (sector condition) 
leading to the Assumptions 1 and 2.
Then we show that there exist vertical and horizontal curves introduced above 
if the derived sufficient condition is satisfied. 

To this end,  
we denote the $q$-coordinates of some of the intersecting points, as indicated in Fig. \ref{fig:nu_def},  of 
the curves $U^{-1}(l_1)$ and $U^{-1}(l_3)$ with 
the line $l_2 =\{(q,p) | -q_f < q < q_f, p = q_f \}$ by 
$0 < \nu_1 < \nu_2< \nu_3< q_f$. 
\begin{figure}[H]
        \centering
	\includegraphics[width = 10.0cm,bb =  0 0 461 346]{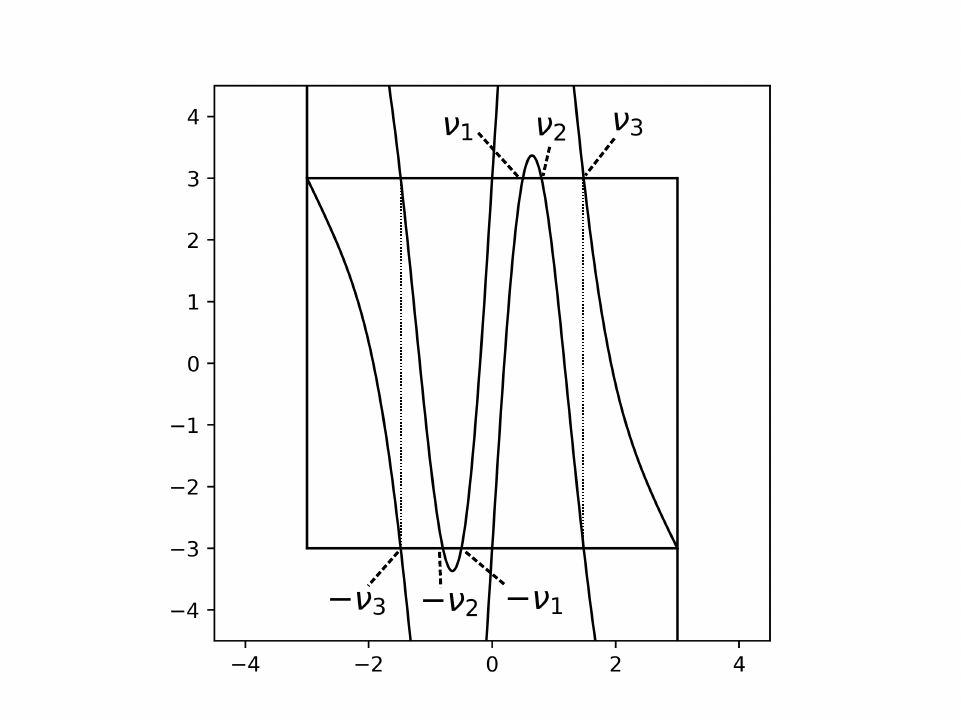}
	\caption{Intersections $\nu_i(i = 1,2,3)$ of 
the curves $U^{-1}(l_1)$ and $U^{-1}(l_3)$ with
the line $l_2 =\{(q.p) | -q_f < q < q_f, p = q_f \}$. 
}
	\label{fig:nu_def}
\end{figure} 
We further introduce the intervals 
\begin{eqnarray}
\label{vertical_interval}
I_V := \{q \in \mathbb{R} | -q_f \leq q \leq -\nu_2, -\nu_1 \leq q \leq \nu_1, \nu_2 \leq q \leq q_f \},  \\
\label{horitonal_interval}
I_H := \{p \in \mathbb{R} | -q_f \leq p \leq -\nu_2, -\nu_1 \leq p \leq \nu_1, \nu_2 \leq p \leq q_f \}. 
\end{eqnarray}
\begin{prp}
\label{Prp:Sector_condition}
For $\displaystyle \tilde{\mu} := \mu + \frac{1}{\mu}$ with $0 \leq \mu < 1$, 
if the following conditions for $F(q)$ and $F(p)$, 
\begin{eqnarray}
\label{sector_cond}
\left|\frac{dF(q)}{dq}\right| \geq \tilde{\mu}\:~for ~\: q \in I_V ~~and \quad \left|
\frac{dF(p)}{dp}\right| \geq \tilde{\mu}\: ~for ~ \:p \in I_H, 
\end{eqnarray}
are all fulfilled, then the Assumptions 1 and 2 hold.  
\end{prp}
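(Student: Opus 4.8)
The plan is to verify the hypotheses of Theorem~\ref{Wiggins} with $N=3$ and $\mu_v=\mu_h=\mu$; Proposition~\ref{Prp:Sector_condition} then follows at once. First I would construct the three vertical strips $V_1,V_2,V_3$ as the connected components of $R\cap U^{-1}(R)$. Using $U^{-1}(l_1)=\{p=F(q)-q_f\}$, $U^{-1}(l_3)=\{p=F(q)+q_f\}$ and the profile of $F$ on $(-q_f,q_f)$ supplied by Lemma~\ref{lmm:extrema} and Proposition~\ref{prp:Horseshoe} (namely $F$ odd, $F(0)=0$, $F(q_f)=-2q_f$, and $F$ with a single hump on $(0,q_f)$ where $\max F>2q_f$), one checks that $R\cap U^{-1}(R)$ has exactly three components, with $q$-ranges $[-q_f,-\nu_2]$, $[-\nu_1,\nu_1]$ and $[\nu_2,q_f]$, i.e.\ precisely the three pieces of $I_V$. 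The lateral boundaries of each $V_i$ are arcs of $U^{-1}(l_1)\cup U^{-1}(l_3)$: written as graphs $q=v(p)$ they satisfy $|v'(p)|=1/|F'(q)|\le 1/\tilde\mu\le\mu$ on $I_V$ by the sector condition, so the $V_i$ are honest $\mu$-vertical strips (their two bounding curves stay disjoint for $p\in[-q_f,q_f]$). Setting $H_i:=U(V_i)$ and using the reversing symmetry $S(q,p)=(-p,-q)$ (so that $S\circ U\circ S=U^{-1}$ and $S(R)=R$), one gets $R\cap U(R)=S(R\cap U^{-1}(R))$, hence the three components of $R\cap U(R)$ are $S(V_1),S(V_2),S(V_3)$ and each $H_i$ is one of them: these are $\mu$-horizontal strips spanning all $q\in(-q_f,q_f)$, their non-trivial boundary arcs are pieces of $U(l_2),U(l_4)$ with slope $1/F'$ and hence $\mu$-horizontal on $I_H$, and their $p$-ranges are the pieces of $I_H$. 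Since $U$ carries the $U^{-1}(l_1),U^{-1}(l_3)$-arcs in $\partial V_i$ into $l_1,l_3$ and the $l_2,l_4$-arcs into $U(l_2),U(l_4)$, vertical boundaries go to vertical boundaries and horizontal to horizontal; together with $\mu_v\mu_h=\mu^2<1$ and $U$ being a diffeomorphism with $U(V_i)=H_i$, this is Assumption~1, and (horizontal strips crossing vertical strips whenever $\mu_v\mu_h<1$) the symbolic dynamics is the full $3$-shift.

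For Assumption~2 I would use that $DU_{(q,p)}$ sends a tangent vector $(\xi,\eta)$ to $(-F'(q)\xi+\eta,\,-\xi)$ and, since $U^{-1}(q,p)=(-p,\,q-F(p))$, that $DU^{-1}_{(q,p)}$ sends $(\xi,\eta)$ to $(-\eta,\,\xi-F'(p)\eta)$. If $(\xi_n,\eta_n)$ lies in the $\mu$-unstable cone $|\eta_n|\le\mu|\xi_n|$ at a point of $\mathcal{V}$, whose $q$-coordinate then lies in $I_V$ (because $\mathcal{V}\subset V_1\cup V_2\cup V_3$), then $|\xi_{n+1}|\ge(|F'(q_n)|-\mu)|\xi_n|\ge(\tilde\mu-\mu)|\xi_n|=(1/\mu)|\xi_n|$ while $|\eta_{n+1}|=|\xi_n|$, so $(\xi_{n+1},\eta_{n+1})$ is again in the $\mu$-unstable cone (giving $DU(\mathcal{S}^{u}_{\mathcal{V}})\subset\mathcal{S}^{u}_{\mathcal{H}}$) and $|\xi_{n+1}|\ge(1/\lambda)|\xi_n|$ with $\lambda=\mu$. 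The mirror computation with $DU^{-1}$, the $\mu$-stable cone $|\xi|\le\mu|\eta|$, and the $p$-coordinate of a point of $\mathcal{H}$ lying in $I_H$ (because $\mathcal{H}\subset H_1\cup H_2\cup H_3$) gives $DU^{-1}(\mathcal{S}^{s}_{\mathcal{H}})\subset\mathcal{S}^{s}_{\mathcal{V}}$ and $|\eta_{n-1}|\ge(1/\lambda)|\eta_n|$. Choosing $\lambda:=\mu$, which lies in $(0,1-\mu_h\mu_v)=(0,1-\mu^2)$ in the parameter regime of interest, establishes Assumption~2, and Theorem~\ref{Wiggins} then delivers statements i) and ii).

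The part that genuinely requires work is the first step: proving that $R\cap U^{-1}(R)$ really splits into exactly three components with $q$-ranges precisely the three pieces of $I_V$, and that the bounding arcs of $U^{-1}(l_1),U^{-1}(l_3)$ truly reparametrize as full-height $\mu$-vertical curves over all of $[-q_f,q_f]$ rather than degenerating at the corners $p=\pm q_f$. Both facts come from combining the global shape of $F$ provided by Lemma~\ref{lmm:extrema} and Proposition~\ref{prp:Horseshoe} with the steepness bound $|F'|\ge\tilde\mu$ on $I_V$; once the strips are correctly identified, the cone and expansion estimates above, together with the routine verification of the boundary correspondence, finish the proof.
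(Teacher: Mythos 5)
Your proposal is correct and follows essentially the same route as the paper: the strips are the components of $R\cap U^{-1}(R)$ bounded by the shifted graphs $p=F(q)\mp q_f$, the sector condition gives $|(F^{-1})'|\le 1/\tilde{\mu}<\mu$ so these reparametrize as $\mu$-vertical/horizontal curves (Assumption 1, with the same boundary-correspondence check), and the cone estimates use the identical triangle-inequality computation $|\xi_{n+1}|\ge(|F'(q_n)|-\mu)|\xi_n|=\tfrac{1}{\mu}|\xi_n|$ with $\lambda\in[\mu,1-\mu^2)$ (Assumption 2). The only cosmetic difference is that you derive the horizontal strips from the reversing symmetry $S(q,p)=(-p,-q)$, whereas the paper defines them explicitly via the functions $h_i$; the paper invokes the same symmetry elsewhere, so this is not a substantive divergence.
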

The conditions (\ref{sector_cond}) and (\ref{sector_cond_in}) will be 
referred to as the {\it sector conditions}. 
\begin{dfn}
We introduce the following functions associated with $F^{-1}(q)$: 
\begin{eqnarray}
&& v_1^{-1}(q) := F(q)- q_f ~~{\rm for}~~ q \in [-q_f, - \nu_3] ,\\
&& v_2^{-1}(q) := F(q)+ q_f ~~{\rm for}~~ q \in [-\nu_3, -\nu_2] ,\\
&& v_3^{-1}(q) := F(q)+ q_f ~~{\rm for}~~ q \in [-\nu_1, 0],\\
&& v_4^{-1}(q) := F(q)- q_f ~~{\rm for}~~ q \in [0, \nu_1] ,\\
&& v_5^{-1}(q) := F(q)- q_f ~~{\rm for}~~ q \in [\nu_2, \nu_3] ,\\
&& v_6^{-1}(q) := F(q)+ q_f ~~{\rm for}~~ q \in [\nu_3, q_f] .
\end{eqnarray}
In a similar way, we introduce 
\begin{eqnarray}
&& h_1^{-1}(p) := F(p)+ q_f  ~~{\rm for}~~ p \in [\nu_3, q_f],\\
&& h_2^{-1}(p) := F(p)- q_f  ~~{\rm for}~~ p \in [\nu_2, \nu_3] ,\\
&& h_3^{-1}(p) := F(p)- q_f  ~~{\rm for}~~ p \in [0, \nu_1] ,\\
&& h_4^{-1}(p) := F(p)+ q_f  ~~{\rm for}~~ p \in [-\nu_1, 0] ,\\
&& h_5^{-1}(p) := F(p)+ q_f  ~~{\rm for}~~  p \in [-\nu_3, -\nu_2] ,\\
&& h_6^{-1}(p) := F(p)- q_f  ~~{\rm for}~~ p \in [-q_f, - \nu_3] .
\end{eqnarray}
\end{dfn}
\begin{dfn}
We then consider the regions in figure 12 whose boundaries are 
given by the functions $v_i(p)~ (i = 1,\cdots,6)$: 
\begin{eqnarray}
V_1 := \{(q,p) \,|\, v_1(p)<q<v_2(p), -q_f < p< q_f\}, \\
V_2 := \{(q,p) \,|\, v_3(p)<q<v_4(p), -q_f < p< q_f\}, \\
V_3 := \{(q,p) \,|\, v_5(p)<q<v_6(p), -q_f < p< q_f\}. 
\end{eqnarray}
Similarly, using the functions $h_i(p)~ (i = 1,...,6)$ we introduce in figure 13 
\begin{eqnarray}
H_1 := \{(q,p) \,|\, h_1(q)>p>h_2(q), -q_f < q< q_f\},  \\
H_2 := \{(q,p) \,|\, h_3(q)>p>h_4(q), -q_f < q< q_f\}, \\
H_3 := \{(q,p) \,|\, h_5(q)>p>h_6(q), -q_f < q< q_f\}. 
\end{eqnarray}
\end{dfn}

\begin{lmm}
If the condition (\ref{sector_cond}) holds, 
then the curves $q = v_i(p)~(i = 1,\cdots,6)$ $($resp. $p = h_i(q)~(i = 1,\cdots, 6))$
become horizontal curves (resp. vertical curves) 
with $\mu_v = \mu$ $($resp. $\mu_h = \mu)$. 
In addition, the domains $V_j~(j = 1,2,3)$ $($resp. $H_j~(j = 1,2,3))$ 
form the horizontal strips. 
\end{lmm}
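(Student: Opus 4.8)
The plan is to realize each $v_i$ (and each $h_i$) as the inverse of the explicitly given $C^{1}$ function $q \mapsto F(q) \mp q_f$ (respectively $p \mapsto F(p) \mp q_f$) restricted to the stated interval, and then to read off the two defining properties of a $\mu_v$-vertical curve --- the range bound $-q_f \le v_i(p) \le q_f$ and the Lipschitz bound with constant $\mu$ --- directly from the sector condition \eq{sector_cond}, whose very purpose is to force $|F'| \ge \tilde{\mu} > 0$ on the intervals in question.

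First I would check that these inverses exist. Every $q$-interval occurring in the definitions of $v_1^{-1},\dots,v_6^{-1}$ lies inside $I_V = [-q_f,-\nu_2] \cup [-\nu_1,\nu_1] \cup [\nu_2,q_f]$, and every $p$-interval occurring in $h_1^{-1},\dots,h_6^{-1}$ lies inside $I_H$; hence \eq{sector_cond} gives $|F'| \ge \tilde{\mu} = \mu + 1/\mu \ge 2 > 0$ on each of them. Because $|F'|$ is bounded below by a positive constant on each (connected) piece, $F'$ keeps a fixed sign there, so $F$ --- and with it each $v_i^{-1}$ and $h_i^{-1}$ --- is a strictly monotone homeomorphism onto its image. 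The orientation of the inverse is then pinned down by evaluating $F$ at the endpoints $\pm q_f, \pm\nu_1, \pm\nu_2, \pm\nu_3$, using $V'(\pm q_f) = 0$ (so that $F(\pm q_f) = \mp 2q_f$), the oddness of $F$ that follows from $V(-q) = V(q)$, and the defining property of the $\nu_j$ as the $q$-coordinates at which the arcs $U^{-1}(l_1)$ (on which $F = 2q_f$) and $U^{-1}(l_3)$ (on which $F = 0$) meet $l_2$. Lemma \ref{lmm:extrema} enters here: it places the unique local maximum of $F(q) - q_f$ on $(0,q_f)$ between $\nu_1$ and $\nu_2$, which renders the six monotonicity directions mutually consistent and shows that each $v_i^{-1}$ maps its interval onto the full segment $[-q_f,q_f]$, so that $v_i$ is defined on all of $[-q_f,q_f]$ and takes its values there.

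Next comes the Lipschitz estimate, the analytic heart of the statement but a one-line computation. On its domain $v_i$ is differentiable with $v_i'(p) = 1/F'(v_i(p))$, hence $|v_i'(p)| = 1/|F'| \le 1/\tilde{\mu}$; and since $1/\tilde{\mu} = \mu/(1+\mu^{2}) \le \mu < 1$, the mean value theorem yields $|v_i(p_1) - v_i(p_2)| \le \mu\,|p_1 - p_2|$ for all $p_1,p_2 \in [-q_f,q_f]$. Together with the range bound of the previous step this says precisely that $q = v_i(p)$ is a $\mu_v$-vertical curve with $\mu_v = \mu$; the same argument applied to $h_i^{-1}(p) = F(p) \mp q_f$ on the $p$-intervals inside $I_H$ gives that $p = h_i(q)$ is a $\mu_h$-horizontal curve with $\mu_h = \mu$. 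Finally, to conclude that $V_1,V_2,V_3$ (respectively $H_1,H_2,H_3$) are genuine strips I would verify $v_1 < v_2$, $v_3 < v_4$, $v_5 < v_6$ throughout $[-q_f,q_f]$ (and the analogue for the $h_i$): each bounding pair is built from two adjacent arcs of $U^{-1}(l_1)$ and $U^{-1}(l_3)$ separated by one of the points $\pm\nu_j$, so each inequality reduces to a comparison of endpoint values already obtained above, and the ordering $0 < \nu_1 < \nu_2 < \nu_3 < q_f$ makes every comparison immediate and simultaneously shows the bounding curves never intersect.

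The two routine steps are the existence of the inverses and the Lipschitz bound, both immediate once \eq{sector_cond} is granted. The part that needs care --- and which I expect to be the real obstacle --- is the endpoint/orientation bookkeeping underlying the first and last paragraphs: tracking which $\nu_j$ comes from $U^{-1}(l_1)$ and which from $U^{-1}(l_3)$, fixing the six monotonicity directions accordingly, and checking that the six curves stack in the order that makes $V_1,V_2,V_3$ and $H_1,H_2,H_3$ well-defined strips.
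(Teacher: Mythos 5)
Your argument is essentially the paper's own proof: both rest on inverting $F(q)\mp q_f$ on the intervals where \eq{sector_cond} forces $|F'|\ge\tilde{\mu}$, applying the mean value theorem to the inverse, and using $1/\tilde{\mu}=\mu/(1+\mu^2)\le\mu$ to obtain the Lipschitz bound, from which the strip property follows. The only difference is that you spell out the existence/monotonicity of the inverses and the endpoint--ordering bookkeeping, which the paper states implicitly via the equivalence \eq{sector_cond_in}; this is a welcome but not essentially different elaboration.
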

\begin{proof}
First note that the condition (\ref{sector_cond}) is equivalent to 
the condition,  
\begin{eqnarray}
\fl
\label{sector_cond_in}
\left|\frac{dF^{-1}(p)}{dp}\right| \leq \frac{1}{\tilde{\mu}}\: ~for ~ -q_f < p < q_f ~~
{\rm and} \quad \left|\frac{dF^{-1}(q)}{dq}\right| \leq \frac{1}{\tilde{\mu}}\: ~for ~ -q_f < q < q_f,   
\end{eqnarray}
Below, we only examine the case of the function $v_1(p)$ 
since the same argument applies to the other cases, $v_i(p)(i = 2,\cdots 6)$. 
For $-q_f < p_1 < p_2<q_f$, we obtain 
\begin{eqnarray}
\frac{\left| v_1(p_1) - v_1(p_2)\right|}{|p_1-p_2|} =\frac{\left| F^{-1}(p_1 +q_f) - F^{-1}(p_2 +q_f)\right|}{|p_1-p_2|} . 
\end{eqnarray}
From the mean-value theorem, there exists $p_1 < p_c < p_2$ such that
\begin{eqnarray}
\frac{\left| F^{-1}(p_1 +q_f) - F^{-1}(p_2 +q_f)\right|}{|p_1-p_2|} =\left|\left. \frac{dF^{-1}(p +q_f)}{dp}\right|_{p=p_c}\right|. 
\end{eqnarray}
From the inequality (\ref{sector_cond_in}), 
\begin{eqnarray}
\left|\left. \frac{dF^{-1}(p +q_f)}{dp}\right|_{p=p_c}\right| = \left|\left. \frac{dF^{-1}(p)}{dp}\right|_{p=p_c}\right| \leq \frac{1}{\tilde{\mu}}
\end{eqnarray}
holds. Since we have assumed that 
\begin{eqnarray}
\frac{1}{\tilde{\mu}} < \mu, 
\end{eqnarray}
we have 
\begin{eqnarray}
\frac{\left| v_1(p_1) - v_1(p_2)\right|}{|p_1-p_2|}  < \mu , 
\end{eqnarray}
which implies that $v_1(q)$ is a vertical curve. 
This immediately leads us to the fact 
that $V_i$ (resp. $H_i$) forms a vertical (resp. horizontal) strip. 
\end{proof}
\begin{lmm}
If the condition (\ref{sector_cond}) holds, then the Assumption 1 is satisfied. 
\end{lmm}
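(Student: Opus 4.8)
The plan is to establish separately the two parts of Assumption~1. The size condition $0 \le \mu_v\mu_h < 1$ is immediate: by the previous lemma the strips are built with $\mu_v = \mu_h = \mu$ and $0 \le \mu < 1$, so $\mu_v\mu_h = \mu^2 < 1$; here $N = 3$. The real content is that $U$ maps each $V_i$ homeomorphically onto $H_i$ with the boundary correspondence (vertical boundary of $V_i$ to vertical boundary of $H_i$, horizontal to horizontal).

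For the homeomorphism claim I would argue through the boundary of $V_i$. Its two vertical sides are the graphs $q = v_{2i-1}(p)$ and $q = v_{2i}(p)$, and reading off the defining formulas $v_j^{-1}(q) = F(q)\pm q_f$ shows these are arcs of the curves $U^{-1}(l_1) = \{p = F(q) - q_f\}$ and $U^{-1}(l_3) = \{p = F(q) + q_f\}$ (this is where monotonicity of $F$ on each subinterval, guaranteed by the sector condition via the previous lemma, is used to read the graphs both ways). The two horizontal sides of $V_i$ are subsegments of $l_2$ (at $p = q_f$) and $l_4$ (at $p = -q_f$). Applying $U$: the vertical sides go onto subsegments of $l_1 = \{q = -q_f\}$ and $l_3 = \{q = q_f\}$, i.e.\ onto pieces of the two vertical sides of the square $R$, hence onto the vertical boundary of whatever horizontal strip $U(V_i)$ turns out to be; and, using the explicit formulas $U(l_2) = \{q = F(p)+q_f\}$, $U(l_4) = \{q = F(p)-q_f\}$ together with the sector condition — which forces $|F'|\ge \tilde{\mu}$ and hence makes $F^{-1}$ single valued with slope at most $1/\tilde{\mu} \le \mu$ — the images of the two horizontal sides of $V_i$ are $\mu_h$-horizontal curves. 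Comparing these image curves against the formulas $h_j^{-1}(p) = F(p)\pm q_f$ that define the $H_j$ identifies $U(\partial V_i)$ with $\partial H_j$ for exactly one $j$.

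To upgrade this to the exact identity I would observe that $V_i \subset R$ and, being wedged between two arcs of $U^{-1}(l_1)$ and $U^{-1}(l_3)$ which bound $U^{-1}(R)$, also $V_i \subset U^{-1}(R)$, so $U(V_i) \subset R \cap U(R)$, which is exactly the disjoint union $H_1 \cup H_2 \cup H_3$ of the three horizontal strips; connectedness of $V_i$ then confines $U(V_i)$ to a single $H_j$. Since $U$ is a homeomorphism of $\mathbb{R}^2$ it carries the closed topological disk $\overline{V_i}$ onto the closed topological disk bounded by the Jordan curve $U(\partial V_i) = \partial H_j$, so $U(V_i) = H_j$; together with the boundary correspondence above this is precisely Assumption~1, once the labeling is fixed so that $j = i$.

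The step I expect to be the main obstacle is exactly this last piece of bookkeeping: verifying that the three components $V_1, V_2, V_3$, ordered by their $q$-ranges, go respectively to $H_1, H_2, H_3$, and that the left/right and top/bottom boundary arcs are matched up as claimed rather than permuted or swapped. This must be done by tracking, interval by interval, which branch ($F(q)-q_f$ or $F(q)+q_f$) bounds each $V_i$ and where the monotone function $F$ sends the breakpoints $-q_f, -\nu_3, -\nu_2, -\nu_1, 0, \nu_1, \nu_2, \nu_3, q_f$; it is the sector condition that makes $F$ monotone on each of these pieces and so renders the tracking unambiguous. Everything else is either trivial or a direct consequence of the preceding lemma.
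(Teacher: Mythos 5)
Your proposal is correct and follows essentially the same route as the paper: verify $\mu_v\mu_h=\mu^2<1$ and then match the boundary arcs of each $V_i$ to the corresponding boundary arcs of $H_i$ using the explicit formulas for $U(l_2)$, $U(l_4)$ and the monotonicity of $F$ supplied by the sector condition. The paper's own proof is in fact terser --- it checks only the upper horizontal arc of $\partial V_1$ and disposes of the rest ``in a similar manner'' --- so your additional containment/connectedness/Jordan-curve step upgrading the boundary correspondence to the full identity $U(V_i)=H_i$ only fills in details the paper leaves implicit.
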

\begin{proof}
Since $\mu_v = \mu_h = \mu$ and $ 0 \leq \mu < 1$ are assumed, 
we obtain 
\begin{eqnarray}
0 \leq \mu_v \mu_h < 1. 
\end{eqnarray}
As for the boundary of the region $V_1$, 
the upper horizontal part of the boundary,
\begin{eqnarray*}
p = q_f ,\quad -q_f < q < -\nu_3, 
\end{eqnarray*}
is mapped to the curve 
\begin{eqnarray}
q = F(p) + q_f ,\quad \nu_3 < p < q_f. 
\end{eqnarray}
This is the upper horizontal part of the boundary for $H_1$. 
In a similar manner, other parts of the boundary for $V_1$ are also mapped 
to the corresponding parts of the boundary for $H_1$. 
We can show the same behavior for $V_2$ and $V_3$ as well. 
\end{proof}
%%%%%%%%%%%%%%%%%%%%%%%%%%%%%%%%%%%%
\begin{figure}[H]
        \centering
	\includegraphics[width = 10.0cm,bb =  0 0 461 346]{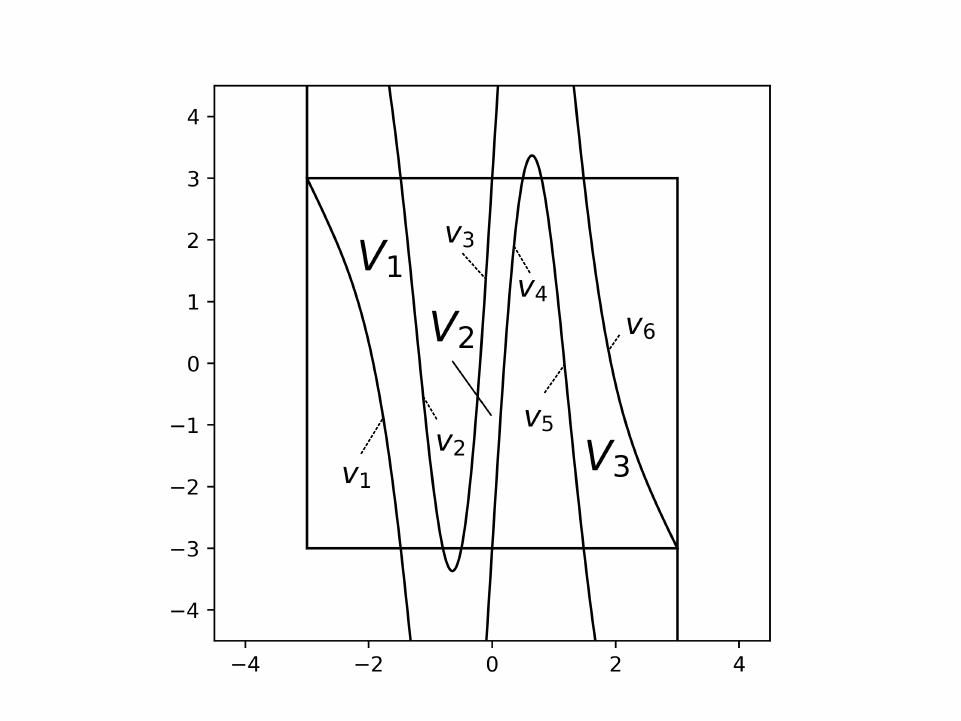}
	\caption{The vertical strips $V_1,V_2$ and $V_3$ and the curves $v_i(i = 1,\cdots ,6)$. 
}
	\label{fig:V_strip}
\end{figure} 
%%%%%%%%%%%%%%%%%%%%%%%%%%%%%%%%%%%%
%%%%%%%%%%%%%%%%%%%%%%%%%%%%%%%%%%%%
\begin{figure}[H]
        \centering
\includegraphics[width = 10.0cm,bb =  0 0 461 346]{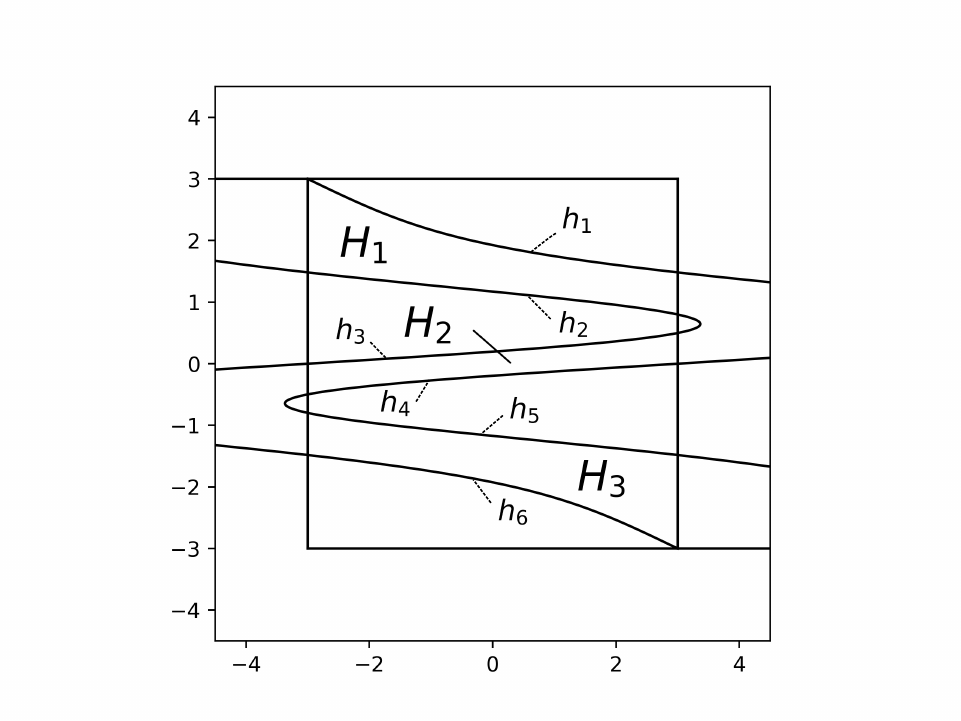}
	\caption{Horizontal strips $H_1,H_2$ and $H_3$ and the curves $h_i(i = 1,\cdots ,6)$. 
}
	\label{fig:H_strip}
\end{figure} 
%%%%%%%%%%%%%%%%%%%%%%%%%%%%%%%%%%%%
In order to consider the Assumption 2, which concerns the cone fields in the tangent space, 
we here present an explicit form of the tangent map for our map (\ref{eq:Map})  with using the function $F(q)$. 
\begin{eqnarray}
DU_{(q_n,p_n)}  &=&
\left(
    \begin{array}{cc}
       2 - V''(q_{n})& 1 \\
      -1& 0
    \end{array}
  \right) 
\nonumber   \\&=& \left(
    \begin{array}{cc}
      -F'(q_{n}) & 1 \\
      - 1 & 0
    \end{array}
  \right).  \\
   \nonumber\\
DU_{(q_n,p_n)}^{-1}  &=& \left(
    \begin{array}{cc}
      0 & -1 \\
      1 & -F'(q_{n})
    \end{array}
  \right) . 
\end{eqnarray}
We remark that $F'(q) = F'(-q)$ holds for which will be used in the subsequent arguments. 
\begin{lmm}
If the condition (\ref{sector_cond}) holds, 
then the Assumption 2 is satisfied. 
\end{lmm}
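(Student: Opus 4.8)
The plan is to verify Assumption~2 directly from the explicit form of the tangent map,
\begin{eqnarray}
DU_{(q_n,p_n)} = \left(\begin{array}{cc} -F'(q_n) & 1 \\ -1 & 0 \end{array}\right),
\qquad
DU^{-1}_{(q_n,p_n)} = \left(\begin{array}{cc} 0 & -1 \\ 1 & -F'(q_n) \end{array}\right),
\end{eqnarray}
together with the fact that $F'$ is even, once it is established that $|F'|$ is at least $\tilde{\mu}$ at every base point at which the relevant cone field lives. The first step is a geometric observation: each point of $\mathcal{V}$ and of $\mathcal{H}$ has $q$-coordinate in $I_V$. Indeed $H_{ij}=V_i\cap U(V_j)\subset V_i$ and $V_{ji}=U^{-1}(V_i)\cap V_j\subset V_j$, so both $\mathcal{H}$ and $\mathcal{V}$ are contained in $V_1\cup V_2\cup V_3$, and by the very definition of the curves $v_i$ the range of $q$ over each $V_k$ lies in $I_V$ (symmetrically, the range of $p$ over each $H_k=U(V_k)$ lies in $I_H$). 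Hence the sector condition (\ref{sector_cond}) applies at all base points below and yields $|F'|\ge\tilde{\mu}=\mu+\mu^{-1}$ there.

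Next I would treat the unstable cone. Fix $(q_n,p_n)\in\mathcal{V}$ and $(\xi_n,\eta_n)\in\mathcal{S}_{(q_n,p_n)}^{u}$, i.e.\ $|\eta_n|\le\mu|\xi_n|$, and put $(\xi_{n+1},\eta_{n+1})=DU_{(q_n,p_n)}(\xi_n,\eta_n)=(-F'(q_n)\xi_n+\eta_n,\,-\xi_n)$. Then $|\eta_{n+1}|=|\xi_n|$, while the triangle inequality together with $|F'(q_n)|\ge\tilde{\mu}$ gives
\begin{eqnarray}
|\xi_{n+1}| \ \ge\ |F'(q_n)|\,|\xi_n| - |\eta_n|\ \ge\ (\tilde{\mu}-\mu)\,|\xi_n|\ =\ \frac{1}{\mu}\,|\xi_n|.
\end{eqnarray}
Consequently $|\eta_{n+1}|=|\xi_n|\le\mu|\xi_{n+1}|$, so $(\xi_{n+1},\eta_{n+1})\in\mathcal{S}_{(q_{n+1},p_{n+1})}^{u}$, and $(q_{n+1},p_{n+1})\in\mathcal{H}$ since $U(\mathcal{V})=\mathcal{H}$; this proves $DU(\mathcal{S}_{\mathcal{V}}^{u})\subset\mathcal{S}_{\mathcal{H}}^{u}$, and the displayed bound is precisely (\ref{ineq:Assum2_a}) with $\lambda=\mu$. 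The stable cone is handled by the mirror argument: for $(q_n,p_n)\in\mathcal{H}$ and $(\xi_n,\eta_n)\in\mathcal{S}_{(q_n,p_n)}^{s}$ one has $(\xi_{n-1},\eta_{n-1})=DU^{-1}_{(q_n,p_n)}(\xi_n,\eta_n)=(-\eta_n,\,\xi_n-F'(q_n)\eta_n)$ --- here evenness of $F'$ and $I_V=I_H$ let one invoke the second half of (\ref{sector_cond}) --- whence $|\xi_{n-1}|=|\eta_n|$ and $|\eta_{n-1}|\ge(\tilde{\mu}-\mu)|\eta_n|=\mu^{-1}|\eta_n|$, so $|\xi_{n-1}|\le\mu|\eta_{n-1}|$ and $(q_{n-1},p_{n-1})\in\mathcal{V}$ since $U^{-1}(\mathcal{H})=\mathcal{V}$; this is $DU^{-1}(\mathcal{S}_{\mathcal{H}}^{s})\subset\mathcal{S}_{\mathcal{V}}^{s}$ together with (\ref{ineq:Assum2_b}) for $\lambda=\mu$. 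Finally $\lambda=\mu$ satisfies $0<\lambda<1-\mu_h\mu_v=1-\mu^2$ as soon as $\mu<(\sqrt{5}-1)/2$, a bound that will hold in the parameter regimes where the sector condition is checked.

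The computations above are routine; the delicate part is the bookkeeping in the first step. One must pin down the $q$-extent of each strip $V_k$ (equivalently the $p$-extent of each $H_k$) and confirm it is contained in $I_V$ (resp.\ $I_H$), so that (\ref{sector_cond}) may legitimately be applied at every base point, and one must keep the labelling of $\mathcal{V}$ and $\mathcal{H}$ consistent enough to guarantee $U(\mathcal{V})=\mathcal{H}$ and $U^{-1}(\mathcal{H})=\mathcal{V}$. A secondary, minor point is to record that $\lambda=\mu$ is an admissible expansion constant, which needs $\mu<(\sqrt{5}-1)/2$.
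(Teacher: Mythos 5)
Your proposal is correct and follows essentially the same route as the paper: the explicit tangent map, the triangle inequality $|\xi_{n+1}|\ge|F'(q_n)||\xi_n|-|\eta_n|\ge(\tilde{\mu}-\mu)|\xi_n|=\mu^{-1}|\xi_n|$, and the symmetric argument for the stable cone. Your extra bookkeeping (verifying that all base points of $\mathcal{V}$ and $\mathcal{H}$ lie over $I_V$ and $I_H$, and noting that an admissible $\lambda$ with $\mu\le\lambda<1-\mu^2$ exists only when $\mu<(\sqrt{5}-1)/2$) is a slightly more careful treatment of points the paper leaves implicit, but the substance of the argument is identical.
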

\begin{proof}
As for the (\ref{ineq:Assum2_a}) in Assumption 2, 
we find that 
\begin{eqnarray*}
|\eta_{n+1}| &=& |-\xi_n| = |\xi_n| ,  \nonumber 
\end{eqnarray*}
and also 
\begin{eqnarray*}
|\xi_{n+1}| &=& |-F'(q_n)\xi_{n} + \eta_{n}| \nonumber \\ 
&\geq& |F'(q_n)\xi_{n}| - |\eta_{n}| \nonumber \\
&\geq& |F'(q_n)||\xi_{n}| - \mu|\xi_{n}| \nonumber \\
&\geq& \frac{1}{\mu}|\xi_{n}| = \frac{1}{\mu}|\eta_{n+1}|. 
\end{eqnarray*}
This implies $(\xi_{n+1},\eta_{n+1})\in \mathcal{S}_{(q_{n+1},p_{n+1})}^{u}$. 
In this situation, taking $\lambda$ satisfying the condition 
$\mu \leq \lambda < 1-\mu^2$, we can show that 
\begin{eqnarray}
|\xi_{n+1}| \geq \frac{1}{\mu}|\xi_{n}| \geq \frac{1}{\lambda}|\xi_{n}|. 
\end{eqnarray}
Notice here that $\lambda$ satisfies $0 < \lambda < 1-\mu_h \mu_v$. 
A similar argument follows for (\ref{ineq:Assum2_b}) in Assumption 2. 
\end{proof}

%%%%%%%%%%%%%%%%%%%%%%%%%%%%%%%%%%%%
\section{Sufficient Conditions for sector condition}
\setcounter{equation}{0}
\label{sec:Sufficient_sector}

In this section we derive sufficient conditions for the parameter $\k$ leading to the 
sector condition (\ref{sector_cond}).
Because of the symmetry with respect to the line $p = -q$, 
it is enough to examine the sector condition only for 
the vertical strip $V_1,V_2$ and $V_3$, including the boundaries. 

Our strategy is composed of three steps:
We first check the sector condition for a certain region containing 
the interval $I_V$ introduced in (\ref{vertical_interval}) 
since it is hard to write down explicitly the coordinates $\nu_i$ which 
specify the edges of $I_V$. Next we examine the sector condition 
by replacing the function $F'(q)$ by another one because the function $F'(q)$ is also 
difficult to be controlled.  Finally, we collect all the conditions obtained in each step and 
provide the final result for the Gaussian and Lorentzian cases separately. 

\subsection{Preliminary for the division of the phase space}\begin{dfn}
Let $q_2$ be the positive solution of 
\begin{eqnarray}
\label{eq:eq_for_q2}
\frac{d (2q v''(q^2)) }{dq}= 0 , 
\end{eqnarray}
and let $M_2$ be defined as
\begin{eqnarray}
M_2 :=  2q_2 v''(q_2^2). 
\end{eqnarray}
\end{dfn}
\noindent
{\it Remark: } 
The solution for the equation (\ref{eq:eq_for_q2}) is unique and for each 
potential case we actually have
\begin{eqnarray}
q_{2}^{(G)} = \frac{1}{\sqrt{2}}, \quad M_2^{(G)} = \sqrt{\frac{2}{e}},   \\
q_{2}^{(L)} = \frac{1}{\sqrt{5}}, \quad M_2^{(L)} = \frac{25\sqrt{5}}{54}. 
\end{eqnarray}
We also note that the the function $2q v''(q^2)$ takes the maximum value
$M_2^{(G)}$ (resp. $M_2^{(L)}$) at $q_{2}^{(G)}$ (resp. $q_{2}^{(L)}$)  and the minimum value $-M_2^{(G)}$ (resp. $-M_2^{(L)}$) at $-q_{2}^{(G)}$ (resp. $-q_{2}^{(L)}$) . 

\noindent
{\it Remark: } 
Note that the function $2q v''(q^2)$ is a derivative of $v'(q^2)$ and the value $M_2$ 
attains the maximum value for the derivative of $v'(q^2)$. 

Here we assume 
\begin{eqnarray}
q_2 < q_f, 
\end{eqnarray}
and then introduce a function 
\begin{eqnarray}
L(q) :=  - 2q _f -2\k q \left( M_2 q  - 1\right)  -\k  \varepsilon  . 
\end{eqnarray}
Using the relation $v'(q^2)  < M_2 q +  v'(0) = M_2 q - 1$, which is derived by the fact that 
$M_2$ represents the maximal slope of the function $v'(q^2)$ for $0 < q < q_f$, 
leading to a trivial relation 

\begin{eqnarray}
\frac{v'(q^2) - v'(0)}{q- 0} < M_2. 
\end{eqnarray}
From the definition for the function $L(q)$, we have
\begin{eqnarray}
-2q _f -2\k q   v'(q^2) - \k \varepsilon > L(q). 
\end{eqnarray}
Combining this with the inequality (\ref{eq:upper_F(q)}),
we can easily show that the function $L(q)-q_f$ provides an lower bound of $F(q)-q_f$ (see Fig. \ref{fig:L(a)_F(q)}) : 
\begin{eqnarray}
\label{eq:F(q)>L(q)}
F(q) -q_f > L(q) -q_f .
\end{eqnarray}
Next, let  $\omega_1$ and $\omega_2$ be the solutions of 
$L(q)-q_f =q_f$ (see figure 15).
More explicitly, we have 
\begin{eqnarray} 
\label{eq:omega1}
\omega_1 = \frac{1}{2 M_2} - \sqrt{\Delta(\kappa, q_b, q_f)}, \\
\label{eq:omega2}
\omega_2 =  \frac{1}{2 M_2} +\sqrt{\Delta(\kappa, q_b, q_f)}, 
\end{eqnarray}
where 
\begin{eqnarray} 
\label{eq:discrimant}
\Delta(\kappa, q_b, q_f) := \frac{1}{4 M_2^2}  - \frac{\varepsilon}{2 M_2} - \frac{2 q_f }{M_2}\cdot \frac{1}{\kappa}. 
\end{eqnarray}
To make both $\omega_1$ and $\omega_2$ be real and positive, 
we have to require $\Delta(\kappa, q_b, q_f) >0$, 
which is written as 
\begin{eqnarray}
\label{eq:positive_discriminant_k}
\k >  \k_2,  
\end{eqnarray}
where 
\begin{eqnarray}
\label{def:kappa2}
\k_2 := \frac{ 4 q_f}{\displaystyle \frac{1}{2 M_2 }-  \varepsilon }. 
\end{eqnarray}
Here, to derive the above inequality, the condition 
\begin{eqnarray}
\label{eq:positive_discriminant}
\varepsilon < \varepsilon_2 
\end{eqnarray}
where
\begin{eqnarray}
\varepsilon_2 := \frac{1}{2 M_2 }, 
\end{eqnarray}
is assumed. We will examine the condition (\ref{eq:positive_discriminant}) in \ref{app:Gaussian_qf_qb} and \ref{app:Lorentzian_qf_qb}. 
%%%%%%%%%%%%%%%%%%%%%%%%%%%%%%%%%%%%
\begin{figure}[H]
        \centering
	\includegraphics[width = 9.0cm,bb = 0 0 461 346]{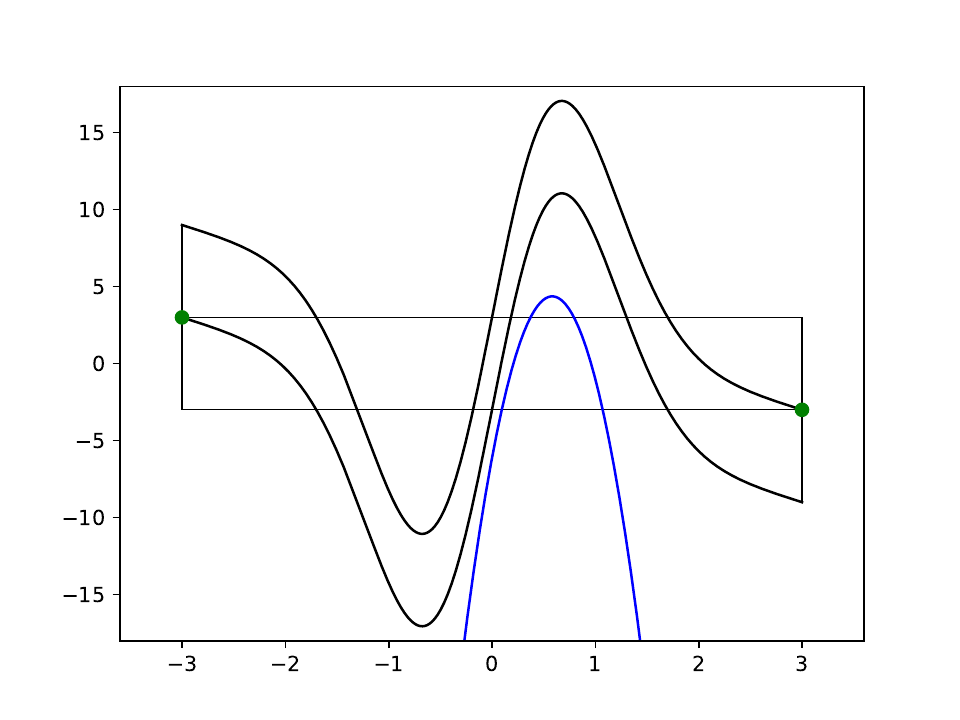}

	\caption{
	The boundary of the region $U^{-1}(R)$ (black) and the function $L(q)$ (blue). $L(q)$ is used as an lower bound for a boundary curve of the region $R$.
\label{fig:L(a)_F(q)}
	}
\end{figure} 
%%%%%%%%%%%%%%%%%%%%%%%%%%%%%%%%%%%%

%%%%%%%%%%%%%%%%%%%%%%%%%%%%%%%%%%%%
\begin{figure}[H]
        \centering
	\includegraphics[width = 9.0cm,bb = 0 0 461 346]{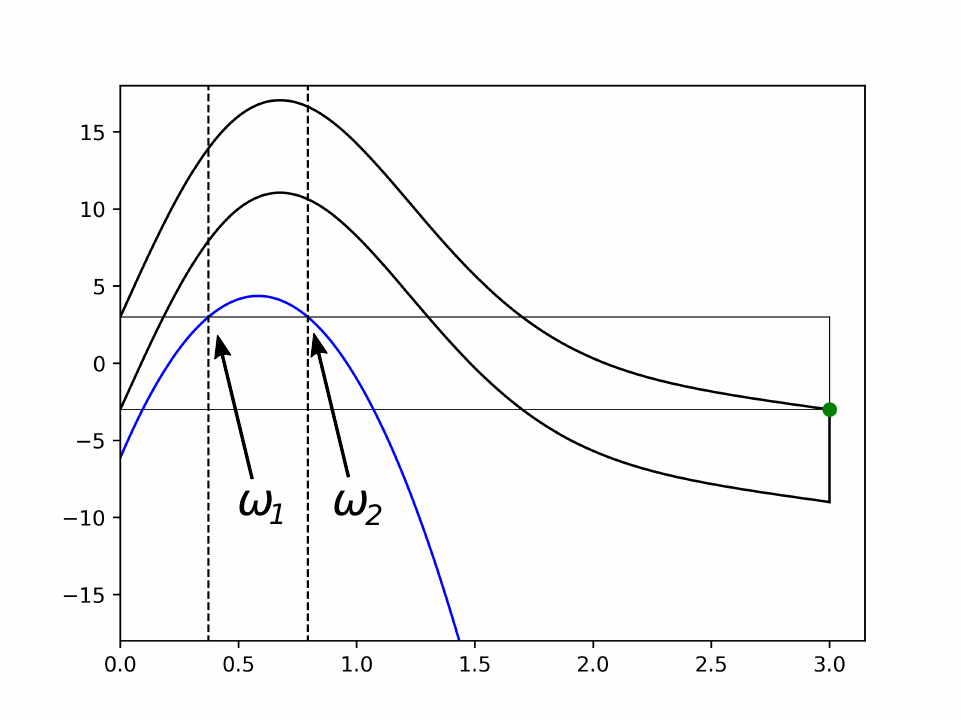}
	\caption{The zeros $\omega_1,\omega_2$ of the function $L(q)$ (blue). The black curves show the boundary of the region $U^{-1}(R)$. 
	\label{fig:omega}}
\end{figure} 
%%%%%%%%%%%%%%%%%%%%%%%%%%%%%%%%%%%%
\subsection{Division of the phase space}
\label{sec:division_phase_space}

We next divide the phase space into three subregions: 
\begin{eqnarray}
\bar{\mathcal{X}} &=&\{ (q,p) \, |\, -q_f < q < - \omega_2\} ,\\
\bar{\mathcal{Y}} &=&\{ (q,p) \,| \,- \omega_1 < q < \omega_1\} ,\\
\bar{\mathcal{Z}} &=&\{ (q,p) \,| \, \omega_2 < q < q_f\} .
\end{eqnarray}
Figure \ref{fig:xyz} illustrates the subregions $\bar{\mathcal{X}}, \bar{\mathcal{Y}}$ and $\bar{\mathcal{Z}}$ 
introduced above. Recall that $\omega_1$ and $\omega_2$ are determined in such a way that 
the inequality (\ref{eq:F(q)>L(q)}) holds, so 
$R \cap U^{-1}(R) \subset  \bar{\mathcal{X}}\cup\bar{\mathcal{Y}}\cup \bar{\mathcal{Z}}$ holds, 
which implies $\Omega(U) \subset \bar{\mathcal{X}}\cup\bar{\mathcal{Y}}\cup \bar{\mathcal{Z}}$
from the Proposition \ref{prp:nonwandering}. 
If the points contained in the set $ \bar{\mathcal{X}}\cup\bar{\mathcal{Y}}\cup \bar{\mathcal{Z}}$ satisfy (\ref{sector_cond}), then the nonwandering set turns out to be uniformly hyperbolic. 
%%%%%%%%%%%%%%%%%%%%%%%%%%%%%%%%%%%%%%%%%%%%%
\begin{figure}[H]
\begin{center}
\includegraphics[width = 10.0cm,bb = 0 0 461 346]{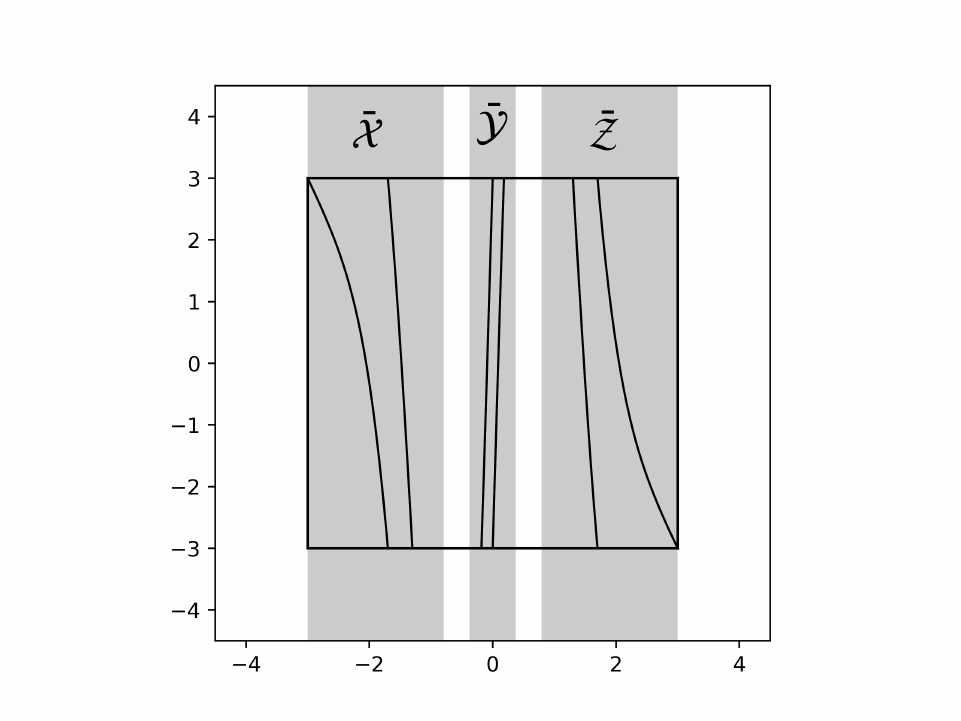}

\end{center}
\caption{
The subregions $\bar{\mathcal{X}}$, $\bar{\mathcal{Y}}$ and $\bar{\mathcal{Z}}$ in phase space. 
The borders are determined by the zeros $\omega_1,\omega_2$ of 
the equation $L(q) = 0$. 
} 
\label{fig:xyz}
\end{figure}
%%%%%%%%%%%%%%%%%%%%%%%%%%%%%%%%%%%%%%%%%%%%%
\subsection{Sufficient Conditions for Conley-Moser Condition}
\label{sec:Conditions}
Recalling an explicit form of $F'(q)$: 
\begin{eqnarray}
F'(q) = -2- 2\kappa (v'(q^2)+2q^2v''(q^2))\nonumber \\ 
\quad\quad\quad\quad \quad \quad -2\kappa \varepsilon \left( (q-q_b)v'((q-q_b)^2)-(q+q_b)v'((q+q_b)^2) \right), 
\end{eqnarray}
and taking into account the symmetry with respect to the $p$-axis, 
we rewrite a sufficient condition for the sector condition(\ref{sector_cond}) as
\begin{eqnarray}
\label{cond:Ybar}
 \quad F'(q) > \tilde{\mu}, \quad for \quad 0 < q < \omega_1,  \\
\label{cond:Zbar}
\quad F'(q) < -\tilde{\mu}, \quad for \quad \omega_2 < q < q_f. 
\end{eqnarray}
Then we find the following proposition: 
\begin{prp}
If the parameters $\varepsilon$ and $\kappa$ satisfy 
\begin{eqnarray}
\label{cond_prop62}
\varepsilon < \varepsilon_3, ~~~ \k > \k_3, 
\end{eqnarray}
where 
\begin{eqnarray}
\label{eq:epkp3}
\varepsilon_3 := - \frac{1}{2M_1} \left(v'\left(\left(\frac{1}{2M_2}\right)^2\right)+\frac{1}{2M_2^2}v''\left(\left(\frac{1}{2M_2}\right)^2\right)\right), 
\end{eqnarray}
and 
\begin{eqnarray}
\label{def:kappa3}
\k_3 := \frac{\displaystyle 2 + \tilde{\mu}}{\displaystyle - 2 \left(v'\left(\left(\frac{1}{2M_2}\right)^2\right)+\frac{1}{2M_2^2}v''\left(\left(\frac{1}{2M_2}\right)^2\right)\right) -4\varepsilon M_1}, 
\end{eqnarray}
then the following holds:
\begin{eqnarray}
F'(q) > \tilde{\mu}, \quad for \quad 0 < q < \omega_1. 
\end{eqnarray}
\end{prp}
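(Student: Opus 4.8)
### Proof Strategy

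The plan is to bound $F'(q)$ from below on the interval $0 < q < \omega_1$ by controlling the two $\kappa$-dependent contributions to $F'(q)$ separately. Recall the explicit form
\begin{eqnarray}
F'(q) = -2 - 2\kappa\bigl(v'(q^2) + 2q^2 v''(q^2)\bigr) - 2\kappa\varepsilon\bigl((q-q_b)v'((q-q_b)^2) - (q+q_b)v'((q+q_b)^2)\bigr). \nonumber
\end{eqnarray}
First I would observe that on $0 < q < \omega_1 < \frac{1}{2M_2}$, the factor $-2\kappa\bigl(v'(q^2) + 2q^2 v''(q^2)\bigr)$ is positive and must be bounded below by its smallest value on that interval; since $v'(q^2) + 2q^2 v''(q^2)$ attains its local maximum at $q_0 < \frac{1}{2M_2}$ (by the Remark after Definition \ref{dfn:q0}) and is monotone decreasing thereafter, the relevant infimum of the positive quantity $-\bigl(v'(q^2)+2q^2v''(q^2)\bigr)$ on the relevant range is controlled by its value at the right endpoint $q = \frac{1}{2M_2}$, giving the term $-2\bigl(v'((\tfrac{1}{2M_2})^2) + \tfrac{1}{2M_2^2}v''((\tfrac{1}{2M_2})^2)\bigr)$ that appears in the denominators of $\varepsilon_3$ and $\kappa_3$.

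Next I would handle the $\varepsilon$-term. The factor $(q-q_b)v'((q-q_b)^2) - (q+q_b)v'((q+q_b)^2)$ needs to be bounded in absolute value; using the maximum $M_1$ of $-q v'(q^2)$ from Definition of $q_1$ and $M_1$, each summand $|(q \mp q_b)v'((q\mp q_b)^2)| \le M_1$, so the whole bracket is bounded by $2M_1$ in magnitude. Hence $-2\kappa\varepsilon \cdot (\text{bracket}) \ge -2\kappa\varepsilon \cdot 2M_1 = -4\kappa\varepsilon M_1$ (using $\varepsilon > 0$ and taking the worst sign). Combining,
\begin{eqnarray}
F'(q) \ge -2 - 2\kappa\Bigl(v'\bigl((\tfrac{1}{2M_2})^2\bigr) + \tfrac{1}{2M_2^2}v''\bigl((\tfrac{1}{2M_2})^2\bigr)\Bigr) - 4\kappa\varepsilon M_1 = -2 + \kappa\, A, \nonumber
\end{eqnarray}
where $A := -2\bigl(v'((\tfrac{1}{2M_2})^2) + \tfrac{1}{2M_2^2}v''((\tfrac{1}{2M_2})^2)\bigr) - 4\varepsilon M_1$ is exactly the denominator of $\kappa_3$. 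The condition $\varepsilon < \varepsilon_3$ is precisely what guarantees $A > 0$ (so that increasing $\kappa$ helps), and then $\kappa \ge \kappa_3 = (2+\tilde\mu)/A$ gives $\kappa A \ge 2 + \tilde\mu$, hence $F'(q) \ge \tilde\mu$, as desired. (I would double-check whether the inequality is strict, adjusting $\ge$ versus $>$ as the statement requires.)

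The main obstacle I anticipate is establishing rigorously that the value at $q = \frac{1}{2M_2}$ genuinely controls the infimum of $-\bigl(v'(q^2)+2q^2v''(q^2)\bigr)$ over the region where the bound is needed — this requires knowing the sign and monotonicity behavior of $v'(q^2) + 2q^2 v''(q^2)$ on $(0, \frac{1}{2M_2})$, and in particular that $\frac{1}{2M_2} > q_0$ so that we are on the decreasing branch, or else arguing the quantity is bounded appropriately on the increasing branch too. This is where the specific forms of the Gaussian and Lorentzian potentials (and the explicit values of $q_0$, $M_2$) enter, and where the argument likely splits into the two cases or gets deferred to an appendix; the rest is elementary triangle-inequality bookkeeping.
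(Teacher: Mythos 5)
Your overall strategy is the same as the paper's: split $F'(q)$ into the $V_1$-part and the coupling part, bound the coupling part from below by $-4\kappa\varepsilon M_1$ using the extremal value $M_1$ of $-q\,v'(q^2)$ (this is exactly the paper's auxiliary function $\beta_1(q) := -2 - 2\kappa\bigl(v'(q^2)+2q^2v''(q^2)\bigr) - 4\kappa\varepsilon M_1 \le F'(q)$), then reduce to evaluating at $q = 1/(2M_2)$, and finally unwind $\beta_1(1/(2M_2)) > \tilde{\mu}$ into the two conditions $\varepsilon < \varepsilon_3$ and $\kappa > \kappa_3$. That skeleton and the closing algebra are correct.

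However, your justification of the key reduction to $q = 1/(2M_2)$ is wrong as stated. You claim $q_0 < \frac{1}{2M_2}$ and that $v'(q^2)+2q^2v''(q^2)$ is ``monotone decreasing thereafter,'' so that the infimum of $-\bigl(v'(q^2)+2q^2v''(q^2)\bigr)$ is controlled by the right endpoint. The inequality actually goes the other way: $1/(2M_2^{(G)}) = \sqrt{e/8} < \sqrt{3/2} = q_0^{(G)}$ and $1/(2M_2^{(L)}) = 27/(25\sqrt{5}) < 1 = q_0^{(L)}$. Moreover, $v'(q^2)+2q^2v''(q^2)$ is \emph{increasing} on $(0,q_0)$ (it rises from $-1$ at $q=0$ to its local maximum at $q_0$, where it is positive). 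Under your stated premise the minimum of $-\bigl(v'(q^2)+2q^2v''(q^2)\bigr)$ over the interval would sit at the interior point $q_0$, where that quantity is negative, and the lower bound would collapse rather than be given by the endpoint value. The correct argument — the paper's — is: $\omega_1 < 1/(2M_2) < q_0$, and $\beta_1$ is monotonically decreasing on $(0,q_0)$, hence $\beta_1(q) > \beta_1(\omega_1) > \beta_1(1/(2M_2))$ for $0<q<\omega_1$, so it suffices to impose $\beta_1(1/(2M_2)) > \tilde{\mu}$. Your hedge about ``arguing the quantity is bounded appropriately on the increasing branch too'' is in fact the case that occurs, but as written the primary chain of reasoning has the comparison reversed and must be corrected for the proof to stand.
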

\begin{proof}
We first introduce the function giving a lower bound of $F'(q)$:
\begin{eqnarray}
\beta_1(q) := -2- 2\kappa (v'(q^2)+2q^2v''(q^2)) -4\kappa \varepsilon M_1
\end{eqnarray}
Since the maximum and minimum values of the function $qv'(q^2)$ are 
respectively $\pm M_1$,  we have 
\begin{eqnarray}
F'(q) > \beta_1(q), 
\end{eqnarray}
and $\beta_1(q)$ monotonically decrease for $0<q<q_0$. 
Now using an inequality for $\omega_1$, 
\begin{eqnarray*}
\omega_1  = \frac{1}{2M_2} - \sqrt{ \frac{1}{4 M_2^2}  -\varepsilon \frac{1}{2 M_2} - \frac{2 q_f }{M_2}\cdot \frac{1}{\kappa}} < \frac{1}{2M_2}, 
\end{eqnarray*}
with concrete values for $M_2$ in each potential case, 
it is easy to verify 
\begin{eqnarray*}
\frac{1}{2M_2^{(G)}} =  \sqrt{\frac{e}{8}} < \sqrt{\frac{3}{2}} = q_0^{(G)} , \\
\frac{1}{2M_2^{(L)}} = \frac{27}{25\sqrt{5}} < 1 = q_0^{(L)},  
\end{eqnarray*}
which leads to the inequality $\omega_1 <1/2M_2< q_0$. 
Hence if $\beta_1(\omega_1) > \tilde{\mu}$ is satisfied, 
the inequality (\ref{cond:Ybar}) holds. 
In this situation, since the inequality $\omega_1 <1/2M_2< q_0$ leads to 
$\beta_1(\omega_1) > \beta_1(1/2M_2)$, the inequality (\ref{cond:Ybar}) holds 
provided that  $\beta_1(1/2M_2) > \tilde{\mu}$ is satisfied. 
The condition $\beta_1(1/2M_2) > \tilde{\mu}$ is explicitly written as
\begin{eqnarray}
-2- 2\kappa \left(v'\left(\left(\frac{1}{2M_2}\right)^2\right)+\frac{1}{2M_2^2}v''\left(\left(\frac{1}{2M_2}\right)^2\right)\right) -4\kappa \varepsilon M_1 > \tilde{\mu}. 
\end{eqnarray}
From this inequality, if 
\begin{eqnarray}
\label{eq:ep3}
\varepsilon < - \frac{1}{2M_1} \left(v'\left(\left(\frac{1}{2M_2}\right)^2\right)+\frac{1}{2M_2^2}v''\left(\left(\frac{1}{2M_2}\right)^2\right)\right)
\end{eqnarray}
is satisfied, then we can obtain the condition for $\kappa$ as
\begin{eqnarray}
\kappa > \frac{\displaystyle 2 + \tilde{\mu}}{\displaystyle - 2 \left(v'\left(\left(\frac{1}{2M_2}\right)^2\right)+\frac{1}{2M_2^2}v''\left(\left(\frac{1}{2M_2}\right)^2\right)\right) -4\varepsilon M_1}. 
\end{eqnarray}
\end{proof}
\noindent
{\it Remark.}
The conditions for $q_b$ and $q_f$ ensuring the inequality (\ref{eq:ep3})
 are examined in \ref{app:Gaussian_qf_qb} and \ref{app:Lorentzian_qf_qb}. 

Next we prepare the lemma which will be used to prove 
the subsequent proposition. 
\begin{lmm}
\label{lmm:q1_o2}
If the parameters $\varepsilon$ and $\kappa$ satisfy 
\begin{eqnarray}
\label{eq:ep4_epsilon}
\varepsilon < \varepsilon_4, ~~~ \k > \k_4, 
\end{eqnarray}
where 
\begin{eqnarray}
\label{eq:ep4}
\varepsilon_4 := -2M_2 {q_1}^2 + 2q_1, 
\end{eqnarray}
and
$\kappa$ is
\begin{eqnarray}
\label{cond:kappa_4}
\k_4 :=  \frac{\displaystyle 2 q_f }{\displaystyle -M_2 {q_1}^2 + q_1 -\frac{1}{2} \varepsilon}, 
\end{eqnarray}
then 
\begin{eqnarray}
q_1 < \omega_2
\end{eqnarray}
holds. 
\end{lmm}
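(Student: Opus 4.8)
The plan is to exploit the definition of $\omega_1$ and $\omega_2$ in (\ref{eq:omega1})--(\ref{eq:omega2}): they are the two roots of the equation $L(q)-q_f=q_f$, i.e.\ of $L(q)=2q_f$, where $L(q)=-2q_f-2\kappa q(M_2q-1)-\kappa\varepsilon$ is a concave (downward-opening) quadratic in $q$. Hence it suffices to show that $L(q_1)>2q_f$: a concave quadratic that takes a value strictly above $2q_f$ at $q_1$ must cross the level $2q_f$ once on each side of $q_1$, so $q_1$ lies strictly between the two roots, and in particular $q_1<\omega_2$ (one also gets $\omega_1<q_1$ for free, though only $q_1<\omega_2$ is needed).

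It remains to check that the hypotheses imply $L(q_1)>2q_f$, which is a direct rearrangement. Evaluating at $q=q_1$, the inequality $L(q_1)>2q_f$ becomes $-2\kappa q_1(M_2q_1-1)-\kappa\varepsilon>4q_f$, that is $\kappa\bigl(2q_1-2M_2q_1^{2}-\varepsilon\bigr)>4q_f$. By the definition $\varepsilon_4:=-2M_2q_1^{2}+2q_1$ in (\ref{eq:ep4}), the bracket equals $\varepsilon_4-\varepsilon$, which is strictly positive under the first hypothesis $\varepsilon<\varepsilon_4$. Dividing through, the inequality is equivalent to $\kappa>4q_f/(\varepsilon_4-\varepsilon)=2q_f/\bigl(-M_2q_1^{2}+q_1-\frac{1}{2}\varepsilon\bigr)=\kappa_4$, which is precisely the second hypothesis (\ref{cond:kappa_4}). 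So the two hypotheses together are equivalent to $L(q_1)>2q_f$, and the lemma follows from the previous paragraph.

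As a side remark, this argument also shows that $\omega_1,\omega_2$ are automatically real under the hypotheses (there is no need to additionally invoke $\kappa>\kappa_2$ from (\ref{eq:positive_discriminant_k})), since a concave quadratic attaining a value above $2q_f$ necessarily has two distinct real roots, equivalently $\Delta(\kappa,q_b,q_f)>0$. I do not anticipate any genuine difficulty here: the proof reduces to one geometric observation about a downward parabola plus a linear rearrangement, and the only point requiring care is the sign of $\varepsilon_4-\varepsilon$ when dividing --- which is exactly why the assumption $\varepsilon<\varepsilon_4$ is imposed.
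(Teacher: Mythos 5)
Your proof is correct and is essentially the paper's argument in a different guise: the paper rearranges $\kappa>\kappa_4$ (using $\varepsilon<\varepsilon_4$ for the sign of the denominator) into $\left(q_1-\frac{1}{2M_2}\right)^2<\Delta(\kappa,q_b,q_f)$ and takes the square root, which is algebraically the same statement as your $L(q_1)>2q_f$ placing $q_1$ strictly between the two roots of the downward parabola. Your side remark that $\Delta>0$ follows automatically from the hypotheses is also correct (and consistent with $\varepsilon_4\le\varepsilon_2$, i.e.\ $\kappa_4\ge\kappa_2$), though the paper imposes $\kappa>\kappa_2$ separately elsewhere.
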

\begin{proof}
The inequality 
\begin{eqnarray*}
\kappa > \frac{\displaystyle 2 q_f }{\displaystyle -M_2 {q_1}^2 + q_1 -\frac{1}{2} \varepsilon}
\end{eqnarray*}
is rewritten as 
\begin{eqnarray*}
\left( q_1 - \frac{1}{2M_2} \right)^2 < \frac{1}{4 M_2^2}  - \frac{\varepsilon}{2 M_2} - \frac{2 q_f }{M_2}\cdot \frac{1}{\kappa}. 
\end{eqnarray*}
This leads to 
\begin{eqnarray*}
q_1 - \frac{1}{2M_2}  < \sqrt{\frac{1}{4 M_2^2}  - \frac{\varepsilon}{2 M_2} - \frac{2 q_f }{M_2}\cdot \frac{1}{\kappa}}, 
\end{eqnarray*}
which is rewritten as 
\begin{eqnarray*}
q_1    < \frac{1}{2M_2} + \sqrt{\frac{1}{4 M_2^2}  - \frac{\varepsilon}{2 M_2} - \frac{2 q_f }{M_2}\cdot \frac{1}{\kappa}}.  
\end{eqnarray*}
Recall the explicit expression for the discriminant $\Delta(\kappa, q_b, q_f)$ 
given in (\ref{eq:discrimant}), we find that 
\begin{eqnarray*}
q_1 < \omega_2. 
\end{eqnarray*}
\end{proof}

Next we provide a sufficient condition leading to 
(\ref{cond:Zbar}). 
\begin{prp}
If the condition
\begin{eqnarray}
q_f < q_b - q_0 
\end{eqnarray}
is satisfied respectively  for the Gaussian and Lorentzian potential case 
where 
\begin{eqnarray}
q_{0}^{(G)} = \sqrt{\frac{3}{2}},\\
q_{0}^{(L)} = 1. 
\end{eqnarray} 
Moreover, if the parameters $\varepsilon$ and $\kappa$ satisfy 
\begin{eqnarray}
\label{eq:e5_epsilon}
\varepsilon < \varepsilon_5, ~~~\k > \max\{\k_5,\k_6 \} , 
\end{eqnarray}
where 
\begin{eqnarray}
\varepsilon_5 := \frac{2}{3M_2} -\frac{2b}{3a}M_2, 
\end{eqnarray}
and 
\begin{eqnarray}
\label{def:kappa_5}
\k_5 := \frac{\displaystyle \tilde{\mu}-2 + \frac{12aq_f}{M_2} v''(1/M_2^2)}{\displaystyle \left( \frac{2a}{M_2^2} -2b -\frac{3a}{M_2}\varepsilon \right)v''(1/M_2^2)},\\
\label{def:kappa_6}
\k_6 := \frac{\displaystyle \tilde{\mu} -2 }{\displaystyle 2(aq_f^2 - b)v''(q_f^2)}, 
\end{eqnarray}
then we have 
\begin{eqnarray}
F'(q) < -\tilde{\mu}, \quad for \quad \omega_2 < q < q_f. 
\end{eqnarray}
Here the constants $a$ and $b$ are given by 
\begin{eqnarray}
a^{(G)} = 2,\quad b^{(G)} = 1,  \\
a^{(L)} = \frac{3}{2},\quad b^{(L)} = \frac{1}{2}, 
\end{eqnarray}
for the Gaussian and Lorentzian potential case, respectively. 
\end{prp}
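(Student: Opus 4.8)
The plan is to mirror, with reversed inequalities, the argument of the proposition that established $F'(q)>\tilde{\mu}$ on $(0,\omega_1)$: instead of a lower bound for $F'$ on $(0,\omega_1)$ I will construct an \emph{upper} bound for $F'$ on $(\omega_2,q_f)$ and push it below $-\tilde{\mu}$. The first ingredient is the algebraic identity behind the constants $a,b$: for both potentials one checks directly that $v'(q^2)+2q^2v''(q^2)=(aq^2-b)\,v''(q^2)$ with $v''>0$, where $(a,b)=(2,1)$ (Gaussian) and $(a,b)=(3/2,1/2)$ (Lorentzian); note that then $q_1=\sqrt{b/a}$ is the unique positive zero of $aq^2-b$. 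Setting $\psi(x):=x\,v'(x^2)$ and $h(q):=\psi(q-q_b)-\psi(q+q_b)$, the explicit form of $F'(q)$ recalled just above the proposition reads $F'(q)=-2-2\kappa(aq^2-b)v''(q^2)-2\kappa\varepsilon\,h(q)$. Since $\psi$ is odd with $\psi(x)<0$ for $x>0$, and both $q_b-q$ and $q_b+q$ are positive on $(\omega_2,q_f)$ (here $q_f<q_b$, which follows from the hypothesis $q_f<q_b-q_0$), we get $h(q)=-\bigl(\psi(q_b-q)+\psi(q_b+q)\bigr)\ge 0$ there; moreover $\varepsilon>0$ directly from its defining formula since $v$ is decreasing and $v'<0$. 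Hence $-2\kappa\varepsilon\,h(q)\le 0$ and may be discarded:
\[
F'(q)\ \le\ \beta_2(q)\ :=\ -2-2\kappa(aq^2-b)v''(q^2)\qquad\mbox{for }\ \omega_2<q<q_f .
\]
As in Lemma~\ref{lmm:extrema}, the hypothesis $q_f<q_b-q_0$ serves to keep the shifted arguments $q\mp q_b$ away from the profile's extrema $\pm q_0$, so that the functions in play remain monotone on the interval under study.

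Next I would analyse $g(q):=(aq^2-b)v''(q^2)=v'(q^2)+2q^2v''(q^2)$ on $[\omega_2,q_f]$. By Definition~\ref{dfn:q0} and the remark following it, $q_0$ is the unique positive critical point of $g$, with $g$ increasing on $(0,q_0)$ and decreasing on $(q_0,\infty)$; also $g(q_1)=0$ and $g>0$ on $(q_1,q_f)$. From $\Delta(\kappa,q_b,q_f)<1/(4M_2^2)$ one gets $\omega_2=1/(2M_2)+\sqrt{\Delta}<1/M_2$; a direct numerical check gives $1/M_2<q_0$ in both cases (e.g.\ $1/M_2^{(G)}=\sqrt{e/2}<\sqrt{3/2}$); and Lemma~\ref{lmm:q1_o2} gives $q_1<\omega_2$. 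Thus $q_1<\omega_2<1/M_2<q_0$, so $g$ is increasing on $[\omega_2,1/M_2]$, and unimodality then gives $g(q)\ge\min\{g(\omega_2),g(q_f)\}$ for every $q\in[\omega_2,q_f]$. It therefore suffices to ensure $\beta_2<-\tilde{\mu}$ at the two endpoints, i.e.\ $2\kappa\,g(q_f)>\tilde{\mu}-2$ and $2\kappa\,g(\omega_2)>\tilde{\mu}-2$ (recall $\tilde{\mu}>2$).

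The $q_f$-endpoint is immediate: $2\kappa(aq_f^2-b)v''(q_f^2)>\tilde{\mu}-2$ is precisely $\kappa>\kappa_6$. The $\omega_2$-endpoint is the crux, and the main obstacle, because $\omega_2$ itself depends on $\kappa$ and $\varepsilon$ and can lie arbitrarily close to $q_1$, where $g$ vanishes, so the naive threshold $\kappa>(\tilde{\mu}-2)/(2g(\omega_2))$ would have $\kappa$ on both sides. I break this circularity by bounding $\omega_2$ below through $\varepsilon$ and $1/\kappa$ only: writing $4M_2^2\Delta=1-2M_2\varepsilon-8M_2q_f/\kappa$ and applying $\sqrt{1-x}\ge 1-x$ gives $\sqrt{\Delta}\ge\bigl(1-2M_2\varepsilon-8M_2q_f/\kappa\bigr)/(2M_2)$, whence, using $\omega_2^2=1/(2M_2^2)-\varepsilon/(2M_2)-2q_f/(M_2\kappa)+\sqrt{\Delta}/M_2$,
\[
\omega_2^2\ \ge\ \frac{1}{M_2^2}-\frac{3\varepsilon}{2M_2}-\frac{6q_f}{M_2\kappa}.
\]
Since $v''$ is decreasing and $\omega_2<1/M_2$, so that $v''(\omega_2^2)\ge v''(1/M_2^2)$, this gives a lower bound for $g(\omega_2)$ that is affine in $1/\kappa$; solving $2\kappa\,g(\omega_2)>\tilde{\mu}-2$ for $\kappa$ then produces exactly $\kappa>\kappa_5$, with the requirement that the $\kappa$-coefficient in that linear inequality be strictly positive amounting precisely to $\varepsilon<\varepsilon_5=2/(3M_2)-2bM_2/(3a)$. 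Collecting both endpoints gives $\kappa>\max\{\kappa_5,\kappa_6\}$, which is the claim. The genuinely delicate points are establishing the chain $q_1<\omega_2<1/M_2<q_0$ (so that the minimum of the unimodal $g$ sits at an endpoint) and propagating the constants cleanly through the $\sqrt{1-x}$ estimate so that they land on the stated $\kappa_5,\varepsilon_5$; the remainder is routine monotonicity bookkeeping plus the numerical checks $1/M_2<q_0$.
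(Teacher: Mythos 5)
Your proposal is correct and follows essentially the same route as the paper's proof: the same upper bound $\beta_2(q)=-2-2\kappa(aq^2-b)v''(q^2)$ obtained by discarding the nonnegative $\varepsilon$-term, the same endpoint reduction via the chain $q_1<\omega_2<1/M_2<q_0$ and unimodality of $(aq^2-b)v''(q^2)$, and the same decircularized lower bound $\omega_2^2\ge \frac{1}{M_2^2}-\frac{3\varepsilon}{2M_2}-\frac{6q_f}{M_2\kappa}$ combined with $v''(\omega_2^2)\ge v''(1/M_2^2)$ to extract $\kappa_5$ and $\varepsilon_5$. Your derivation of that bound via $\sqrt{1-x}\ge 1-x$ is equivalent to the paper's $\Delta<\sqrt{\Delta}/(2M_2)$ step, so the two arguments coincide in substance.
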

\begin{proof}
We first introduce the function giving an upper bound of $F'(q)$:
\begin{eqnarray}
\beta_2(q) := -2- 2\kappa (v'(q^2)+2q^2v''(q^2)) . 
\end{eqnarray}
As is the case in the Lemma \ref{lmm:extrema}, 
we here assume the same inequality as (\ref{con_mild}). 
Since for $0 < q < q_f$, 
\begin{eqnarray}
(q-q_b)v'((q-q_b)^2)-(q+q_b)v'((q+q_b)^2) > 0
\end{eqnarray}
is satisfied, it is easy to show that 
\begin{eqnarray}
F'(q) < \beta_2(q)
\end{eqnarray}
holds. 
Taking into account of the fact that $1/2M_2 < \omega_2 < 1/M_2$, 
and using concrete values for $q_0$ in each potential case 
(see Definition \ref{dfn:q0}),  we obtain 
\begin{eqnarray}
\frac{1}{M_2^{(G)}} =  \sqrt{\frac{e}{2}} < \sqrt{\frac{3}{2}} = q_0^{(G)} , \\
\frac{1}{M_2^{(L)}} = \frac{54}{25\sqrt{5}} < 1 = q_0^{(L)}. 
\end{eqnarray}
Hence we can show that $\beta_2(q)$ monotonically decreases for $\omega_2<q<q_0$, 
and monotonically increases for $q_0<q<q_f$. 
Therefore, if $\beta_2(\omega_2)<-\tilde{\mu}$ and $\beta_2(q_f)<-\tilde{\mu}$ 
are satisfied, then the condition (\ref{cond:Zbar}) follows. 

We first provide the condition for $\k$ derived from 
the condition $\beta_2(\omega_2)<-\tilde{\mu}$. 
To this end, we rewrite the function $\beta_2$ as
\begin{eqnarray}
\beta_2(\omega_2) = -2- 2\kappa \left(\frac{v'(\omega_2^2)}{v''(\omega_2^2)}+2\omega_2^2 \right)v''(\omega_2^2). 
\end{eqnarray}
Note that an explicit expression in the large bracket takes the form as
\begin{eqnarray}
\frac{{v'}(q^2)}{{v''}(q^2)}+2q^2 = aq^2 - b, 
\end{eqnarray}
where 
\begin{eqnarray}
a^{(G)} = 2,\quad b^{(G)} = 1,  \\
a^{(L)} = \frac{3}{2},\quad b^{(L)} = \frac{1}{2}. 
\end{eqnarray}
Note that $v''(q^2)$ monotonically decreases for $q>0$ and 
Lemma \ref{lmm:q1_o2} leads to the inequality 
 $a \omega_2^2 - b > 0$. %holds. 
The latter is deduced from the concrete expression for $aq^2 - b$. 
As a result, 
$-  \left(a\omega_2^2 -b \right) v''(q^2)$ turns out to be a monotonically 
increasing function for $q>0$.  
Recall the trivial inequality $1/(2 M_2) < \omega_2 < 1/M_2$, which immediately leads to 
\begin{eqnarray}
v''(\omega_2^2 )> v''(1/M_2^2 ).
\label{ineq:v''}
\end{eqnarray}
Using (\ref{ineq:v''}), we have 
\begin{eqnarray}
\beta_2(\omega_2) < -2- 2\kappa \left(a\omega_2^2 -b \right) v''(1/M_2^2 ). % v''(q_1^2) .
\end{eqnarray}
Recall the form (\ref{eq:discrimant}) of the discriminant $\Delta(\kappa, q_b, q_f)$,
and 
a trivial inequality $\varepsilon /(2 M_2) + 2 q_f  / (M_2 \kappa) > 0$, we can show that 
\begin{eqnarray}
\Delta(\kappa, q_b, q_f) <  \frac{1}{{2M_2}} \sqrt{\Delta(\kappa, q_b, q_f)} . 
\end{eqnarray}
and
\begin{eqnarray}
\omega_2^2 > \frac{1}{4M_2^2}+ 3\Delta(\kappa, q_b, q_f).
\end{eqnarray}
From this fact, we find 
\begin{eqnarray}
\beta_2(\omega_2) < -2- 2\kappa \left(a\left( \frac{1}{4M_2^2}+ 3\Delta \right) -b \right) v''(1/M_2^2 ).
\end{eqnarray}
Hence the inequality 
\begin{eqnarray}
-2- 2\kappa \left(a\left( \frac{1}{4M_2^2}+ 3\Delta \right) -b \right)  v''(1/M_2^2 )
< -\tilde{\mu}
\end{eqnarray}
is a sufficient condition leading to $\beta_2(\omega_2)<-\tilde{\mu}$. 
Inserting the form of the discriminant (\ref{eq:discrimant}), we have 
\begin{eqnarray}
\label{eq:kp5_1_1}
-2- 2\kappa \left(a\left( \frac{1}{M_2^2} - \frac{3}{2M_2}\varepsilon - \frac{6q_f}{M_2}\cdot \frac{1}{\k} \right) -b \right) v''(1/M_2^2 )
  < -\tilde{\mu}, \nonumber 
\end{eqnarray}
which is rewritten as 
\begin{eqnarray}
\label{eq:kp5_1_2}
\k\left( \left( \frac{2a}{M_2^2} -2b -\frac{3a}{M_2}\varepsilon \right) v''(1/M_2^2 ) 
 \right) > \tilde{\mu}-2 + \frac{12aq_f}{M_2} v''(1/M_2^2 ).
\end{eqnarray}
Assuming that 
\begin{eqnarray}
\label{eq:ep5_1_3}
\frac{2a}{M_2^2} -2b -\frac{3a}{M_2}\varepsilon > 0 , %\nonumber \\ 
\end{eqnarray}
or equivalently, 
\begin{eqnarray}
\label{eq:ep5}
\varepsilon < \frac{2}{3M_2} -\frac{2b}{3a}M_2, 
\end{eqnarray}
we finally reach the condition for $\k$ as
\begin{eqnarray}
\label{eq:kp5_1_4}
\kappa > \frac{\displaystyle \tilde{\mu}-2 + \frac{12aq_f}{M_2} v''(1/M_2^2 )}{\displaystyle \left( \frac{2a}{M_2^2} -2b -\frac{3a}{M_2}\varepsilon \right)v''(1/M_2^2 )}. 
\end{eqnarray}

Our next task is to provide the condition derived from the second 
condition $\beta_2(q_f)<-\tilde{\mu}$, 
but this can easily be achieved
by solving 
this explicitly for $\k$ to give
\begin{eqnarray}
\kappa > \frac{\displaystyle \tilde{\mu} -2 }{\displaystyle 2(aq_f^2 - b)v''(q_f^2)}. 
\end{eqnarray}
\end{proof}
\noindent
{\it Remark.}
The conditions for $q_b$ and $q_f$ ensuring the first inequality in (\ref{eq:ep4_epsilon})
 are examined in \ref{app:Gaussian_qf_qb} and \ref{app:Lorentzian_qf_qb}.

\section{Uniform hyperbolicity theorems}

Collecting all the conditions obtained so far, we finally provide 
sufficient conditions for Gaussian and Lorentzian potential cases. 
\begin{mth}[Gaussian potential]
Let $q_b$ and $q_f$ be taken such that the following conditions
\begin{eqnarray}
%q_f > q_c^{(G)}, 
q_f > \frac{3}{\sqrt{2}}
\end{eqnarray}
and 
\begin{eqnarray}
q_f + \sqrt{\frac{3}{2}}<q_b < q_f + \sqrt{q_f^2 -\log{q_f} + \log \left(\varepsilon^{(G)}_{3} \frac{e^2-1}{2 e^2}\right)}, 
 \end{eqnarray}
are satisfied. 
$\varepsilon^{(G)}_{3}$(\ref{eq:epkp3}) is the constant given as
\begin{equation}
 \varepsilon^{(G)}_{3} = \sqrt{\frac{e}{2}}\left( 1- \frac{e}{4}\right)e^{-\frac{e}{8}}\approx 0.26594
\end{equation}
Recall $\kappa_i(1\leq i \leq 6)$ (\ref{def:kappa1}),(\ref{def:kappa2}),(\ref{def:kappa3}),(\ref{cond:kappa_4}),(\ref{def:kappa_5}), and (\ref{def:kappa_6}), respectively.
Then, for 
\begin{eqnarray}
\label{kappa_cond_G}
\k^{(G)} > \max_{1\leq i \leq 6}\kappa^{(G)}_i
\end{eqnarray}
the nonwandering set of the map (\ref{eq:map1}) with the Gaussian potential (\ref{eq:Gaussian}) 
is a topological horseshoe and uniformly hyperbolic. 
\end{mth}
\begin{mth}[Lorentzian Potential]
Let $q_b$ and $q_f$ be taken such that the following conditions
\begin{eqnarray}
q_b >4
\end{eqnarray}
and 
\begin{eqnarray}
\sqrt{ \frac{\displaystyle -q_b^2 +q_b\sqrt{2\varepsilon_3q_b^3 + 8\varepsilon_3q_b -4}}{2\varepsilon_3 q_b-1} - 1}< q_f <\sqrt{\frac{1}{2}q_b^2 + 1} 
\end{eqnarray}
are satisfied. 
$\varepsilon^{(L)}_{3}$ (\ref{eq:epkp3}) is the constant given as
\begin{equation}
\varepsilon^{(L)}_{3} = \frac{8}{3\sqrt{3}} \frac{5^{10}}{\displaystyle \left( 5^5 + 3^6\right)^3}\left( 5^5 - 3^7\right)
 \approx 0.24636
\end{equation}
Recall $\kappa_i(1\leq i \leq 6)$ (\ref{def:kappa1}),(\ref{def:kappa2}),(\ref{def:kappa3}),(\ref{cond:kappa_4}),(\ref{def:kappa_5}), and (\ref{def:kappa_6}), respectively.
Then, for 
\begin{eqnarray}
\label{kappa_cond_L}
\k^{(L)} > \max_{1\leq i \leq 6}\kappa^{(L)}_i
\end{eqnarray}
the nonwandering set of the map (\ref{eq:map1}) with the Lorentzian potential (\ref{eq:Lorentzian}) 
is a topological horseshoe and uniformly hyperbolic. 
\end{mth}

In the Appendices A and B, we examine which constant $\k_i (1 \le i \le 6)$ provides 
the strongest condition in (\ref{kappa_cond_G}) and (\ref{kappa_cond_L}). 
It turns out that for the Gaussian potential case this is 
the condition associated with the behavior of the potential function around 
the fixed points while the conditions coming from another region become most dominant for the Lorentzian potential case. 
Note that this difference originates from the fact that stable and unstable manifolds for 
fixed point emanate with a finite but small angle in the Gaussian potential case, while
the corresponding angle for the Lorentzian case is larger. 

\section{Summary and discussion}
In recent years fractal Weyl laws and related eigenfunction hypothesis have numerically been studied in a wealth of model systems. These comprise of baker maps with artificial projective openings, billiard systems, and artificial mathematical paradigms such as the H{\'e}non map or systems of constant negative curvature.
In this work, we focus on simple 2-dimensional scattering maps, which admit a closed-form expression.
For these 2-dimensional scattering maps, previously no model system existed, which unifies all of the following desirable properties at once:
(i) A proof of uniform hyperbolicity in some parameter region.
(ii) Being amenable to a well-established method for numerically treating quantum resonance states.
(iii) Obeying free motion in the asymptotic scattering regions.
And (iv) comprising of fully analytic potentials.
To be specific, most of the Baker maps being used, utilize artificial projective openings and hence violate property (iii) and (iv). Model systems such as the H{\'e}non map or systems of constant negative curvature on the other hand, do not obey free motion in the asymptotic region. Hence, they violate condition (iii).

In this article, we establish the first family of 2-dimensional scattering systems which obey all four desirable properties. More specifically, our previous work \cite{NS18} demonstrated properties (ii), (iii), and (iv) and in particular the correct numerical treatment via complex scaling and weak absorbing potentials. In contrast, this paper adds the proof of uniform hyperbolicity for the proposed scattering systems in the limit of large kick-strengths and thus contributes property (i). We believe that this should be valuable to further test fractal Weyl laws and related topics such as chaotic eigenfunction hypothesis \cite{Ketz18}. To illustrate this in more detail, we now discuss a couple of follow on studies.

A first follow on study from this paper would need to establish that the fractal properties of the repeller in our model system can be controlled to some degree. While analytical methods may be out of question it is conceivable that fractal properties of the classical repeller and trapped set can be tuned by successively lowering the kick strength $\kappa$. Establishing such properties could be done using the usual numerical methods of box counting \cite{Lin02,Ketz13,Novaes13,Altmann}. Alternatively, one could employ rigorous mathematical methods using computer assisted methods \cite{Arai08}.

Next, after establishing that fractal properties can be controlled via the kick-strength $\kappa$, it would be valuable to visit original numerical studies of the fractal Weyl law \cite{Lin02,Zworski02,Lu03,Schomerus,Nonnen05,Ketz13,Borthwick14}.
In particular, using the absorbing potential method described in Ref.~\cite{NS18} it should be possible to tune our model system from the regime where quantum resonances are treated with an exact numerical method that produces quantum resonances identical to complex scaling (weak absorbing potentials), and continuously cross over to the regime of projective openings used by Ref. \cite{Nonnen05}.
This would allow to assess whether their original observations still hold and whether there are any additional effects to be observed, for example due to diffractive effects from the absorbing potentials.

Finally, it would further be valuable to validate that results related to eigenfunction rather than eigenvalue hypothesis in chaotic scattering systems \cite{Ketz18}.
In particular, it would be valuable to validate whether such results remain valid, if considered in more realistic model systems as ours where quantum resonance states can be computed with theoretically grounded methods.

\ack
The authors are grateful to Zin Arai for his valuable comments. 
This work has been supported by JSPS KAKENHI Grant
Numbers 17K05583. 
N.M. acknowledges financial support by Deutsche Forschungsgemeinschaft(DFG) via Grant No. ME 4587/1-1, in the very early stage of this project.
\appendix

\section{Conditions for the Gaussian function}
\label
{app:Gaussian}
\subsection{Some properties of $V'(q)$}
\label{app:Gaussian_out}

In this Appendix, 
we show that the derivative $V'(q)$ has zeros only at $q = \pm q_f$ and $q=0$, 
and that $V'(q) < 0$ holds for $q > q_f$. 

As for the zeros of $V'(q)$, it is enough to show that 
the equation $V'(q) = 0$ has a solution $q = q_f$ for $q > 0$ because $V'(-q) =-V'(q)$ 
and $V'(0)=0$ are satisfied. 
From an explicit form of $V'(q)$: 
\begin{eqnarray}
\label{app:derivative}
V'(q) = 2 \kappa q e^{-q^2} \left( q - q_f \frac{\sinh{2q_b q}}{\sinh{2q_b q_f}} \right), 
\end{eqnarray}
we can easily see that $q=q_f$ is a solution of $V'(q) =0$. 
The first derivative of the terms in the bracket give 
\begin{eqnarray}
1 - 2q_b q_f \frac{\cosh{2q_b q}}{\sinh{2q_b q_f}} 
\end{eqnarray}
and the second derivative, 
\begin{eqnarray}
- 2{q_b}^2 q_f \frac{\sinh{2q_b q}}{\sinh{2q_b q_f}}. 
\end{eqnarray}
The latter is negative for $q>0$ because $q_b$ and $q_f$ are positive constants, 
thus the first derivative monotonically decreases for $q>0$. 
Furthermore, the first derivative takes the value of
\begin{eqnarray}
1 -  \frac{2q_b q_f}{\sinh{2q_b q_f}} 
\end{eqnarray}
at $q=0$, which turns to be positive using the fact that 
that the inequality $x/\sinh x < 1$ holds for $x > 0$. 
On the other hand, at $q = q_f$, the first derivative is given as 
\begin{eqnarray}
1 -  \frac{2q_b q_f}{\tanh{2q_b q_f}} , 
\end{eqnarray}
which is negative because since the inequality $x/\tanh x > 1$ holds for $x > 0$. 
Consequently, we can say that the terms in the bracket in (\ref{app:derivative})
 has a unique extremum, and so that there are no zeros in the range of $0 < q < q_f$
 since  $q=0$ and $q_f$ are zeros of the function $V'(q)$. 
The above consideration also implies that the terms of the bracket 
is negative for $q_f < q$, which results in the property, 
\begin{eqnarray}
V'(q) < 0 \quad {\rm for} \quad q_f < q. 
\end{eqnarray}

\subsection{The proof of Lemma \ref{lmm:extrema}}
\label{app:Gaussian_mild}
We here prove that 
$F(q) -q_f < q_f$ or equivalently $F(q) -2q_f < 0$ for $-q_f<q<0$, 
and $F(q)$ has a unique local maximum 
in $0<q<q_f$ for the Gaussian potential case. 

First we rewrite $F^{(G)}(q) - 2q_f$ as 
\begin{eqnarray}
F^{(G)}(q) -2q_f
&=& -2q -2q_f + 2\kappa q e^{-q^2} - \k \varepsilon \left( e^{-(q-q_b)^2} -e^{-(q+q_b)^2} \right) \nonumber \\
&=& 
 -2q -2q_f + 2 \k  e^{- q^2} \widetilde{F^{(G)}}(q), 
\end{eqnarray}
where 
\begin{eqnarray}
\widetilde{F^{(G)}}(q) := q - \frac{1}{2} \varepsilon e^{-{q_b}^2}  \left( e^{2 q_b q} -e^{-2q_b  q} \right). 
\end{eqnarray}
For the Gaussian case, we have 
\begin{eqnarray}
\varepsilon = \frac{q_f}{e^{-q_b^2} \sinh(2q_fq_b)}. 
\end{eqnarray}

Since 
\begin{eqnarray}
\widetilde{F^{(G)}}'(q) &=& 1 - \varepsilon q_be^{-{q_b}^2}  \left( e^{2q_b q} +e^{-2q_b  q} \right) \nonumber \\
 &=& 1 - \frac{q_f q_b}{\sinh{2q_f q_b}} \left( e^{2q_b q} +e^{-2q_b  q} \right), 
\end{eqnarray}
we can find the solutions for $\widetilde{F^{(G)}}'(q) = 0$ as
\begin{eqnarray}
q = \pm \frac{1}{2q_b} \log {\frac{\sinh{2q_f q_b}+ \sqrt{\sinh^2{2q_f q_b}- 4q_f^2 q_b^2 }}{2q_f q_b}}. 
\end{eqnarray}
This implies that $\widetilde{F^{(G)}}(q)$  has a unique local minimum for $-q_f < q<0$. 
Using the mean value theorem, together with the fact $\widetilde{F^{(G)}}(-q_f) = \widetilde{F^{(G)}}(0) = 0$, 
we can say that $\widetilde{F^{(G)}}(q) < 0$ for $-q_f < q<0$. 
As a result, $ 2 \k  e^{-q^2} \widetilde{F^{(G)}}(q) < 0$ holds 
and thus $F(q) - 2q_f < 0$ follows for $-q_f < q<0$. 

Next we will show that 
${F^{(G)}}(q)$ has a unique local maximum for $0 < q < q_f$ provided that 
the condition 
\begin{eqnarray}
\label{app:G_assum}
q_f <q_b - \sqrt{\frac{3}{2}}
\end{eqnarray}
is satisfied (see figure A1). 
To this end we consider the behavior of the first and second derivative 
\begin{eqnarray}
\fl
{F^{(G)}}'(q) &=& -2  + 2 \kappa  (1 - 2 q^2) e^{-q^2} + 2 \k \varepsilon \left( \left(q-q_b\right)e^{-(q-q_b)^2} -\left(q+q_b\right)e^{-(q+q_b)^2} \right)  
\\ 
\fl
{F^{(G)}}''(q) &=& 4 \kappa q (2 q^2 - 3) e^{-q^2} \nonumber \\
\fl
&+& 2 \k \varepsilon \left( \left(1 -  2(q-q_b)^2\right)e^{-(q-q_b)^2} -\left(1 -  2(q+q_b)^2\right)e^{-(q+q_b)^2} \right) . 
\end{eqnarray}

We first consider the second derivative ${F^{(G)}}''(q)$. 
Due to the condition (\ref{app:G_assum}), 
the term $\left(1 -  2(q-q_b)^2\right)e^{-(q-q_b)^2}$ monotonically 
decreases for $0 < q < q_f$. 
In a similar way, the term $-\left(1 -  2(q+q_b)^2\right)e^{-(q+q_b)^2}$
monotonically decreases for $0 < q$.
Combining these with the fact that $\k>0$ and $\varepsilon>0$, 
\begin{eqnarray}
\label{app:G_kpep}
\fl
\quad \quad
2 \k \varepsilon \left( \left(1 -  2(q-q_b)^2\right)e^{-(q-q_b)^2} -\left(1 -  2( q+q_b)^2\right)e^{-(q+q_b)^2} \right)
\end{eqnarray}
monotonically decreases for $0 < q < q_f$.
In addition, since (\ref{app:G_kpep}) is zero at $q=0$,
\begin{eqnarray}
\label{app:G_hill}
\fl
\quad \quad
2 \k \varepsilon \left( \left(1 -  2(q-q_b)^2\right)e^{-(q-q_b)^2} -\left(1 -  2( q+q_b)^2\right)e^{-(q+q_b)^2} \right) < 0
\end{eqnarray}
holds for $0 < q < q_f$.
As for the second derivative ${F^{(G)}}''(q)$, 
we can show from the inequality (\ref{app:G_hill}) that 
\begin{eqnarray}
{F^{(G)}}''(q) <  4 \kappa q (2q^2 - 3) e^{-q^2}. 
\end{eqnarray}
Since $4 \kappa q (2 q^2 - 3) e^{-q^2} < 0 $ for $0 < q < \sqrt{3/2}$, 
the following holds for $0 < q < \sqrt{3/2}$
\begin{eqnarray}
\quad\quad {F^{(G)}}''(q) < 0. 
\end{eqnarray}
This shows that ${F^{(G)}}'(q)$ monotonically decreases 
for $0 < q < \sqrt{3/2}$. 
From (\ref{app:G_hill}), we can show that the terms 
\begin{equation}
2 \k \varepsilon \left( (q-q_b)e^{-(q-q_b)^2} -(q+q_b)e^{-(q+q_b)^2} \right)
\end{equation}
monotonically decreases for $0<q<q_f$, and take a negative value at $q=0$, 
Thus we have 
\begin{equation}
2 \k \varepsilon \left( (q-q_b)e^{-(q-q_b)^2} -(q+q_b)e^{-(q+q_b)^2} \right) < 0, 
\end{equation}
which leads to 
\begin{eqnarray}
\quad {F^{(G)}}'(q) < 2\kappa (1 - 2 q^2) e^{-q^2},
\end{eqnarray}
for $0 < q < q_f$. 
Now, the fact that $2 \kappa (1 - 2 q^2) e^{-q^2} < 0$ holds for $1/\sqrt{2}< q$ 
leads to 
the following inequality for $1/\sqrt{2}< q$, 
\begin{eqnarray}
\label{app:G_positive}
\quad {F^{(G)}}'(q) < 0. 
\end{eqnarray}
Hence the function $F^{(G)}(q)$ has at most a single 
local maximum for $0<q<q_f$.

\subsection{Conditions for $q_b$ and $q_f$}
\label{app:Gaussian_qf_qb}

We here check the conditions required for $q_b$ and $q_f$. 
First we recall the condition (\ref{con_mild}), which is 
necessary to prove the Lemma \ref{lmm:extrema} (see \ref{app:Gaussian_mild}). 
Next we take a look at the inequalities required to derive the conditions for $\varepsilon^{(G)}_i  (i = 1, \cdots, 5)$ (see (\ref{eq:horseshoe_condition3}),(\ref{eq:positive_discriminant}),(\ref{cond_prop62}),(\ref{eq:ep4_epsilon}) and (\ref{eq:e5_epsilon})) : 
\begin{eqnarray}
&& 
\varepsilon< \varepsilon^{(G)}_{1}  = \sqrt{\frac{2}{e}} \approx 0.85776, 
\\
&&
\varepsilon < \varepsilon^{(G)}_{2} = \sqrt{\frac{e}{8}} \approx 0.58291, 
\\
&&  \varepsilon< \varepsilon^{(G)}_{3} = \sqrt{\frac{e}{2}}\left( 1- \frac{e}{4}\right)e^{-\frac{e}{8}}\approx 0.26594,
 \\
&&
\varepsilon < \varepsilon^{(G)}_{4} =-\sqrt{\frac{2}{e}}+\sqrt{2} \approx 0.55644, 
\\
&&
\varepsilon < \varepsilon^{(G)}_{5} = \frac{1}{3}\sqrt{\frac{2}{e}} \left( e - 1\right)\approx 0.49129.
\end{eqnarray}
By comparing concrete values, we obtain 
\begin{eqnarray}
\varepsilon^{(G)}_{1} > 
\varepsilon^{(G)}_{2} > 
\varepsilon^{(G)}_{4} > 
\varepsilon^{(G)}_{5} > 
\varepsilon^{(G)}_{3}
\end{eqnarray}
and so finally conclude that the condition 
\begin{eqnarray}
\label{eq:Appendix_conditionC}
\varepsilon(q_b,q_f) < \varepsilon^{(G)}_{3}
\end{eqnarray}
turns out to be the strongest one. 
Now using an explicit expression (\ref{eq:epsilon}) for $\varepsilon$ for 
the Gaussian potential case, 
\begin{eqnarray}
\varepsilon &=&  -\frac{2q_f{v^{(G)}}'(q_f^2)}{v^{(G)} \bigl( (q_f-q_b)^2) - v^{(G)} ((q_f+q_b)^2 \bigr)  } \nonumber \\
&=&\frac{\displaystyle 2q_fe^{-q_f^2}}{\displaystyle e^{-(q_f-q_b)^2} - e^{-(q_f+q_b)^2}} , 
\end{eqnarray}
we can rewrite the condition (\ref{eq:Appendix_conditionC}) as 
\begin{eqnarray}
 \frac{\displaystyle 2q_fe^{-q_f^2}}{\displaystyle e^{-(q_f-q_b)^2} - e^{-(q_f+q_b)^2}}  < \varepsilon^{(G)}_3. 
\end{eqnarray}
Now if we assume that 
\begin{eqnarray}
\label{eq:suf_out_g}
2q_fq_b>1, 
\end{eqnarray}
then 
the inequality 
\begin{eqnarray}
\label{eq:ep_ineq}
e^{-(q_f-q_b)^2} - e^{-(q_f+q_b)^2} &=& e^{-(q_f-q_b)^2} (1 - e^{-4q_f q_b}) \nonumber \\
&>& e^{-(q_f-q_b)^2}(1 -e^{-2})
\end{eqnarray}
holds.
Using (\ref{eq:ep_ineq}), we can show that the condition 
\begin{eqnarray}
\label{eq:Appendix_conditionE}
 \frac{\displaystyle 2 q_f e^{-q_f^2}}{\displaystyle e^{-(q_f-q_b)^2}(1 -e^{-2})}  < \varepsilon^{(G)}_3 , 
\end{eqnarray}
leads to the condition (\ref{eq:Appendix_conditionC}) (see figure A2). 
Rewriting this to the inequality for $q_b$, we obtain
\begin{eqnarray}
\label{eq:q_f}
\fl
q_f - \sqrt{q_f^2 - \log q_f + \log \varepsilon_3\frac{e^2 -1 }{2e^2}} < q_b <  q_f + \sqrt{q_f^2 - \log q_f + \log \varepsilon_3\frac{e^2 -1 }{2e^2}} . 
\end{eqnarray}
Here note that 
\begin{eqnarray}
q_f - \sqrt{q_f^2 - \log q_f + \log \varepsilon_3\frac{e^2 -1 }{2e^2}} < q_f + \sqrt{\frac{3}{2}}. 
\end{eqnarray}
Combined this with the condition (\ref{con_mild}), 
it is enough to consider the following inequality
\begin{eqnarray}
\label{eq:A37}
q_f + \sqrt{\frac{3}{2}} < q_b <  q_f + \sqrt{q_f^2 - \log q_f + \log \varepsilon_3\frac{e^2 -1 }{2e^2}}. 
\end{eqnarray}
Differentiating the function in the root with respect to $q_f$, we find
\begin{eqnarray}
2q_f - \frac{1}{q_f} =\frac{1}{q_f}\left( 2 q_f^2 - 1 \right). 
\end{eqnarray}
Therefore, one of the branch that makes the function in the root positive should be 
contained in $ 0 < q_f < 1/ \sqrt{2}$, and the other in $q_f > 1/ \sqrt{2}$. 

First, we show it unnecessary to consider the case with $ 0 < q_f < 1/ \sqrt{2}$. 
Using the explicit value of $\varepsilon_3$, 
we obtain 
\begin{eqnarray}
\varepsilon_3\frac{e^2 -1 }{2e^2} < \frac{1}{\sqrt{2e}}, 
\end{eqnarray}
which allows us to rewrite the inequality (\ref{eq:A37}) as
\begin{eqnarray}
q_f + \sqrt{\frac{3}{2}} < q_b <  
q_f + \sqrt{q_f^2 - \log q_f + \log \frac{1}{\sqrt{2e}}} . 
\end{eqnarray}
On the other hand, the following relation 
\begin{eqnarray}
q_f + \sqrt{q_f^2 - \log q_f + \log \frac{1}{\sqrt{2e}}} 
< 
\frac{1}{2q_f}, 
\end{eqnarray}
holds because the function 
\begin{eqnarray}
 \frac{1}{2q_f} -  q_f - \sqrt{q_f^2 - \log q_f + \log \frac{1}{\sqrt{2e}}} \, .
\end{eqnarray}
monotonically decreases for $0 < q_f < 1/\sqrt{2} $ and is equal to zero at $q_f = 1/\sqrt{2}$. 
However, the inequality $q_b < 1/2q_f$ contradicts the assumption 
(\ref{eq:suf_out_g}). Hence, we do not have to consider the branch contained in $ 0 < q_f < 1/ \sqrt{2}$. 

As for the case with $q_f > 1/\sqrt{2}$, it is not clear whether there actually exists 
the domain $(q_b,q_f)$ in which the inequality (\ref{eq:A37}) holds. 
Here, we consider a sufficient condition using a concrete value of $q_f$, which 
guarantees that the inequality (\ref{eq:A37}) makes sense. 
To this end, notice that the function 
\begin{eqnarray}
\sqrt{q_f^2 - \log q_f + \log \varepsilon_3\frac{e^2 -1 }{2e^2}} -\sqrt{\frac{3}{2}}
\end{eqnarray}
monotonically increases for $q_f > 1/\sqrt{2}$, and it takes a positive value 
($\approx 0.03420\cdots$). 
Therefore, it is sufficient to impose the following conditions, 
\begin{eqnarray}
q_f > \frac{3}{\sqrt{2}}, 
\end{eqnarray}
and
\begin{eqnarray}
q_f + \sqrt{\frac{3}{2}} < q_b <  q_f + \sqrt{q_f^2 - \log q_f + \log \varepsilon_3\frac{e^2 -1 }{2e^2}}. 
\end{eqnarray}
It should be noted that the assumption (\ref{eq:suf_out_g}) is no more necessary 
since 
\begin{eqnarray}
q_f + \sqrt{\frac{3}{2}} > \frac{1}{2q_f}
\end{eqnarray}
holds for $q_f > 1/\sqrt{2}$. 
%%%%%%%%%%%%%%%%%%%%%%%%%%%%%%%%%%%%%
\begin{figure}[H]
\centering
\includegraphics[width = 10.0cm,bb = 0 0 461 346]{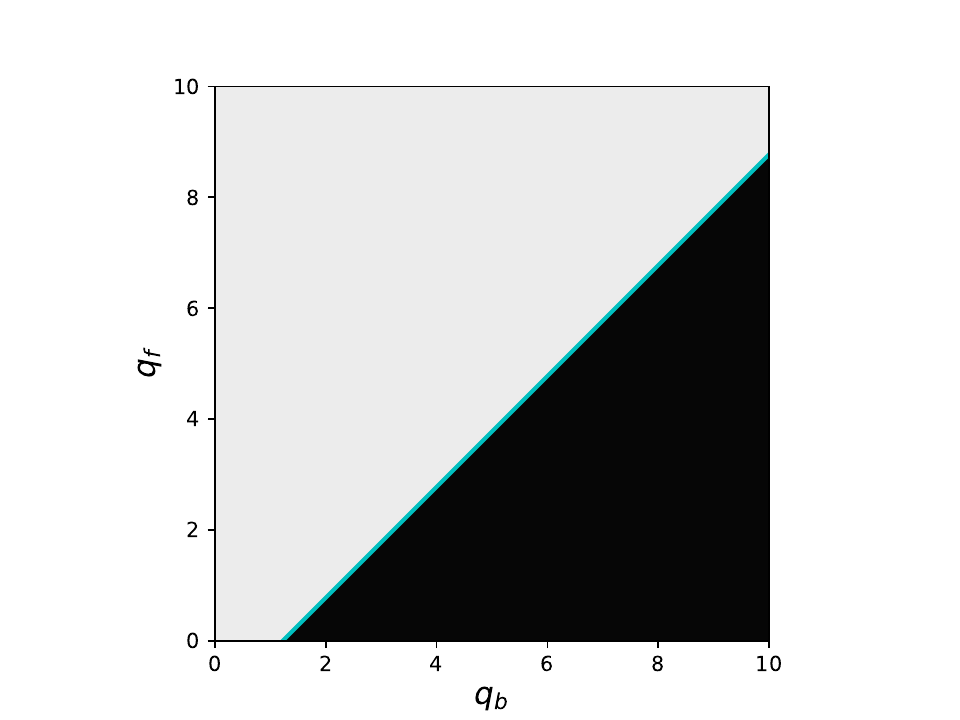}
	\caption{
	The region satisfying (black) and not satisfying (gray) the condition (\ref{app:G_assum}) with its boundary (cyan).  
	\label{conA_G}}
\end{figure} 
%%%%%%%%%%%%%%%%%%%%%%%%%%%%%%%%%%%%
%%%%%%%%%%%%%%%%%%%%%%%%%%%%%%%%%%%%
\begin{figure}[H]
\centering
	\includegraphics[width = 10.0cm,bb = 0 0 461 346]{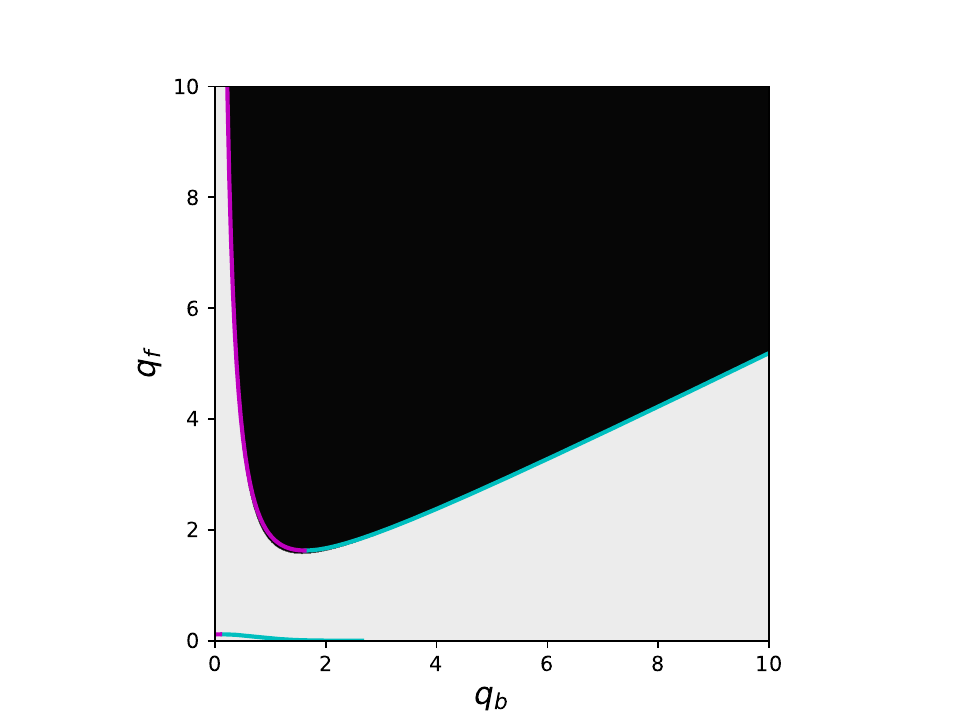}
	\caption{
		The region satisfying (black) and not satisfying (gray) the condition (\ref{eq:Appendix_conditionC})  with the boundaries (cyan and magenta) 
		for the sufficient condition (\ref{eq:Appendix_conditionE}).  
\label{conC_G}}
\end{figure} 
%%%%%%%%%%%%%%%%%%%%%%%%%%%%%%%%%%%%%%
%%%%%%%%%%%%%%%%%%%%%%%%%%%%%%%%%%%%
\begin{figure}[H]
\centering
\includegraphics[width = 10.0cm,bb = 0 0 790 592]{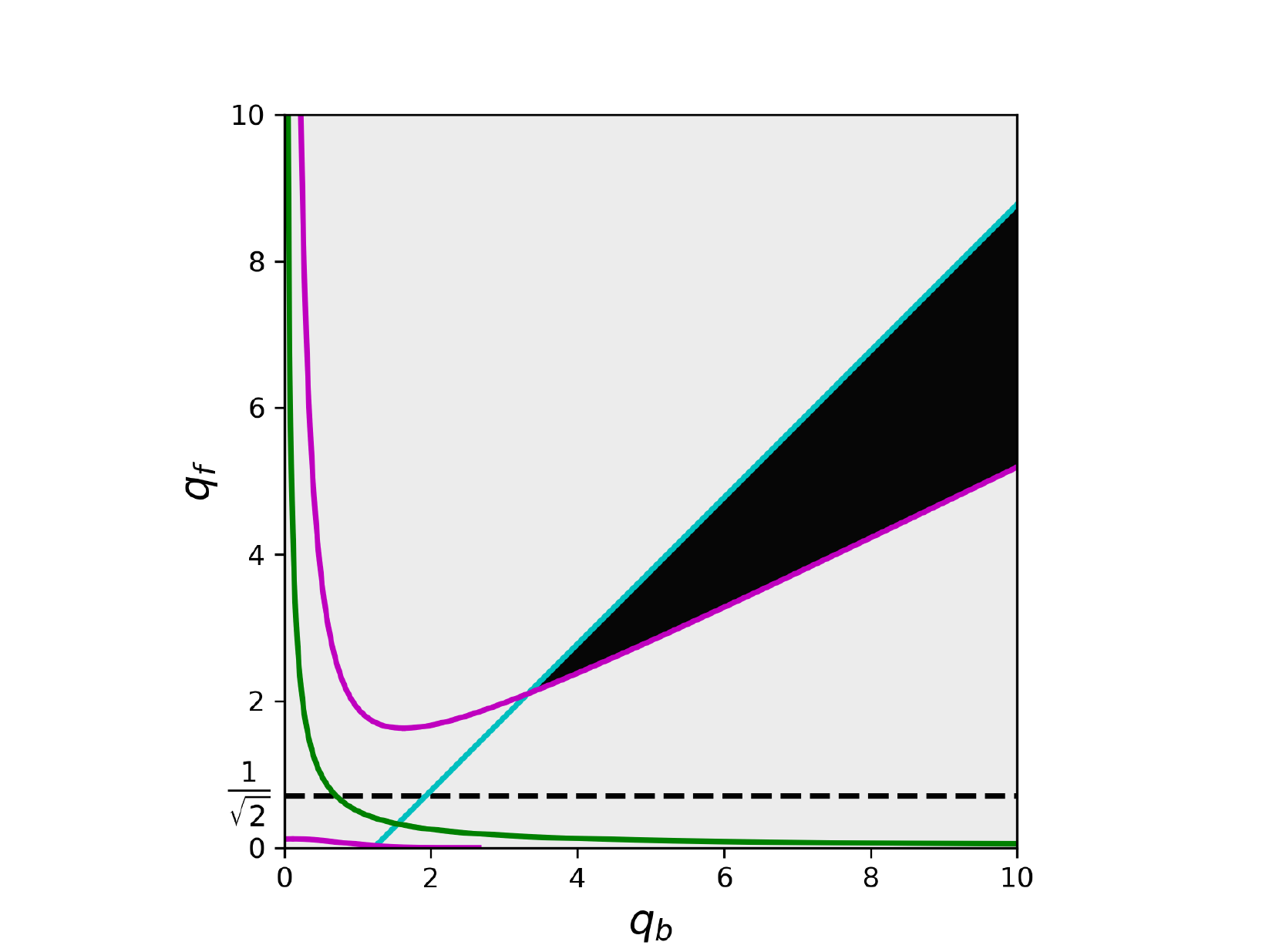}
\caption{
	The region satisfying (black) and not satisfying (gray) the conditions (\ref{eq:suf_out_g}),(\ref{eq:horseshoe_condition2}), (\ref{app:G_assum}) and (\ref{eq:Appendix_conditionC}).
	with the boundaries (colored lines) for the (sufficient) conditions.  
\label{con_all_G}}
\end{figure} 

\subsection{Redundancy of the conditions for $\kappa$}
\label{app:Gaussian_kappa}
We here seek the strongest condition among the six conditions 
given in (\ref{kappa_cond_G}) by putting a set of concrete values $(q_b,q_f)$ and $\mu$
that satisfy a set of conditions derived above.
The list of the conditions((\ref{def:kappa1}),(\ref{def:kappa2}),(\ref{def:kappa3}),(\ref{cond:kappa_4}),(\ref{def:kappa_5}) and (\ref{def:kappa_6})) is as follows:
\begin{eqnarray}
\k \geq \k_1 = \frac{4 q_f} { \displaystyle  2M_1 - \varepsilon }, \\
\k > \k_2 =\frac{ 4 q_f}{\displaystyle \frac{1}{2 M_2 }-  \varepsilon },\\
\kappa > \k_3 =\frac{\displaystyle 2 + \tilde{\mu}}{\displaystyle - 2 \left(v'\left(\left(\frac{1}{2M_2}\right)^2\right)+\frac{1}{2M_2^2}v''\left(\left(\frac{1}{2M_2}\right)^2\right)\right) -4\varepsilon M_1},\\
\kappa > \k_4 =\frac{\displaystyle 2 q_f }{\displaystyle -M_2 {q_1}^2 + q_1 -\frac{1}{2} \varepsilon},\\
 \kappa > \k_5 =\frac{\displaystyle \tilde{\mu}-2 + \frac{12aq_f}{M_2} v''(1/M_2^2)
}{\displaystyle \left( \frac{2a}{M_2^2} -2b -\frac{3a}{M_2}\varepsilon \right)v''(1/M_2^2)
},\\
\kappa > \k_6 =\frac{\displaystyle \tilde{\mu} -2 }{\displaystyle 2(aq_f^2 - b)v''(q_f^2)} .
\end{eqnarray}
As easily verified, the parameter values $(q_b,q_f)=(9/2,3)$ and $\mu = 1/2$ 
satisfy the conditions (\ref{eq:horseshoe_condition2}), (\ref{app:G_assum}), (\ref{eq:suf_out_g}) and (\ref{eq:Appendix_conditionC}) (see figure A3). 
Putting these values, we find 
\begin{eqnarray*}
\fl
(\kappa_1,\kappa_2,\kappa_3,\kappa_4 ,\kappa_5,\kappa_6) = (14.1\cdots,20.8\cdots,10.1\cdots, 21.8\cdots,25.3\cdots
,119.\cdots). 
\end{eqnarray*}
Hence the condition $\kappa > \kappa_6$ turns out to be the strongest. 
Note that this condition comes from the one associated with form of the potential 
function in the vicinity of fixed points.

\section{Conditions for the Lorentzian function}
\label
{app:Lorentzian}
\subsection{Some properties of $V'(q)$}
\label{app:Lorentzian_out}

In this Appendix, 
we show that the derivative $V'(q)$ has zeros only at $q = \pm q_f$ and $q=0$, 
and that $V'(q) < 0$ holds for $q > q_f$. 

An explicit form of the derivative $V'(q)$ is 
\begin{eqnarray}
\frac{2\kappa q}{(1+(q-q_b)^2)(1+(q-q_b)^2)} \left( \rho(q,q_b)-\rho(q_f,q_b)\right), 
\end{eqnarray}
where
\begin{eqnarray}
\rho(q,q_b) := \frac{(1+(q-q_b)^2)(1+(q-q_b)^2)}{(1+q^2)^2}. 
\end{eqnarray}
From this expression, it is clear to see that $q=\pm q_f$ and $q=0$ are zeros 
of $V'(q)$.  Here we show that these are only zeros of $V'(q)$. 
To this end, consider the derivative of $\rho(q,q_b)$, 
\begin{eqnarray}
\frac{d\rho(q,q_b)}{dq} = \frac{4q_b^2 q (q^2 - q_b^2 -3)}{(1+q^2)^3}, 
\end{eqnarray}
which takes a local maximal value $(1 + q_b^2)^2$ at  $q = 0$, 
and local minimum values $4/(q_b^2 + 4)$ at $q = \pm \sqrt{q_b^2 + 3}$, and it behaves 
asymptotically as, 
\begin{eqnarray}
\label{app:rho_asympt}
%\lim_{q \rightarrow \pm \infty}  \frac{(1+(q-q_b)^2)(1+(q-q_b)^2)}{(1+q^2)^2} = 1
\lim_{q \rightarrow \pm \infty} \rho(q,q_b) = 1. 
\end{eqnarray}
From the expression of $\rho(q,q_b)$, we can show 
that the condition 
\begin{equation}
\label{app:rho_inequality}
1 < \rho(q_f,q_b)< (1 + q_b^2)^2
\end{equation}
is necessary and sufficient in order that $q=\pm q_f$ are only zeros of 
the function $\rho(q,q_b)-\rho(q_f,q_b)$. 
The left-hand inequality follows from the condition (\ref{eq:condout}), and 
the right inequality holds for any $q_b$ and $q_f$. 
Thus, we can say that $q=\pm q_f$ and $q=0$ are the only zeros 
of $V'(q)$. 

Since $q_f < \sqrt{q_b^2 + 3}$ trivially follows from the condition (\ref{eq:condout}),
it turns out that $\rho(q,q_b)$ monotonically decreases for $q_f < q<\sqrt{q_b^2 + 3}$ and 
monotonically increases for $\sqrt{q_b^2 + 3} < q$, and furthermore, it does not have any extrema. 
Together with (\ref{app:rho_asympt}) and (\ref{app:rho_inequality}), 
we find
\begin{eqnarray}
\rho(q,q_b) - \rho(q_f,q_b) < 0,  \quad {\rm for} \quad q_f < q, 
\end{eqnarray}
which leads to 
\begin{eqnarray}
V'(q) < 0 \quad {\rm for} \quad q_f < q. 
\end{eqnarray}
%%%%%%%%%%%%%%%%%%%%%%%%%%%%%%%%%%%%
\begin{figure}[H]
\centering
\includegraphics[width = 10.0cm,bb = 0 0 461 346]{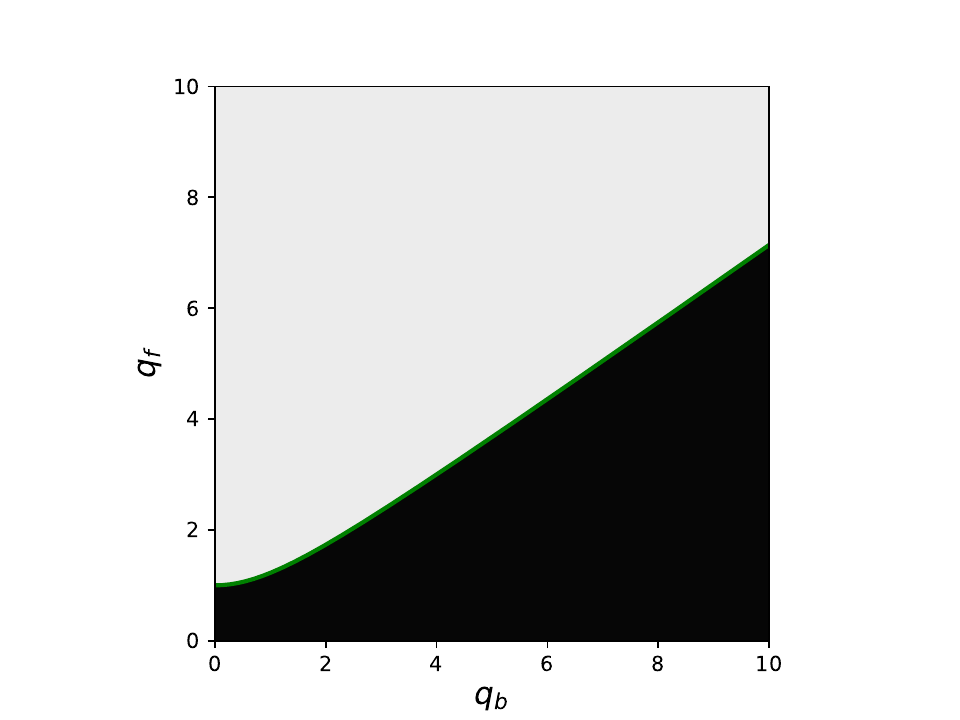}
		\caption{The region satisfying (black) and not satisfying (gray) the condition (\ref{eq:condout}) with its boundary.  
\label{conD_L}}
\end{figure} 
%%%%%%%%%%%%%%%%%%%%%%%%%%%%%%%%%%%%

\subsection{The proof of Lemma \ref{lmm:extrema}}
\label{app:Lorentzian_mild}

We here prove that 
$F(q) -q_f < q_f$ or equivalently $F(q) - 2q_f < 0$ for $-q_f<q<0$, 
and $F(q)$ has a unique local maximum 
in $0<q<q_f$ for the Lorentzian potential case. 

First we rewrite $F^{(L)}(q) - q_f$ as 
\begin{eqnarray}
\fl
F^{(L)}(q)-q_f &=& -2q - q_f + \kappa \frac{2q}{(1+ q^2)^2} - \k \varepsilon \left( \frac{1}{1+(q-q_b)^2} -\frac{1}{1+(q+q_b)^2} \right) \nonumber \\
\fl
&=& -2q - q_f + \kappa \frac{2q}{(1+ q^2)^2} \widetilde{F^{(L)}}(q), 
\end{eqnarray}
where 
\begin{eqnarray}
\fl
\widetilde{F^{(L)}}(q) := 1 - \frac{\varepsilon }{2} \frac{(1+ q^2)^2}{q}    \left( \frac{1}{1+(q-q_b)^2} -\frac{1}{1+(q+q_b)^2} \right)  \nonumber 
\end{eqnarray}
Here for the Lorentzian case, 
\begin{eqnarray}
\varepsilon = \frac{(1+(q_f-q_b)^2)(1+(q_f+q_b)^2)}{2q_b(1+q_f^2)^2}. 
\end{eqnarray}
Since 
\begin{eqnarray}
\fl
\widetilde{F^{(L)}}'(q) &=& \frac{(1 + (q_f-q_b)^2)(1 + (q_f+q_b)^2)}{(1 + {q_f}^2)^2} \frac{4q_b(1 + {q}^2)q(q^2 - {q_b}^2-3)}{(1 + (q-q_b)^2)^2(1 + (q+q_b)^2)^2} , 
\end{eqnarray}
we can find the solutions for $\widetilde{F^{(L)}}'(q) = 0$ as
\begin{eqnarray}
q = 0, \pm \sqrt{{q_b}^2 + 3}. 
\end{eqnarray}
If the inequality 
\begin{eqnarray}
\label{eq:mild_L}
q_f < \sqrt{{q_b}^2 + 3}
\end{eqnarray}
is satisfied, it is easy to show that 
$\widetilde{F^{(L)}}(q)$ monotonically increases for $-q_f < q< 0$. 
Together with the fact that $\widetilde{F^{(L)}}(-q_f) = 0$, we can say that 
$\widetilde{F^{(L)}}(q)> 0$ holds for $-q_f < q < 0$. 
This implies that 
$\displaystyle \kappa \frac{2q}{(1+ q^2)^2} \widetilde{F^{(L)}}(q) < 0$ holds 
and thus $F(q)-q_f < q_f$ follows for $-q_f < q < 0$. 

Next we will show that 
${F^{(L)}}(q)$ has a unique local maximum for $0 < q < q_f$ provided that 
the condition 
\begin{eqnarray}
\label{app:L_assum}
q_f <q_b - 1
\end{eqnarray}
is satisfied (see figure B2). 
To this end we consider the behavior of the first and second derivative 
\begin{eqnarray}
\fl
\quad \quad \quad
{F^{(L)}}'(q) = -2 + 2 \kappa \frac{1 - 3q^2}{(1+q^2)^3}   +2 \k \varepsilon \left( \frac{q-q_b}{(1+(q-q_b)^2)^2} - \frac{q+q_b}{(1+(q+q_b)^2)^2}   \right) \\
\fl
\quad \quad \quad
{F^{(L)}}''(q) = \kappa \frac{24q (q^2 - 1)}{(1+q^2)^4}   +2 \k \varepsilon \left( \frac{1 - 3(q-q_b)^2}{(1+(q-q_b)^2)^3} - \frac{1 - 3(q+q_b)^2}{(1+(q+q_b)^2)^3}   \right) . 
\end{eqnarray}
We first consider the second derivative ${F^{(G)}}''(q)$. 
Due to the condition (\ref{app:L_assum}), the term
\begin{eqnarray}
\frac{1 - 3(q-q_b)^2}{(1+(q-q_b)^2)^3}
\end{eqnarray}
monotonically decreases for $0 < q < q_f$.
In a similar way, the term
\begin{eqnarray}
- \frac{1 - 3(q+q_b)^2}{(1+(q+q_b)^2)^3} 
\end{eqnarray}
monotonically decreases for $0 < q $.
Combining these with the fact that $\k>0$ and $\varepsilon>0$, 
\begin{eqnarray}
\label{app:L_kpep}
2 \k \varepsilon \left( \frac{1 - 3(q-q_b)^2}{(1+(q-q_b)^2)^3} - \frac{1 - 3(q+q_b)^2}{(1+(q+q_b)^2)^3}   \right) 
\end{eqnarray}
decreases monotonically for $0 < q < q_f$.In addition, since (\ref{app:L_kpep}) is zero at $q = 0$
\begin{eqnarray}
\label{app:L_hill}
2 \k \varepsilon \left( \frac{1 - 3(q-q_b)^2}{(1+(q-q_b)^2)^3} - \frac{1 - 3(q+q_b)^2}{(1+(q+q_b)^2)^3}   \right) <0
\end{eqnarray}
holds for $0 < q < q_f$.
As for the second derivative ${F^{(L)}}''(q)$, 
we can show from the inequality (\ref{app:L_hill}) that
\begin{eqnarray}
{F^{(L)}}''(q) < \kappa \frac{24q (q^2 - 1)}{(1+q^2)^4} .
\end{eqnarray}
Since $ \kappa 24q (q^2 - 1)/(1+q^2)^4 <0$ for $0 < q < 1$, 
the following holds for $0 < q < 1$
\begin{eqnarray}
{F^{(L)}}''(q) < 0. 
\end{eqnarray}
This implies that ${F^{(L)}}'(q)$ monotonically decreases
for $0 < q < 1$. 
From (\ref{app:L_hill}), we can show that the terms
\begin{equation}
2 \k \varepsilon \left( \frac{(q-q_b)}{(1+(q-q_b)^2)^2} - \frac{(q+q_b)}{(1+(q+q_b)^2)^2}   \right) 
\end{equation}
monotonically decreases for $0<q<q_f$, and take a negative value at $q=0$, 
Thus we have 
\begin{equation}
2 \k \varepsilon \left( \frac{(q-q_b)}{(1+(q-q_b)^2)^2} - \frac{(q+q_b)}{(1+(q+q_b)^2)^2}   \right) <0
\end{equation}
which leads to 
\begin{equation}
{F^{(L)}}'(q) < 2\kappa  \frac{1 - 3q^2}{(1+q^2)^3}
\end{equation}
for $0<q<q_f$.
Now, the fact that $ 2\kappa  (1 - 3q^2)/(1+q^2)^3< 0$ holds for $\sqrt{1/3}< q$ 
together with (\ref{app:L_hill}) leads to 
the following inequality for $\sqrt{1/3}< q$, 
\begin{eqnarray}
\label{app:L_positive}
\quad {F^{(L)}}'(q) < 0.
\end{eqnarray} 
Hence the function $F^{(L)}(q)$ has at most a single 
local maximum for $0<q<q_f$.

\subsection{Conditions for parameters $q_b$ and $q_f$}
\label
{app:Lorentzian_qf_qb}

We here examine the conditions required for $q_b$ and $q_f$ in the Lorentzian potential case. 
First we recall the condition (\ref{app:L_assum}), which is 
necessary to prove the Lemma \ref{lmm:extrema} (see \ref{app:Lorentzian_mild}).
Next we examine the inequalities required in order to derive the conditions for 
$\kappa_i  ~(1 \le i \le 5)$ (see (\ref{eq:horseshoe_condition3}),(\ref{eq:positive_discriminant}),(\ref{cond_prop62}),(\ref{eq:ep4_epsilon}) and (\ref{eq:e5_epsilon})):
\begin{eqnarray}
\varepsilon < \varepsilon^{(L)}_{1} = \frac{3\sqrt{3}}{8} \approx 0.64951,\\
\varepsilon < \varepsilon^{(L)}_{2} = \frac{27}{{25\sqrt{5}}}\approx 0.48299,\\
\varepsilon < \varepsilon^{(L)}_{3} =\frac{8}{3\sqrt{3}} \frac{5^{10}}{\displaystyle \left( 5^5 + 3^6\right)^3}\left( 5^5 - 3^7\right) \approx 0.24636
,\\
\varepsilon < \varepsilon^{(L)}_{4} =-2M_2q_1^2 + 2q_1 =-\frac{25\sqrt{5}}{81}+\frac{2}{\sqrt{3}}\approx 0.46455
,\\
\varepsilon < \varepsilon^{(L)}_{5} = \frac{2}{3M_2} - \frac{2b}{3a}M_2 = \frac{36}{25\sqrt{5}}-\frac{{25\sqrt{5}}}{243}\approx 0.41393.
\end{eqnarray}
By comparing concrete values, we obtain 
\begin{eqnarray}
\varepsilon^{(L)}_{1} > 
\varepsilon^{(L)}_{2} > 
\varepsilon^{(L)}_{4} > 
\varepsilon^{(L)}_{5} > 
\varepsilon^{(L)}_{3} , 
\end{eqnarray}
and so finally conclude that the condition 
\begin{eqnarray}
\label{eq:Appendix_conditionC_Lorentzian}
\varepsilon(q_f,q_b) < \varepsilon^{(L)}_{3}
\end{eqnarray}
turns out to be the strongest one (see figure B3).
Now using an explicit expression for $\varepsilon^{(L)}_{3}$ for the Lorentzian potential case, 
we can write down the condition (\ref{eq:Appendix_conditionC_Lorentzian}) explicitly as 
\begin{eqnarray}
\frac{1}{2q_b} \left(\frac{(1+(q_f-q_b)^2)(1+(q_f+q_b)^2)}{(1+q_f^2)^2}\right) < \varepsilon^{(L)}_3, 
\end{eqnarray}
which is recast into 
\begin{eqnarray}
\label{eq:B34}
\frac{(1+(q_f-q_b)^2)(1+(q_f+q_b)^2)}{(1+q_f^2)^2} < 2\varepsilon^{(L)}_3 q_b .
\end{eqnarray}
For the left side, we can show that 
\begin{eqnarray}
1< \frac{(1+(q_f-q_b)^2)(1+(q_f+q_b)^2)}{(1+q_f^2)^2} 
\end{eqnarray}
holds. This is obtained first by rewriting as 
\begin{eqnarray}
\frac{(1+(q_f-q_b)^2)(1+(q_f+q_b)^2)}{(1+q_f^2)^2} 
=1 + \frac{q_b^2}{(1+q_f^2)^2}(q_b^2 +2 - 2q_f^2 ), 
\end{eqnarray}
then applying the condition (\ref{eq:condout}), which leads to 
\begin{eqnarray}
q_b^2 +2 - 2q_f^2  > q_b^2 +2 - 2\left( \frac{q_b^2}{2} + 1 \right) = 0. 
\end{eqnarray}
Therefore, in order to consider the condition (\ref{eq:Appendix_conditionC_Lorentzian}), 
the following should be satisfied.
\begin{eqnarray}
\label{eq:a38}
1 < 2\varepsilon^{(L)}_3 q_b. 
\end{eqnarray}

Now note that the condition (\ref{eq:B34}) is rewritten as
\begin{eqnarray}
\fl
~~~~~~~~
(2\varepsilon^{(L)}_3 q_b-1)q_f^4 +2(q_b^2 +2\varepsilon^{(L)}_3 q_b -1)q_f^2 - (q_b^4+q_b^2-2\varepsilon^{(L)}_3 q_b + 1)> 0 .
\end{eqnarray}
Considering that the condition (\ref{eq:a38}) and that $q_f>0$ holds, 
we can explicitly solve the above inequality for $q_f$ as 
\begin{eqnarray}
q_f >  \sqrt{ \frac{\displaystyle -q_b^2 +q_b\sqrt{2\varepsilon^{(L)}_3 q_b^3 + 8\varepsilon^{(L)}_3 q_b -4}}{2\varepsilon^{(L)}_3 q_b-1} - 1}. 
\end{eqnarray}

Recalling the already imposed conditions 
(\ref{eq:condout}), and (\ref{con_mild}), we summarize the conditions as, 
%$q_b,q_f$の条件(\ref{eq:condout}),(\ref{con_mild})と共にまとめると
\begin{eqnarray}
&&q_f >  \sqrt{ \frac{\displaystyle -q_b^2 +q_b\sqrt{2\varepsilon^{(L)}_3 q_b^3 + 8\varepsilon^{(L)}_3 q_b -4}}{2\varepsilon^{(L)}_3 q_b-1} - 1}, \\%\Leftrightarrow q_f > \sqrt{\frac{q_b^3 + 4q_b}{\displaystyle \sqrt{2\varepsilon_3q_b^3 + 8\varepsilon_3q_b -4} + q_b}- 1} \\
&&q_f < \sqrt{\frac{1}{2}q_b^2 + 1}, \\
&&q_f < q_b-1.
\end{eqnarray}

We finally check whether there actually exists 
the domain $(q_b,q_f)$ in which the above three inequalities are satisfied. 
To this end, we rewrite the right side of the first one as 
\begin{eqnarray}
\fl
 \sqrt{ \frac{\displaystyle -q_b^2 +q_b\sqrt{2\varepsilon^{(L)}_3 q_b^3 + 8\varepsilon^{(L)}_3 q_b -4}}{2\varepsilon^{(L)}_3 q_b-1} - 1}
= \sqrt{\frac{q_b^3 + 4q_b}{\displaystyle \sqrt{2\varepsilon^{(L)}_3 q_b^3 + 8\varepsilon^{(L)}_3 q_b -4} + q_b}- 1}, 
\end{eqnarray}
and compare this with the right-hand side of the second inequality. 
After some calculations, we reach the condition 
\begin{eqnarray}
(2\varepsilon^{(L)}_3 q_b-1)(q_b^2 + 4) > 0, 
\end{eqnarray}
in order to find a non-empty domain $(q_b,q_f)$ 
ensuring that the first and second inequalities both hold. 
However it is easy see that 
this is always satisfied as $q_b > 1/2\varepsilon^{(L)}_3 \approx 2.029$. 
As for the second and third inequalities, 
it is also easy to see that 
\begin{eqnarray}
\sqrt{\frac{1}{2}q_b^2 + 1} < q_b-1
\end{eqnarray}
holds for $q_b > 4$. 
Therefore, the conditions  we have to impose are summarized as 
\begin{eqnarray}
4< q_b , 
\end{eqnarray}
and 
\begin{eqnarray}
\sqrt{ \frac{\displaystyle -q_b^2 +q_b\sqrt{2\varepsilon^{(L)}_3 q_b^3 + 8\varepsilon^{(L)}_3 q_b -4}}{2\varepsilon^{(L)}_3 q_b-1} - 1}< q_f <\sqrt{\frac{1}{2}q_b^2 + 1} .
\end{eqnarray}

%%%%%%%%%%%%%%%%%%%%%%%%%%%%%%%%%%%%
\begin{figure}[H]

  \centering
\includegraphics[width = 10.0cm,bb = 0 0 461 346]{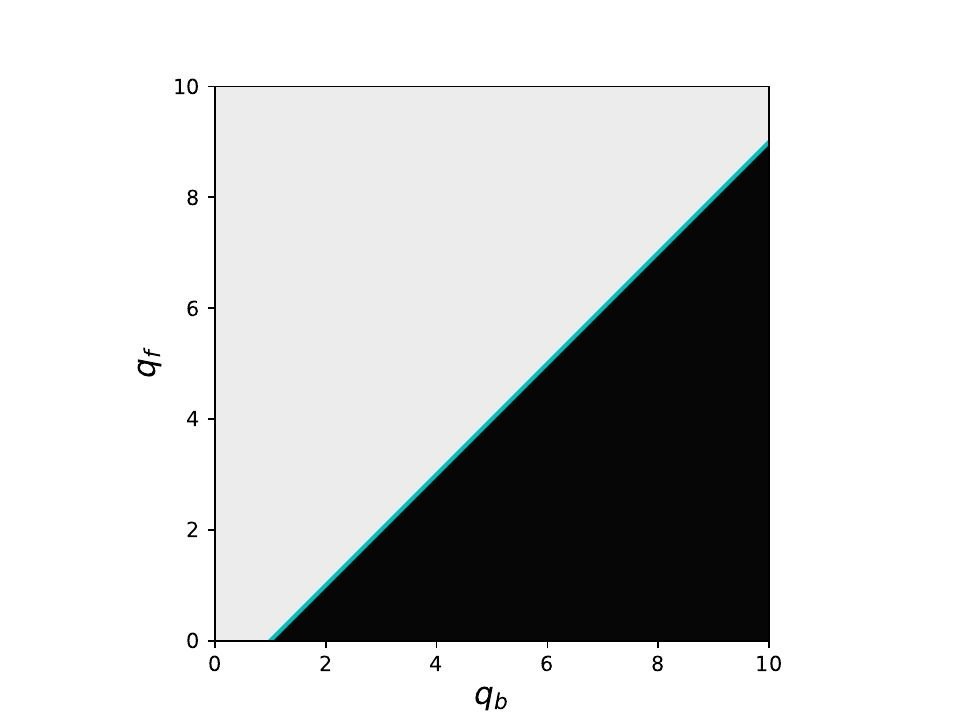}
	\caption{The region satisfying (black) and not satisfying (gray) the condition  (\ref{app:L_assum}) with its boundary (cyan). \label{cond_A_L}}
\end{figure} 
%%%%%%%%%%%%%%%%%%%%%%%%%%%%%%%%%%%%
%%%%%%%%%%%%%%%%%%%%%%%%%%%%%%%%%%%%
\begin{figure}[H]

  \centering
\includegraphics[width = 10.0cm,bb = 0 0 461 346]{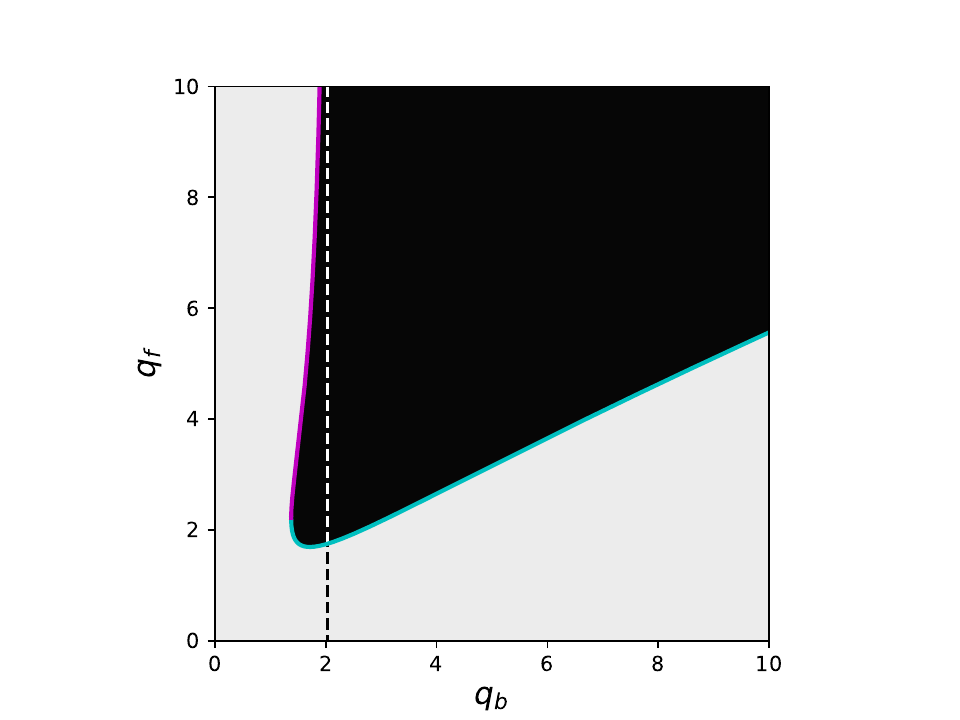}
	\caption{The region satisfying (black) and not satisfying (gray) the condition (\ref{eq:Appendix_conditionC_Lorentzian}) with the lower boundary for the condition (\ref{eq:Appendix_conditionC_Lorentzian}) (cyan). The dashed line shows $q_b = 1/2\varepsilon^{(L)}_3$}. \label{cond_B_L}
\end{figure} 
%%%%%%%%%%%%%%%%%%%%%%%%%%%%%%%%%%%%
%%%%%%%%%%%%%%%%%%%%%%%%%%%%%%%%%%%%
\begin{figure}[H]
        \centering
\includegraphics[width = 10.0cm,bb = 0 0 790 592]{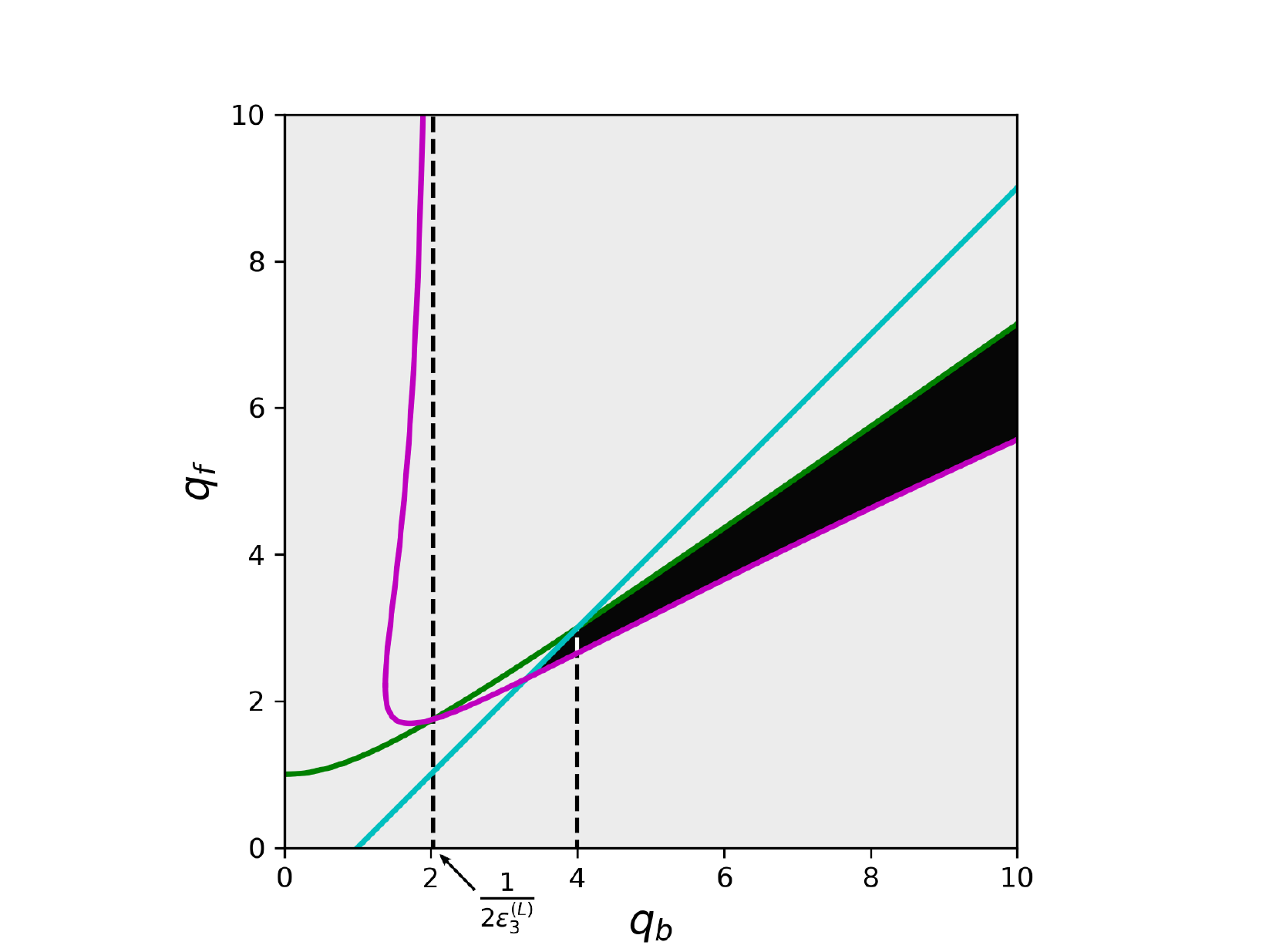}
	\caption{The region satisfying (black) and not satisfying (gray) the conditions (\ref{eq:condout}),(\ref{eq:horseshoe_condition2}) (\ref{app:L_assum}) and (\ref{eq:Appendix_conditionC_Lorentzian}) with boundaries for the  (sufficient) conditions (colored lines). \label{cond_all_L}}
\end{figure} 
%%%%%%%%%%%%%%%%%%%%%%%%%%%%%%%%%%%%
\subsection{Redundancy of the conditions for $\k$}
\label{app:Lorentzian_kappa}
We seek the strongest condition among the six conditions  
given in (\ref{kappa_cond_L}) by putting a set of concrete values $(q_b,q_f)$ and $\mu$
that satisfy a set of conditions derived above.
The list of the conditions is as follows:
\begin{eqnarray}
\k \geq \kappa^{(L)}_1 = \frac{4 q_f} { \displaystyle  2M_1 - \varepsilon },  \\
\k >\kappa^{(L)}_2 = \frac{ 4 q_f}{\displaystyle \frac{1}{2 M_2 }-  \varepsilon }, \\
\kappa >\kappa^{(L)}_3 =  \frac{\displaystyle 2 + \tilde{\mu}}{\displaystyle - 2 \left(v'\left(\left(\frac{1}{2M_2}\right)^2\right)+\frac{1}{2M_2^2}v''\left(\left(\frac{1}{2M_2}\right)^2\right)\right) -4\varepsilon M_1},\\
\kappa >\kappa^{(L)}_4 =  \frac{\displaystyle 2 q_f }{\displaystyle -M_2 {q_1}^2 + q_1 -\frac{1}{2} \varepsilon},\\
 \kappa > \kappa^{(L)}_5 = \frac{\displaystyle \tilde{\mu}-2 + \frac{12aq_f}{M_2}v''(1/M_2^2)
}{\displaystyle \left( \frac{2a}{M_2^2} -2b -\frac{3a}{M_2}\varepsilon \right)v''(1/M_2^2)
},\\
\kappa > \kappa^{(L)}_6 = \frac{\displaystyle \tilde{\mu} -2 }{\displaystyle 2(aq_f^2 - b)v''(q_f^2)}.
\end{eqnarray}
As easily verified, the parameter values $(q_b,q_f)=(9/2,3)$ and $\mu = 1/2$ 
satisfy the conditions (\ref{eq:horseshoe_condition2}) (\ref{app:L_assum}),(\ref{eq:condout}) and (\ref{eq:Appendix_conditionC_Lorentzian}) (see figure B4). 
Putting these values, we have 
\begin{eqnarray*}
\fl
(\kappa_1,\kappa_2,\kappa_3,\kappa_4 ,\kappa_5,\kappa_6) = (27.1\cdots,43.4\cdots,87.4\cdots, 46.5\cdots,58.4\cdots %58.5\cdots
,9.61\cdots). 
\end{eqnarray*}
Hence the condition $\kappa > \kappa_3$ turns out to be the strongest. 
Note that, not like the Gaussian case, this condition does not necessarily 
come from the one associated with form of the potential 
function in the vicinity of fixed points. 

%%%%%%%%%%%%%%%%%%%%%%%%%%%%%%%%%%%%
%%%%%%%%%%%%%%%%%%%%%%%%%%%%%%%%%%%%
%%%%%%%%%%%%%%%%%%%%%%%%%%%%%%%%%%%%
%%%%%%%%%%%%%%%%%%%%%%%%%%%%%%%%%%%%

\section*{References}

\end{document}